\newtheorem{fact}{Fact}
\newcommand{\beq}{\begin{equation}}
\newcommand{\enq}{\end{equation}}
\newcommand{\bel}{\begin{lemma}}
\newcommand{\enl}{\end{lemma}}
\newcommand{\bet}{\begin{theorem}}
\newcommand{\ent}{\end{theorem}}
\newcommand{\tr}{\mathrm{Tr}}
\newcommand{\Tr}{\mathrm{Tr}}
\newcommand{\ketbra}[1]{|#1\rangle\langle#1|}
\newcommand{\eps}{\varepsilon}
\newcommand{\mmod}[1]{\hspace{1mm}\ensuremath{(\text{mod }#1)}}
\newcommand{\brc}{\ensuremath{G}}
\newcommand{\brce}{\ensuremath{F}}
\newcommand{\qbit}{\ensuremath{Q}}
\newcommand*{\cH}{\mathcal{H}}
\newcommand*{\cD}{\mathcal{D}}
\newcommand*{\cN}{\mathcal{N}}
\newcommand*{\cS}{\mathcal{S}}
\newcommand*{\cX}{\mathcal{X}}
\newcommand*{\cE}{\mathcal{E}}
\newcommand{\cP}{\mathcal{P}}
\newcommand{\supp}{\mathrm{supp}}
\newcommand{\suppress}[1]{}
\newcommand{\defeq}{\ensuremath{ \stackrel{\mathrm{def}}{=} }}
\newcommand{\F}{\mathrm{F}}
\newcommand{\Pur}{\mathrm{P}}
\newcommand {\br} [1] {\ensuremath{ \left( #1 \right) }}
\newcommand {\minusspace} {\: \! \!}
\newcommand {\smallspace} {\: \!}
\newcommand {\fn} [2] {\ensuremath{ #1 \minusspace \br{ #2 } }}
\newcommand {\ball} [2] {\fn{\mathcal{B}^{#1}}{#2}}
\newcommand {\relent} [2] {\fn{\mathrm{D}}{#1 \middle\| #2}}
\newcommand {\dmax} [2] {\fn{\mathrm{D}_{\max}}{#1 \middle\| #2}}
\newcommand {\dmaxeps} [3] {\fn{\mathrm{D}_{\max}^{#3}}{#1 \middle\| #2}}
\newcommand {\mutinf} [2] {\fn{\mathrm{I}}{#1 \smallspace : \smallspace #2}}
\newcommand {\imax}{\ensuremath{\mathrm{I}_{\max}}}
\newcommand {\imaxeps}[3]{\ensuremath{\mathrm{I}^{#3}_{\max}(#1:#2)}}
\newcommand {\hmin} [2] {\fn{\mathrm{H}_{\min}}{#1 \middle | #2}}
\newcommand {\hmineps} [3] {\fn{\mathrm{H}^{#3}_{\min}}{#1 \middle | #2}}
\newcommand {\dheps} [3] {\ensuremath{\mathrm{D}_{\mathrm{H}}^{#3}\left(#1 \| #2\right)}}
\newcommand {\id} {\ensuremath{\mathrm{I}}}
\newcommand {\chnl}[1]{\ensuremath{\cN_{A\to B}{\left(#1\right)}}}
\newcommand {\wal}{\ensuremath{W^{alice}}}
\newcommand {\wbob}{\ensuremath{W^{bob}}}
\newcommand*{\cL}{\mathcal{L}}
\newcommand{\bra}[1]{\langle #1|}
\newcommand{\ket}[1]{|#1 \rangle}
\mathchardef\mhyphen="2D
\newcommand*{\rom}[1]{\expandafter\@slowromancap\romannumeral #1@}
\mathchardef\mhyphen="2D
\newtheorem{definition}{Definition}
\newtheorem{claim}{Claim}
\newtheorem{theorem}{Theorem}
\newtheorem{lemma}{Lemma}
\newtheorem{corollary}{Corollary}
\begin{document}
\title{Efficient methods for one-shot quantum communication}
\author{
Anurag Anshu\footnote{Institute for Quantum Computing, University of Waterloo and Perimeter Institute for Theoretical Physics. \texttt{aanshu@uwaterloo.ca}} \qquad
Rahul Jain\footnote{Centre for Quantum Technologies, National University of Singapore; MajuLab, UMI 3654, 
Singapore and VAJRA Adjunct Faculty, TIFR, Mumbai, India. \texttt{rahul@comp.nus.edu.sg}} \qquad 
} 

\date{}
\maketitle

\begin{abstract}
We address the question of efficient implementation of quantum protocols, with small communication and entanglement, and short depth circuit for encoding or decoding. We introduce two new methods to achieve this, the first method involving two new versions of the convex-split lemma that use much smaller amount of additional resource (in comparison to previous version) and the second method being inspired by the technique of classical correlated sampling in computer science literature. These lead to a series of new consequences, as follows.

First, we consider the task of quantum decoupling, where the aim is to apply an operation on a $n$-qubit register so as to make it independent of an inaccessible quantum system. Many previous works achieve decoupling with the aid of a random unitary. It is known that random unitaries can be replaced by random circuits of size $\mathcal{O}(n\log n)$ and depth $\text{poly}(\log n)$, or unitary $2$ designs based on Clifford circuits of similar size and depth. We show that given any choice of basis such as the computational basis, decoupling can be achieved by a unitary that takes basis vectors to basis vectors. Thus, the circuit acts in a `classical' manner and additionally uses $\mathcal{O}(n)$ catalytic qubits in maximally mixed quantum state. Our unitary performs addition and multiplication modulo a prime and hence achieves a circuit size of $\mathcal{O}(n\log n)$ and logarithmic depth. This shows that the circuit complexity of integer multiplication (modulo a prime) is lower bounded by the optimal circuit complexity of decoupling.

Next, we construct a new one-shot entanglement-assisted protocol for quantum channel coding that achieves near-optimal communication through a given channel. Furthermore, the number of qubits of pre-shared entanglement is exponentially smaller than that used in the previous protocol that was near-optimal in communication. We also achieve similar results for the one-shot quantum state redistribution. 
\end{abstract}

\section{Introduction}

It is hard to overstate the power of communication in today's society, which enjoys the benefits of technological advances due to telecommunication and the internet. These advances are a result of \textit{reliable} and \textit{efficient} classical communication protocols, which have been facilitated by decades of studies on data compression, error correction and physics of data transmission. As our technologies enter the quantum age, we have similarly started facing the question of how to make \textit{quantum communication} reliable and efficient. Quantum communication is central to the important tasks of quantum key distribution \cite{BennettB14, Ekert91}, the transfer of quantum states \cite{CiracZKM97} and the design of large scale quantum computers \cite{BrownKM16, MonroeRRBMDK14}.  While the proposals and experimental implementations of quantum communication have made great strides in recent years \cite{AzumaTL15, AzumaTM15, DuanLCZ01, Kimble2008, Ma2012, Liao2017}, the range of communication is still limited to about a few hundred kilometers \cite{PirandaloB2016, Ma2012, Liao2017} in ground-based experiments. Some of the key challenges are the probabilistic nature (as well as decoherence) in optics-based models \cite{PirandaloB2016, Ma2012, Takeoka2014, Pirandola2015} and fast decoherence in matter-based models \cite{PirandaloB2016, BurkardKD04}. This strongly motivates the problem of finding quantum protocols that efficiently achieve certain tasks with small communication or fight noise to reliably communicate a given amount of message.

The efficiency of a quantum communication protocol is typically captured by two quantities: the number of qubits communicated and the amount of additional resource, such as quantum entanglement, needed in the protocol. Since the foundational works of Holevo, Schumacher and Westmoreland \cite{Schumacher95, SchuW97, Holevo98}, great progress has been made in the understanding of optimal amount of communication and additional resources needed in a large family of quantum communication tasks. Well known results on quantum channel coding \cite{Holevo98, SchuW97, lloyd97, Shor02, BennettSST02, Devetak05private, HaydenHWY08}, quantum source coding \cite{Schumacher95}, quantum state merging \cite{HorodeckiOW05, HorodeckiOW07} and quantum state redistribution \cite{Devatakyard, YardD09} have discovered a powerful collection of tools for quantum information processing. These tools have found applications in disciplines beyond quantum communication, such as quantum thermodynamics \cite{LindenPSW09, RioARDV11} and black hole physics \cite{Page93, HaydenP07}. One such tool that takes a central stage in our work is that of quantum decoupling.

Notably, aforementioned works in quantum information theory are set in the asymptotic and i.i.d. (independent and identically distributed) framework of Shannon \cite{Shannon}, which allows the protocol to run over many independent instances of the input system. In practice, however, one typically does not have an access to such independent instances, limiting the scope of these results. The field of one-shot information theory addresses this problem, by constructing protocols that run on one instance of the input system. This leads to a generalization of the asymptotic and i.i.d. theory and brings information processing tasks to a more practical domain. 

However, unlike the asymptotic and i.i.d. theory of quantum information, the understanding of optimal communication and additional resources is still lacking in one-shot quantum information theory. Even for the very basic task of entanglement-assisted quantum channel coding \cite{BennettSST02}, state-of-the-art \cite{DattaH13, DattaTW2016, AnshuJW17CC} one-shot protocols fail to simultaneously achieve optimal communication capacity and optimal amount of initial entanglement. The aim of this work is to introduce new methods that make progress in this problem and exponentially improve upon the amount of initial entanglement needed in a family of one-shot protocols that achieve the best known communication for above tasks. In many cases, the resulting protocols have the additional property that either the encoding or the decoding operation is a quantum circuit of small depth. 

In order to lay the groundwork for our results, we revisit the existing techniques of decoupling and more recent convex-split and position-based decoding. Decoupling (see Figure \ref{decoupling}) refers to the process of applying some quantum operation on one of the two given systems (which share quantum correlation), so as to make the two systems independent of each other. This idea has been applied in the aforementioned tasks of quantum state merging \cite{HorodeckiOW05, HorodeckiOW07, ADHW09, Berta09, Renner11}, quantum state redistribution \cite{Devatakyard, YardD09, DattaHO16, BertaCT16} and quantum channel coding \cite{Devetak05private, DupuisHL10, DattaH13, DattaTW2016}, as well as randomness extraction \cite{Renner05, Berta13, BertaFW14}. The central approach in many of these works is to perform a random unitary operation \cite{HorodeckiOW05, HorodeckiOW07} and then discard a part of the system. This technique has been expanded upon in various works such as \cite{Frederic10, Szehr11, DupuisBWR14}. Due to the importance of decoupling technique and the limitation that random unitaries cannot be implemented with a quantum circuit of small size, there is a great interest in finding efficient circuits that achieve the same performance as a random unitary. 

Existing methods to make decoupling efficient involve replacing random unitaries with unitary 2-designs \cite{DankertCEL09, DivincenzoLT02, Chau05, CleveLLC16} which can be simulated by Clifford circuits of small depth, random quantum circuits of small depth \cite{BrownF15} and random unitaries diagonal in Pauli-$\mathsf{X}$ and Pauli-$\mathsf{Z}$ basis \cite{NakataHMW17}. To elaborate, suppose we are given a quantum state $\Psi_{RC}$ on two registers $R$ and $C$, and we need to make $C$ independent of $R$ by acting on $C$. We must further ensure that the size of the discarded system, which is the cost of the decoupling operation (see Figure \ref{decoupling}), is small enough \footnote{The number of qubits of the discarded system translates to the quantum communication cost of a quantum protocol that employs decoupling. This motivates the question of minimizing the size of discarded system.}, ruling out the operation that discards all of $C$.  The work \cite{CleveLLC16} shows that a quantum circuit of size $\mathcal{O}(\log|C|\log\log|C|)$ and depth $\mathcal{O}(\log\log|C|)$ suffices for this purpose, achieving the same cost as that of a random unitary. A similar circuit size of $\mathcal{O}(\log|C|\log^2\log|C|)$ and depth $\mathcal{O}(\log^3\log|C|)$ is obtained in \cite{BrownF15}, using elementary gates that mimick real world quantum processes. 
 
While the circuit size achieved by above results is impressive, the gates used in the circuit are highly quantum. More precisely, for a choice of preferred basis such as the computational basis, the gates convert any basis vector into a superposition over these vectors. Can the construction of a decoupling operation be further simplified, by only using the gates that are classical (taking basis vectors to basis vectors)? While being useful for practical implementation, such a construction would also lead to a surprising theoretical simplification: it would leave no conceptual difference between quantum decoupling and its classical counterpart of randomness extraction \cite{NisanZ96, RadhakrishnanT00, Trevisan01}.

\begin{figure}[!h]
\center
\includegraphics[width=10cm]{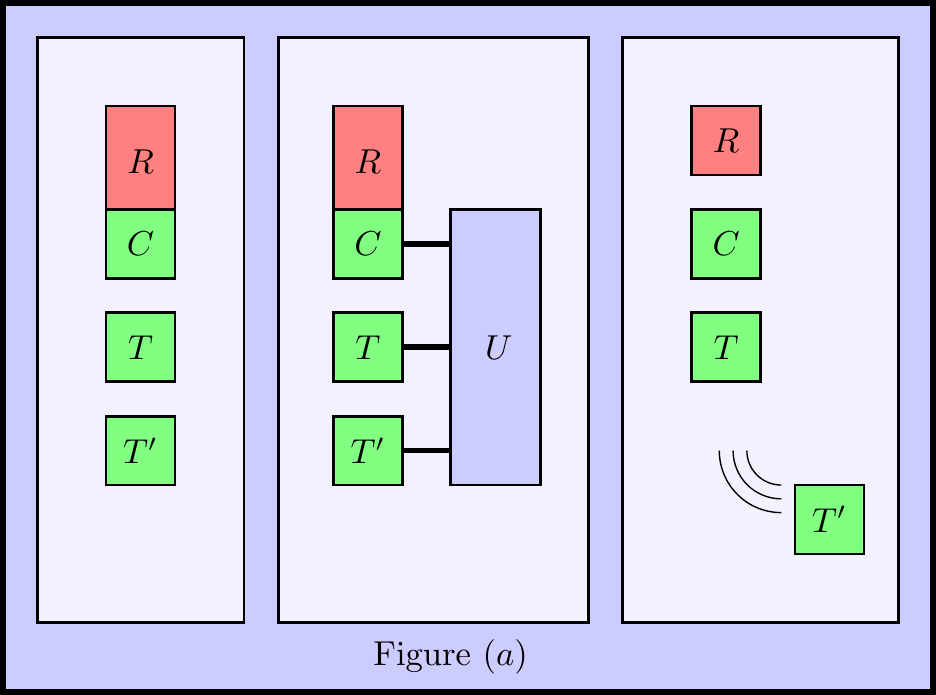}
\\
\includegraphics[width=10cm]{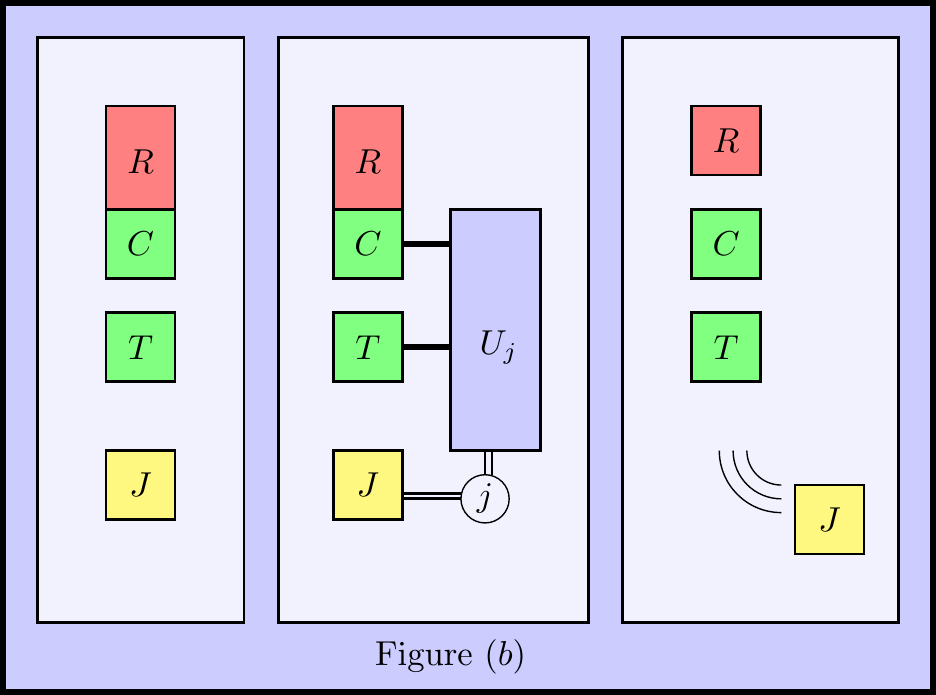}
\caption{{\small Decoupling method refers to removing the quantum correlation between two registers $R$ and $C$, by means of quantum operations. The cost of performing a decoupling operation is characterized by the size of the register that must be discarded, in order to implement the operation. In $a)$, the discarded register is $T'$ and the operation performed on $CTT'$ is a global unitary $U$. In $b)$, the register $J$ (that is eventually discarded) is maximally mixed to begin with and the operation performed is a controlled unitary. Thus, $J$ can be viewed as a classical noise \cite{GroismanPW05}. While the operation in $b)$ is a special kind of operation in $a)$, the following equivalence holds due to the duality between teleportation \cite{Teleportation93} and superdense coding \cite{BennettW92}. For every operation in $a)$ with $\log|T'|$ qubits that are discarded, there is an operation in $b)$ with $2\log|T'|$ bits of noise. Moreover, for every operation in $b)$ with $\log|J|$ bits of noise, there is an operation in $a)$ where $\frac{1}{2}\log|J|$ qubits that are discarded.}}
\label{decoupling}
\end{figure}

Random permutation is a canonical classical operation known to perform randomness extraction and also decouple classical-quantum systems \cite{Renner05, Berta13, BertaFW14}. In \cite{DupuisDT14} (see also \cite{Szehr11}) the authors used permutations to derive an analogue of the decoupling theorem that however only removes quantum and not classical correlations between $R$ and $C$. While the remaining classical correlation could also be removed by random permutations, the overall cost of decoupling would be larger than the cost of decoupling by a random unitary. This indicates that a decoupling method, which matches the random unitary decoupling in its cost, can only involve operations that are not classical. 

This is shown not to be true by the convex-split lemma \cite{AnshuDJ14}, which expresses a relation of the following form
\begin{equation}
\label{convsplit}
\Phi_{RCE} \approx \sum_i p_i \Phi^{(i)}_{RCE},
\end{equation} 
showing how to view a given quantum state $\Phi_{RCE}$ as a convex combination of (more desirable) quantum states $\Phi^{(i)}_{RCE}$ in order to achieve an information-theoretic task. It implies decoupling (of the type in Figure \ref{decoupling}, $(b)$) when the quantum state on the left hand side (that is, $\Phi_{RCE}$) is a product state across $R$ and $CE$. In particular, it was shown in \cite{AnshuDJ14} that given $\Psi_{RC}$, if we add the quantum state $\sigma_{C_1}\otimes \ldots \sigma_{C_N}$ (for some large enough $N$) and randomly swap the register $C$ with one of the registers $C_1, \ldots C_N$, then the register $R$ becomes independent of all the other registers \footnote{Expressed mathematically via Equation \ref{convsplit}, we set $E=C_1 C_2\ldots C_N$, $\Phi_{RCE}= \Psi_R\otimes \sigma_C\otimes \sigma_{C_1}\otimes \ldots \sigma_{C_N}$, $\Phi^{(i)}_{RCE}=\Psi_{RC_i}\otimes \sigma_C\otimes \sigma_{C_1}\otimes\ldots \sigma_{C_{i-1}}\otimes \sigma_{C_{i+1}}\otimes\ldots \sigma_{C_N}$ and $p_i=\frac{1}{N}$.}; leading to decoupling with the classical operation of permutation of registers. In this work we will solely be interested in quantum tasks where decoupling is the same as constructing an appropriate convex-split, and hence we will use the two terms interchangeably. However, we highlight that the convex-split method is more general and can be used even in situations where no decoupling exists: such as in classical or classical-quantum communication tasks \cite{AnshuJW17CC, AnshuJW17MC, AGHY16} and resource theoretic tasks \cite{AnshuJH18, BertaM18, LiuW19}. 

Since the process of swapping two registers is a `classical' operation (that is, it takes basis vectors to basis vectors), the convex-split lemma of \cite{AnshuDJ14} gives a classical unitary for performing quantum decoupling. Unfortunately, the value of $N$ can be as large as $\mathcal{O}(|C|)$, where $|C|$ is the dimension of the register $C$. Hence swapping the register $C$ with a random register $C_i$ requires a circuit of depth $\mathcal{O}(|C|)$, which is exponential in the number of qubits of register $C$. Even an alternate implementation of swap operation, by placing the registers on a three dimensional grid, would require $\mathcal{O}(|C|^{1/3})$ operations. Thus, it has so far been unknown if one can achieve quantum decoupling by efficient classical operations. 

Recent works have shown several applications of the convex-split method in one-shot quantum information theory, along with the dual method of position-based decoding \cite{AnshuJW17CC}. The methods have been used to obtain near-optimal communication for one-shot entanglement-assisted quantum channel coding \cite{AnshuJW17CC}, near-optimal communication for one-shot quantum state splitting \cite{AnshuDJ14} (with slight improvement of the additive $\log\log|C|$ factor over \cite{Renner11}, for communicating the register $C$) and smallest known communication for one-shot quantum state redistribution \cite{AnshuJW17SR}. As mentioned earlier, all these protocols use a large amount of entanglement. Other known protocols, \cite{BennettSST02, DattaH13, DattaTW2016} for entanglement-assisted quantum channel coding and \cite{BertaCT16, DattaHO16} for quantum state redistribution, that do not rely on these two methods use exponentially small entanglement, but their communication is not known to be near-optimal. This motivates the question of find a scheme that achieves the best of both of the lines of work. 
\vspace{0.1in}

\section{Our results}

We show how to achieve near-optimal communication and the size of initial entanglement at most constant factors away from the optimal, in all the aforementioned quantum communication tasks. We further show that, in several cases, the implementation of either the encoding or the decoding operation in the protocol can be made efficient. Our results are obtained by two new methods that we outline below.   

\vspace{0.1in}

\noindent{\bf Efficient decoupling procedures (Method $A$):} As mentioned earlier, the quantity of interest in a decoupling procedure is the number of bits or qubits that are discarded to achieve the decoupling. There are two models under which decoupling is performed, see Figure \ref{decoupling}.  The first model involves adding a quantum state, applying a global unitary (without involving the register $R$) and then discarding some quantum system. The second model also involves adding a quantum state followed by a unitary, but the system that is discarded is classical and the unitary acts in a classical-quantum manner \cite{GroismanPW05}. The two models can be converted into each other by a Clifford circuit of depth $1$ and the number of qubits/bits discarded are the same up to a factor of $2$, due to the well known duality between teleportation \cite{Teleportation93} and super-dense coding \cite{BennettW92}. Additional quantum systems that are not discarded act as a catalyst for the decoupling process \cite{Renner11, AnshuDJ14, MajenzBDRC17, AnshuJH18, BertaM18}. For example, the randomness used in the process of decoupling via unitary $2$-design acts as a catalyst. In principle, this randomness can be fixed by standard derandomization arguments, but it leads to a loss in efficient implementation. 

 In this work, we consider the second model of decoupling. We construct two new convex-split lemmas which immediately lead to efficient decoupling procedures for a quantum state $\Psi_{RC}$ (recall the discussion following Equation \ref{convsplit}). One of these lemmas solves the aforementioned problem of decoupling via an efficient classical operation.
\begin{itemize}
\item {\bf Method $A.1$:} A set of unitaries $\{V_{\ell}\}_{\ell=1}^{|C|^2}$ on a register $C$ forms a $1$-design if 
$$\frac{1}{N}\sum_{\ell}V_{\ell}\rho_C V^{\dagger}_{\ell}= \frac{\id_C}{|C|}, \quad \forall \text{ quantum state } \rho_C.$$
A canonical example of unitary $1$-design is $\cP_{\log|C|}$, the set of the tensor products of Pauli $\mathsf{X}$ and $\mathsf{Z}$ operators if the register $C$ admits a qubit decomposition. Our first procedure shows how to achieve decoupling using a mixture of small number of $\approx \log|C| - \hmin{C}{R}_{\Psi}$ unitaries from any $1$-design. Here $\Psi_{RC}$ is the quantum state on registers $R$ and $C$ and $\hmin{C}{R}$ is the conditional min-entropy. The additional randomness used to choose the unitaries is $4\log|C|$ bits. We highlight that this is in stark contrast with many of the previous constructions for decoupling, which required unitaries from a $2$-design. Details appear in Subsection \ref{subsec:1design}.

\item {\bf Method $A.2$:} The second decoupling procedure enlarges the Hilbert space $\cH_C\otimes \cH_C$ in a manner that the resulting Hilbert space $\cH_{\brc}$ has prime dimension $|\brc|\leq 2|C|^2$. This is possible due to Bertrand's postulate \cite{Chebysev1852}, which says that there is a prime between any natural number and its twice. It also introduces a register $L$ of size approximately $N\defeq \log|C| - \hmin{C}{R}_\Psi$. A preferred basis on $\cH_C$ (such as the computational basis in the qubit representation of the registers) is chosen, which gives a basis $\{\ket{i}_G\}_{i=0}^{|G|-1}$ on $\cH_G$. Similarly, a preferred basis $\{\ket{\ell}\}_{\ell=1}^N$ is chosen on $\cH_L$. Following this, a unitary operation $U=\sum_{\ell=1}^NU_\ell\otimes \ketbra{\ell}_L$ is applied, where $U_\ell$ acts on two registers $\brc, \brc'\equiv \brc$ as 
\begin{equation}
\label{Uellunits}
U_\ell\ket{i}_{\brc}\ket{j}_{\brc'} = \ket{i+(j-i)\ell \mmod{|\brc|}}_{\brc}\ket{j+(j-i)\ell  \mmod{|\brc|}}_{\brc'}.
\end{equation}
Upon tracing out register $L$, register $R$ becomes independent of $\brc\brc'$. Furthermore, the final state on registers $\brc\brc'$ is maximally mixed and the register $\brc'$ is returned in the original state. As can be seen, the unitaries $U_\ell$ are `classical' as they take basis vectors to basis vectors and perform addition and multiplication modulo $|\brc|$. This makes the construction of $U$ efficient, with circuit depth $\mathcal{O}(\log\log|C|)$ and size $\mathcal{O}(\log|C|\log\log|C|)$ due to well known results in modular arithmetic \cite{McLaughlin04}. Details appear in Subsections \ref{subsec:classicalunit} (proof of decoupling) and \ref{unitimp} (circuit complexity).

In the other direction, our result shows that the reversible or quantum circuit complexity (such as depth or size) of integer multiplication modulo a prime is lower bounded by the reversible or quantum circuit complexity of the `best' decoupling method. This holds since integer multiplication is the most expensive step in Equation \ref{Uellunits}. We highlight that a super-linear lower bound on the circuit complexity of integer multiplication is an outstanding open question in the area of complexity theory \cite{SchonS71, Furer09}. The aforementioned connection to decoupling may suggest attacking this problem using an entirely different avenue connected to decoupling \cite{HaydenP07}: scrambling of quantum information in black holes \cite{LashkariSHOH13}.  
\end{itemize}

\vspace{0.1in}

\noindent{\bf Exponential improvement in entanglement (Method $B$) :} A \textit{flattening} procedure, that realizes any classical distribution as a marginal of a uniform distribution in a larger space, has been used in the context of classical correlated sampling in several works \cite{Broder97, Charikar2002, KleinbergT02, Holenstein2007, BarakHHRRS08, BravermanRao11, AnshuJW17classical}. A counterpart of this procedure for quantum states was considered in \cite{AJMSY16}. Let the eigendecomposition of $\sigma_C$ be $\sigma_C=\sum_i p_i \ketbra{i}_C$. Append a new register $E$ through the transformation
$$\ketbra{i}_C\rightarrow \ketbra{i}_C\otimes\left(\frac{1}{Kp_i}\sum_{j=1}^{Kp_i}\ketbra{j}_E\right),$$
where $K$ is a large enough real such that $\{Kp_i\}_i$ are all integers \footnote{The existence of such a $K$ can be ensured, for example, by an arbitrarily small perturbation in $\{p_i\}_i$, so that they all are rationals.}. As a result, the quantum state $\sigma_C$ transforms to  
\begin{equation}
\label{flatext}
\sigma_C\rightarrow \frac{1}{K}\sum_{i,j: j\leq Kp_i} \ketbra{i}_C\otimes \ketbra{j}_E,
\end{equation}
which is uniform in a subspace. However, \cite{AJMSY16} did not provide a unitary operation to realize the above extension of $\sigma_C$. We show that this extension can be constructed in a unitary manner using embezzling states \cite{DamH03}. If the basis $\{\ket{i}\}_i$ can be efficiently prepared from computational basis and the eigenvalues $\{p_i\}_i$ are easy to compute, then the flattening procedure is also computationally efficient. Details appear in Section \ref{sec:maxmutdec}. The consequences of this method are as follows, with all the tasks appearing below summarized in Figure \ref{qcomtasks}.

\begin{figure}[!h]
\center
\includegraphics[width=12cm]{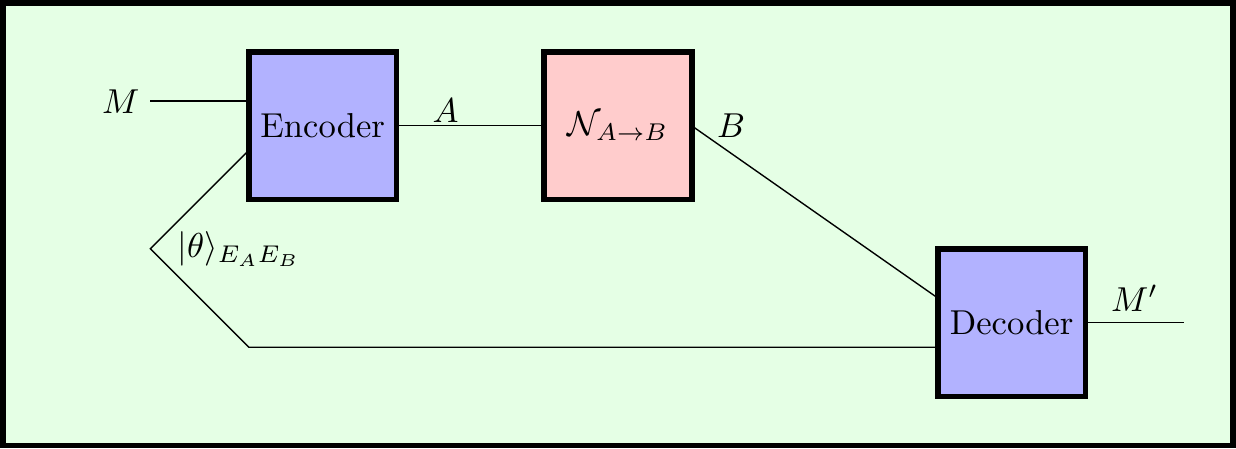}
\\
\includegraphics[width=12cm]{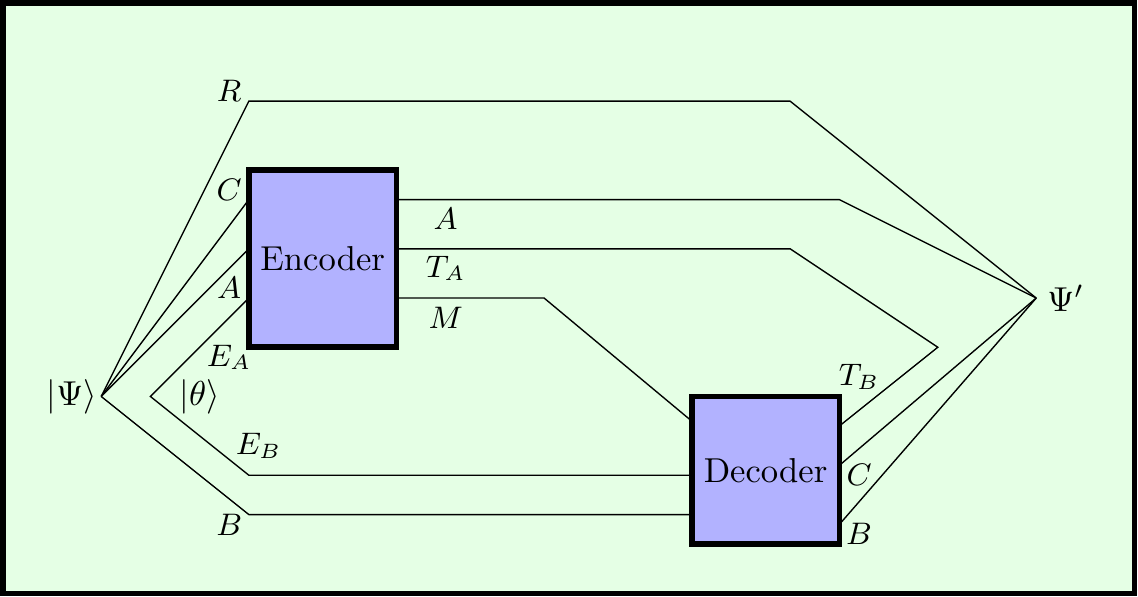}
\caption{{\small The first figure depicts the task of entanglement-assisted quantum channel coding, where the register $M$ holds a message $m\in \{1,2, \ldots 2^R\}$. The goal is to maximize the value of $R$, while keeping the error in decoding small. The second figure shows the task of quantum state redistribution with entanglement assistance. The goal is to ensure that the register $C$ is obtained by Bob using as less communication $\log|M|$ as possible and ensuring that $\Psi'\approx \ketbra{\Psi}$ .}}
\label{qcomtasks}
\end{figure}

\begin{itemize}
\item {\bf Entanglement-assisted classical communication over quantum channel:} Consider a quantum channel $\cN_{A\to B}$, over which we wish to communicate a message from the set $\{1,2,\ldots 2^R\}$, with small error. The work \cite{BennettSST02} considered the asymptotic and i.i.d. setting for this task, involving the channel $\cN_{A\to B}^{\otimes n}$ for large enough $n$. It was shown that the rate of communication $\frac{R}{n}$ converges to $$\max_{\ket{\Psi}_{AA'}}\mutinf{A'}{B}_{\cN_{A\to B}(\Psi_{AA'})},$$ where $\mutinf{A'}{B}$ is the quantum mutual information. The number of qubits of entanglement in the protocol from \cite{BennettSST02} was approximately $nS(\Psi_A)$ (the von-Neumann entropy) and the rate of communication was shown to be optimal. The work \cite{DattaTW2016} obtained a one-shot version of their protocol, with $\log |A|$ qubits of pre-shared entanglement. Their communication was characterized by the \textit{quantum hypothesis testing relative entropy} between the quantum state $\cN_{A\to B}(\Psi_{AA'})$ and a separable state derived from $\Psi_{AA'}$, which may not be optimal. The work \cite{AnshuJW17CC} introduced the position-based decoding method, showing how to achieve a communication characterized by the quantum hypothesis testing relative entropy between $\cN_{A\to B}(\Psi_{AA'})$ and $\cN_{A\to B}(\Psi_{A})\otimes \Psi_{A'}$. The achievable communication is near-optimal, due to the converse given in \cite{MatthewsW14}. But the protocol in \cite{AnshuJW17CC} required $\mathcal{O}(|A|)$ qubits of entanglement. Using our flattening procedure on the quantum state $\ket{\Psi}_{AA'}$, we show how to achieve the same near-optimal communication with $\mathcal{O}(\log|A|)$ qubits of entanglement. If the flattening procedure is efficient, then the encoding by Alice is efficient as well. Details appear in Subsection \ref{subsec:chancode}. 

The work \cite{AnshuJW17CC} also studied entanglement-assisted classical communication through various quantum networks, shown to be near optimal in \cite{AnshuJW19}. Our technique also exponentially improves upon the amount of entanglement in these protocols, while maintaining the achievable communication.
 
\item {\bf Quantum state splitting and quantum state redistribution:} The task of quantum state redistribution \cite{Devatakyard, YardD09} considers a quantum state $\ket{\Psi}_{RABC}$, where the register $R$ is inaccessible, registers $A,C$ are with Alice and register $B$ is with Bob. It is required that after communication from Alice to Bob, the register $C$ should be held by Bob. Its special cases of quantum state splitting \cite{ADHW09} and quantum state merging \cite{HorodeckiOW05} are equivalent (up to reversal of the protocol) and quantum state splitting considers the case where register $B$ is trivial. The work \cite{Renner11} obtained a one-shot protocol for quantum state splitting achieving near-optimal communication up to an additive factor of $\mathcal{O}(\log\log|C|)$. This was improved in \cite{AnshuDJ14} through a near-optimal protocol with communication tight up to an additive factor of $\mathcal{O}(1)$. While the protocol in \cite{Renner11} required $\mathcal{O}(\log|C|)$ qubits of pe-shared entanglement, the protocol in \cite{AnshuDJ14} required much larger $\mathcal{O}(|C|)$ qubits. Here, we show how to improve the number of qubits of pre-shared entanglement to $\mathcal{O}(\log|C|)$, retaining the communication cost in \cite{AnshuDJ14}. Again, we use the flattening procedure, efficiency of which ensures the efficiency of decoding operation by Bob.

The work \cite{AnshuJW17SR} gave a protocol for quantum state redistribution with smallest known quantum communication, improving upon the prior work \cite{BertaCT16}. But the number of qubits of pre-shared entanglement required was exponentially larger than that in \cite{BertaCT16}. Similar to aforementioned results, here we give a protocol that has similar quantum communication to \cite{AnshuJW17SR} and similar number of qubits of entanglement to \cite{BertaCT16}. Details appear in Subsection \ref{subsec:stateredist}.
\end{itemize}

\section{Proof outline}

The proofs of results presented in Method $A$ crucially rely on the following simple identity, which was first shown in \cite{AnshuDJ14}. Below, $\relent{.}{.}$ is the quantum relative entropy \cite{umegaki1954}.
$$\relent{\sum_i p_i \rho_i}{\theta} = \sum_i p_i \left(\relent{\rho_i}{\theta} - \relent{\rho_i}{\rho}\right).$$
This relation allows us to decompose the convex combination in Equation \ref{convsplit} into individual components. In addition, the proof of the decoupling result in Method $A.1$ also uses the notion of pairwise independent random variables to reduce the size of additional randomness, inspired by \cite{AnshuJW17MC}. The proof of decoupling result in Method $A.2$ is more subtle, as it requires us to find a collection of unitaries that form an appropriate representation of the cyclic group. Our construction, that is based on modular arithmetic, is inspired by explicit constructions of pairwise independent random variables \cite{Lovettnotes, KCN13}. 

To implement the flattening procedure in Method $B$, we show new relationships for quantum embezzlement. Let $\xi_D\defeq \frac{1}{S}\sum_{j=1}^n\frac{1}{j}\ketbra{j}_D$ be the marginal of the embezzling state from \cite{DamH03}, for some integer $n$ and $S$ being the normalization factor. Let $\rho_E\defeq \frac{1}{b}\sum_{e=1}^b\ketbra{e}_E$ be uniform in a support of size $b$. We show the existence of a unitary $U_b$ such that 
$$\dmax{U_b\left(\xi_D\otimes \ketbra{1}_E\right)U^{\dagger}_b}{\xi_D \otimes \rho_E} \leq \delta,$$ whenever $n> b^{\frac{1}{\delta}}$. Here $\dmax{.}{.}$ is the quantum max-relative entropy \cite{Datta09, Jain:2009}. Thus, it is possible to embezzle certain states with error guarantee in max-relative entropy, improving upon the error guarantee in fidelity \cite{DamH03}. We crucially use this in our proofs, as small max-relative entropy allows us to bound other one-shot information theoretic terms.

\section{Discussion}

Method $A.1$ is reminiscent of the derandomizing unitaries constructed in \cite{AmbainisS04}, which also uses unitary $1$-design for quantum encryption. But there is a difference between our setting and that in \cite{AmbainisS04}, since the number of unitaries that we use is dependent on the conditional min-entropy of the quantum state. On the other hand, the authors of \cite{AmbainisS04} only aim to decouple the maximally entangled state. We may also compare Method $A.1$ with the unitaries in \cite{NakataHMW17}, which shows how to perform decoupling with random unitaries diagonal in either $\mathsf{X}$ or $\mathsf{Z}$ bases. Our construction also yields a unitary diagonal in either $\mathsf{X}$ or $\mathsf{Z}$ bases, but it is explicit (that is, not a random unitary) and uses some additional catalytic randomness.

As mentioned earlier, the construction in Method $A.2$ is efficient, with circuit depth $\mathcal{O}(\log\log|C|)$ and size $\mathcal{O}(\log|C|\log\log|C|)$. This already achieves the performance of circuits based on unitary $2$-designs \cite{CleveLLC16} and improves upon the performance of \cite{BrownF15}, with arguably simpler construction. The unitaries $\{U_\ell\}_{\ell}$, as defined in Equation \ref{Uellunits} have an interesting property that they act as a representation of the cyclic group, reflecting the property of permutation operations in the convex-split method. 

In the language of resource theory of coherence, both the decoupling procedures in Method $A$ belong to the class of Physically Incoherent Operations \cite{StreltsovAP17}. Thus, an immediate implication of our results is that quantum decoupling can be performed by incoherent unitaries.  These decoupling procedures perform the same as decoupling via random unitary \cite{Frederic10, Berta13, DupuisBWR14}, when we consider the size of discarded system. None of these results (those in Method $A$ and the decoupling via random unitary) are optimal due to the additional effort put in making the decoupled register $C$ uniform. Indeed, it is known that the optimum cost of decoupling is characterized by the max-mutual information, rather than the conditional min-entropy \cite{Renner11, AnshuDJ14, MajenzBDRC17}. Method $B$ leads to a decoupling procedure achieving this, as it reduces the task to the case of uniform (or flat) marginal.

As shown in Equation \ref{flatext}, the central idea behind Method $B$ is to flatten a non-uniform quantum state, and use resource efficient protocols for the flattened state. The work \cite{Renner11} used a different technique for flattening the eigenvalues of a quantum state. Their technique was to distribute the eigenvalues into bins $[2^{-i}: 2^{-i-1}]$ and run a protocol within each bin (on a high level, the protocols in \cite{BennettSST02, DattaTW2016} also place the eigenvalues into uniform bins). While this method can be used for quantum state splitting (with a loss of communication of $\approx \log\log|C|$ required in transmitting the information about the bin), it is not clear how it can be used to construct a near-optimal entanglement-assisted protocol for quantum channel coding or quantum state redistribution. Our method does not face this limitation and can be uniformly applied to all the quantum communication scenarios. Further, our use of embezzling states in both quantum state splitting and entanglement-assisted quantum channel coding further highlights the duality between the two tasks \cite{BennettDHSW14, Renner11}.

We end this section with some open questions. Our first question is if there exists an analogue of Method $B$ that does not require embezzling states to achieve near-optimal decoupling. An efficient scheme could lead to new protocols with even smaller number of qubits of pre-shared entanglement in quantum communication tasks. Another important question is to see if the number of bits of additional randomness used in Method $A$ can further be reduced. It is known that seed size in randomness extraction in the presence of quantum side information can be very small \cite{DePVR12} (based on Trevisan's construction \cite{Trevisan01}). Since our construction treats classical side information and quantum side information in similar manner, we can hope to have similar results even in the case of quantum decoupling.

\subsection*{Acknowledgment} 

This work was completed when A.A. was at the Centre for Quantum Technologies, National University of Singapore, Singapore. This work is supported by the Singapore Ministry of Education through the Tier 3 Grant ``Random numbers from quantum processes'' MOE2012-T3-1-009 and VAJRA Grant, Department of Science and Technology, Government of India.

\bibliographystyle{naturemag}
\bibliography{References}

\newpage

\appendix

Here, we provide complete proofs for all the claims made in the main text. For the ease of navigation, we have discussed all the results and their interconnections in Figure \ref{fig:allresults}.

\begin{figure}[ht]
\centering
\begin{tikzpicture}

\draw[ultra thick, fill=gray!5!white] (0,0) rectangle (14,11);

\draw[thick, fill=blue!20!white] (1,10) rectangle (4,7);
\node at (2.5, 9.5) {Decoupling with};
\node at (2.5, 9) {$1$-designs};
\node at (2.5, 8.5) {``$\log|C| - $};
\node at (2.5, 8) {$\hmin{C}{R}$''};
\node at (2.5, 7.5) {Theorem \ref{paulisplit}};

\draw[thick, fill=blue!20!white] (10,10) rectangle (13,7);
\node at (11.5, 9.5) {Decoupling with};
\node at (11.5, 9) {classical ops.};
\node at (11.5, 8.5) {``$\log|C| - $};
\node at (11.5, 8) {$\hmin{C}{R}$''};
\node at (11.5, 7.5) {Theorem \ref{main:theo}};

\draw[thick, fill=yellow!30!white] (10,3.5) rectangle (13,1);
\node at (11.5, 3) {Hypothesis testing};
\node at (11.5, 2.5) {analogues of};
\node at (11.5, 2) {Theorems \ref{main:theo}, \ref{theo:maxmut} in};
\node at (11.5, 1.5) {Theorems \ref{theo:posbased}, \ref{cor:posbased}};

\draw[thick, fill=blue!20!white] (3.9,6) rectangle (7.1,3.5);
\node at (5.5,5.5) {Flattening method};
\node at (5.5,5) {derived from};
\node at (5.5,4.5) {correlated sampling:};
\node at (5.5,4) {Definitions \ref{broextend}, \ref{unitaryflat}};

\draw [->] (6.5,6) -- (6.5,7.5);
\draw[thick, fill=green!35!white] (5.8, 9.5) rectangle (8.2,7.5);
\node at (7, 9) {Decoupling};
\node at (7, 8.5) {up to};
\node at (7, 8) {``$\imax(R:C)$''};

\draw [->] (4, 8.5) -- (5.8,8.5);
\node at (4.9, 8.8) {Theorem};
\node at (4.9, 8.2) {\ref{theo:paulimaxmut}};

\draw [->] (10, 8.5) -- (8.2, 8.5);
\node at (9.1, 8.8) {Theorem};
\node at (9.1, 8.2) {\ref{theo:maxmut}};

\draw[thick, fill=red!20!white] (9,6.3) rectangle (13,4.3);
\node at (11, 5.8) {Quantum state merging};
\node at (11, 5.3) {and redistribution.};
\node at (11, 4.8) {Corollaries \ref{cor:stateredist}, \ref{cor:statemerge}};

\draw [->] (8.2, 7.5) -- (9, 6.3);
\draw [->] (11.5, 3.5) -- (11.5, 4.3);
\draw [->] (7.1, 5.3) -- (9, 5.3);

\draw[thick, fill=yellow!30!white] (1, 6) rectangle (3,4);
\node at (2, 5.5) {Hypothesis};
\node at (2, 5) {testing};
\node at (2, 4.5) {analogue};

\draw [->] (2,7) -- (2,6);
\draw [->] (2,4) -- (2,2.5);
\draw [->] (5,3.5) -- (5,2.5);

\draw [thick, fill=red!20!white] (1,2.5) rectangle (6, 1);
\node at (3.5, 2) {Entanglement-assisted quantum};
\node at (3.5, 1.5) {channel coding. Theorem \ref{optchancode}};

\end{tikzpicture}
\caption{\small An outline of our results, which are all derived in the one-shot setting. All the decoupling statements are stated as convex-split theorems. The results in blue rectangles are main tools that may be of independent interest. Method $A$ in the main text corresponds to the top two blue rectangles and Method $B$ in the main text corresponds to the lower blue rectangle. The results in red rectangles are quantum communication tasks for which we obtain entanglement cost proportional to the \textit{number of qubits of register to be communicated}, while maintaining the best known communication bounds. The result in green rectangle is the near optimal decoupling result and those in yellow rectangles are the hypothesis testing/position-based decoding analogues of convex-split theorems.}
 \label{fig:allresults}
\end{figure}
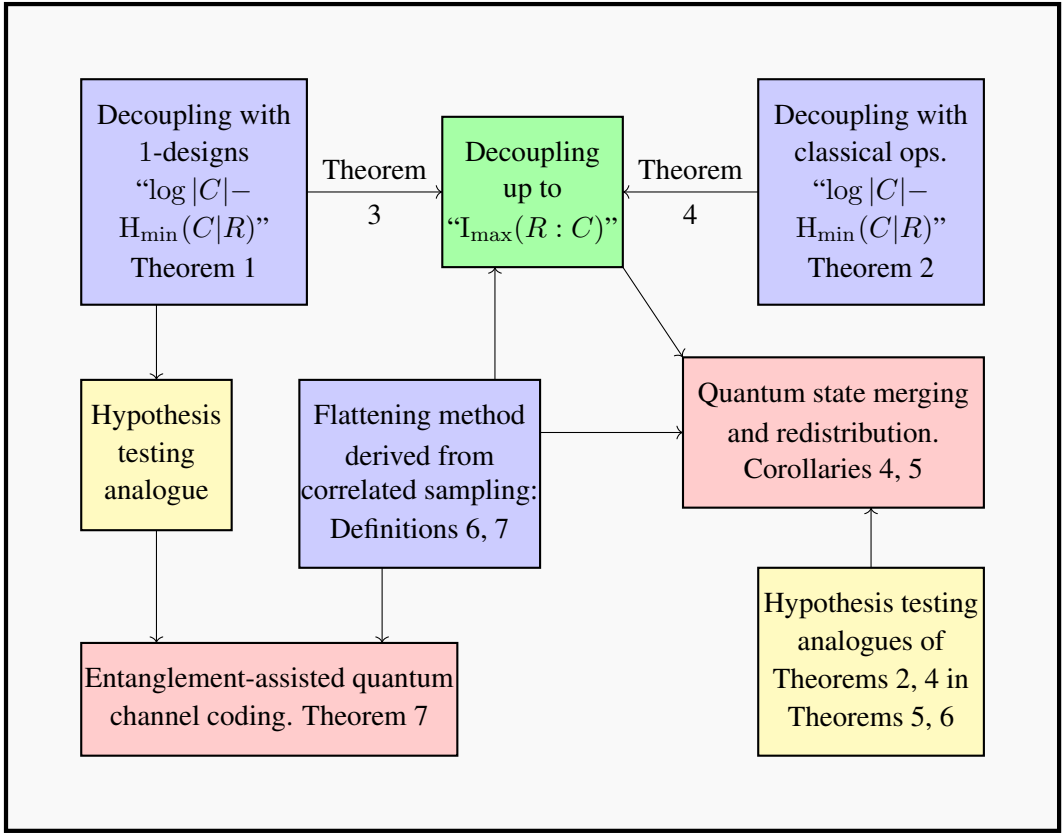

\section{Preliminaries}
\label{sec:prelims}

All the logarithms are evaluated to the base $2$. Consider a finite dimensional Hilbert space $\cH$ endowed with an inner product $\langle \cdot, \cdot \rangle$ (In this paper, we only consider finite dimensional Hilbert-spaces). The $\ell_1$ norm of an operator $X$ on $\cH$ is $\| X\|_1:=\Tr\sqrt{X^{\dagger}X}$ and $\ell_2$ norm is $\| X\|_2:=\sqrt{\Tr XX^{\dagger}}$. A quantum state (or a density matrix or a state) is a positive semi-definite matrix on $\cH$ with trace equal to $1$. It is called {\em pure} if and only if its rank is $1$. A sub-normalized state is a positive semi-definite matrix on $\cH$ with trace less than or equal to $1$. Let $\ket{\psi}$ be a unit vector on $\cH$, that is $\langle \psi,\psi \rangle=1$.  With some abuse of notation, we use $\psi$ to represent the state and also the density matrix $\ketbra{\psi}$, associated with $\ket{\psi}$. Given a quantum state $\rho$ on $\cH$, {\em support of $\rho$}, called $\text{supp}(\rho)$ is the subspace of $\cH$ spanned by all eigenvectors of $\rho$ with non-zero eigenvalues.
 
A {\em quantum register} $A$ is associated with some Hilbert space $\cH_A$. Define $|A| := \dim(\cH_A)$. Let $\mathcal{L}(A)$ represent the set of all linear operators on $\cH_A$. For operators $O, O'\in \cL(A)$, the notation $O\preceq O'$ represents the L\"{o}wner order, that is, $O'-O$ is a positive semi-definite matrix. We denote by $\mathcal{D}(A)$, the set of quantum states on the Hilbert space $\cH_A$. State $\rho$ with subscript $A$ indicates $\rho_A \in \mathcal{D}(A)$. If two registers $A,B$ are associated with the same Hilbert space, we shall represent the relation by $A\equiv B$.  Composition of two registers $A$ and $B$, denoted $AB$, is associated with Hilbert space $\cH_A \otimes \cH_B$.  For two quantum states $\rho\in \mathcal{D}(A)$ and $\sigma\in \mathcal{D}(B)$, $\rho\otimes\sigma \in \mathcal{D}(AB)$ represents the tensor product (Kronecker product) of $\rho$ and $\sigma$. The identity operator on $\cH_A$ (and associated register $A$) is denoted $\id_A$. The maximally mixed state $\frac{\id_A}{|A|}$ on register $A$ is represented by $\mu_A$.

Let $\rho_{AB} \in \mathcal{D}(AB)$. We define
\[ \rho_{B} := \Tr_{A}{\rho_{AB}}
:= \sum_i (\bra{i} \otimes \id_{B})
\rho_{AB} (\ket{i} \otimes \id_{B}) , \]
where $\{\ket{i}\}_i$ is an orthonormal basis for the Hilbert space $\cH_A$.
The state $\rho_B\in \mathcal{D}(B)$ is referred to as the marginal state of $\rho_{AB}$. Unless otherwise stated, a missing register from subscript in a state will represent partial trace over that register. Given a $\rho_A\in\mathcal{D}(A)$, a {\em purification} of $\rho_A$ is a pure state $\rho_{AB}\in \mathcal{D}(AB)$ such that $\Tr_{B}{\rho_{AB}}=\rho_A$. Purification of a quantum state is not unique. Suppose $A\equiv B$. Given $\{\ket{i}_A\}$ and $\{\ket{i}_B\}$ as orthonormal bases over $\cH_A$ and $\cH_B$ respectively, the \textit{canonical purification} of a quantum state $\rho_A$ is $(\rho_A^{\frac{1}{2}}\otimes\id_B)\left(\sum_i\ket{i}_A\ket{i}_B\right)$. 

A quantum {map} $\cE: \mathcal{L}(A)\rightarrow \mathcal{L}(B)$ is a completely positive and trace preserving (CPTP) linear map (mapping states in $\mathcal{D}(A)$ to states in $\mathcal{D}(B)$). A {\em unitary} operator $U_A:\cH_A \rightarrow \cH_A$ is such that $U_A^{\dagger}U_A = U_A U_A^{\dagger} = \id_A$. An {\em isometry}  $V_{A\to B}:\cH_A \rightarrow \cH_B$ is such that $V^{\dagger}V = \id_A$ and $VV^{\dagger} = \id_B$. The set of all unitary operations on register $A$ is  denoted by $\mathcal{U}(A)$. Some standard unitaries are the $\mathsf{X},\mathsf{Z},\mathsf{H}$ (Pauli-$\mathsf{X}$, Pauli-$\mathsf{Z}$ and Hadamard, respectively) gates on qubits, the $\mathsf{CNOT}$ gate on a pair of qubits and the Toffoli gate on three qubits \cite{BarencoBCDMSSSW95}. We will drop the register labels on unitaries unless when it is required. We shall consider the following information theoretic quantities. We consider only normalized states in the definitions below. Let $\eps \in (0,1)$. 

\begin{enumerate}
\item {\bf Fidelity} (\cite{Josza94}, see also \cite{uhlmann76}) For $\rho_A,\sigma_A \in \mathcal{D}(A)$, $$\F(\rho_A,\sigma_A)\defeq\|\sqrt{\rho_A}\sqrt{\sigma_A}\|_1.$$ For classical probability distributions $P = \{p_i\}, Q =\{q_i\}$, $$\F(P,Q)\defeq \sum_i \sqrt{p_i \cdot q_i}.$$
\item {\bf Purified distance} (\cite{GilchristLN05}) For $\rho_A,\sigma_A \in \mathcal{D}(A)$, $$\Pur(\rho_A,\sigma_A) = \sqrt{1-\F^2(\rho_A,\sigma_A)}.$$
\item {\bf $\varepsilon$-ball} For $\rho_A\in \mathcal{D}(A)$, $$\ball{\eps}{\rho_A} \defeq \{\rho'_A\in \mathcal{D}(A)|~\Pur(\rho_A,\rho'_A) \leq \varepsilon\}. $$ 
\item {\bf Smooth max-relative entropy} (\cite{Datta09}, see also \cite{Jain:2009}) For $\rho_A,\sigma_A\in \mathcal{D}(A)$ such that $\text{supp}(\rho_A) \subset \text{supp}(\sigma_A)$, $$ \dmaxeps{\rho_A}{\sigma_A}{\eps}  \defeq  \min_{\rho'_A\in \ball{\eps}{\rho_A}}\min\{ \lambda \in \mathbb{R} : 2^{\lambda} \sigma_A \geq \rho'_A \}.$$  
\item {\bf Hypothesis testing relative entropy} (\cite{BuscemiD10}, see also \cite{HayashiN03})  For $\rho_A,\sigma_A\in \mathcal{D}(A)$, $$\dheps{\rho_A}{\sigma_A}{\eps}  \defeq  \max_{0<\Pi<I, \Tr(\Pi\rho_A)\geq 1-\eps}\log\left(\frac{1}{\Tr(\Pi\sigma_A)}\right).$$  
\item {\bf Max-information} (\cite{CiganovicBR14})  For $\rho_{AB}\in \mathcal{D}(AB)$, $$\imax(A:B)_{\rho} \defeq   \dmax{\rho_{AB}}{\rho_{A}\otimes\rho_{B}} .$$
\item {\bf Smooth max-information} (\cite{CiganovicBR14}) For $\rho_{AB}\in \mathcal{D}(AB)$,  $$\imaxeps{A}{B}{\eps}_{\rho} \defeq \dmax{\rho_{AB}}{\rho_{A}\otimes\rho_{B}}{\eps}  .$$	
\item {\bf Conditional min-entropy} (\cite{Renner05}) For $\rho_{AB}\in \mathcal{D}(AB)$, $$ \hmin{A}{B}_{\rho} \defeq  - \min_{\sigma_B\in \mathcal{D}(B)}\dmax{\rho_{AB}}{\id_{A}\otimes\sigma_{B}} .$$  	
\item {\bf Smooth conditional min-entropy} (\cite{Renner05}) For $\rho_{AB}\in \mathcal{D}(AB)$, $$\hmineps{A}{B}{\eps}_{\rho} \defeq   \max_{\rho^{'} \in \ball{\eps}{\rho}} \hmin{A}{B}_{\rho^{'}} .$$  	
\end{enumerate}

\section{Convex-split with improved resources: basic constructions}
\label{sec:smallrand}

We begin this section by providing a construction of convex-split of a quantum state that uses small amount of additional randomness.

\subsection{Convex-split using a mixture of unitaries from a $1$-design}
\label{subsec:1design}

The unitary $1$-design is defined as follows.
\begin{definition}
\label{def:1design}
Fix a register $C$. A collection of unitaries $\{V_x\}_{x=1}^{|C|^2}$ form a $1$-design if 
$$\frac{1}{|C|^2}\sum_x V_x M V^{\dagger}_x = \Tr(M)\frac{\id_C}{|C|}, \quad \forall M \in \cL(C).$$
\end{definition}

These unitaries have an additional property that they are perfect decouplers, that is,
\begin{equation}
\label{eq:1design}
\frac{1}{|C|^2}\sum_x V_x \rho_{RC} V^{\dagger}_x = \rho_R\otimes\frac{\id_C}{|C|},
\end{equation}
which is evident from Definition \ref{def:1design}. In order to use a small subset of them decoupling, we will require the notion of pairwise independent functions.
\begin{definition}
\label{def:pairwiseunit}
 Let $\{f_j: \cX\times\cX\rightarrow \cX\}_{j=1}^{|\cX|}$ be a family of pairwise independent functions. That is, 
$$\frac{|\{(x_1,x_2):f_j(x_1,x_2) =x , f_k(x_1,x_2)=x'\}|}{|\cX|^2}= \frac{1}{|\cX|^2}, \quad \forall x,x', \quad \forall j\neq k.$$
Introduce registers $X_1\equiv X_2$ such that $|X_1|=|X_2|=|\cX|$.  Let $V^{(j)}:\cH_{CX_1X_2}\rightarrow \cH_{CX_1X_2}$ be defined as $V^{(j)}= \sum_{x_1,x_2}V_{f_j(x_1,x_2)}\otimes\ketbra{x_1, x_2}_{X_1X_2}$. 
\end{definition}
As discussed in \cite[Example 6]{KCN13} or \cite{Lovettnotes}, there exists an efficient construction of pairwise independent function family for any $\cX$ with $|\cX|$ a prime power. In our setting, $|\cX|=|C|^2$. Hence, such a construction exists whenever $\log|C|$ is an integer. 
The following theorem ensures that convex-split can be achieved with small amount of additional resource. Its proof appears in Section \ref{proofs:smallrand}.
\begin{theorem}
\label{paulisplit}
Suppose $\log|C|$ is an integer. Let $\Psi_{RC}$ be a quantum state. Define $k\defeq \dmax{\Psi_{RC}}{\Psi_R\otimes \mu_{C}}$. Define the quantum state
$$\tau_j \defeq V^{(j)}(\Psi_{RC}\otimes \mu_{X_1X_2})V^{(j)\dagger}, \tau\defeq \frac{1}{N}\sum_j \tau_j.$$
It holds that
$$\relent{\tau}{\Psi_R\otimes\mu_C\otimes \mu_{X_1X_2}} \leq \log\left(1+\frac{2^k-1}{N}\right).$$
\end{theorem}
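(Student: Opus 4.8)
The plan is to apply the decomposition identity from \cite{AnshuDJ14} quoted in the proof outline to the convex combination $\tau = \frac{1}{N}\sum_j \tau_j$, with the reference state $\theta \defeq \Psi_R\otimes\mu_C\otimes\mu_{X_1X_2}$. Writing $\rho \defeq \tau$, the identity gives
$$\relent{\tau}{\theta} = \frac{1}{N}\sum_j \left(\relent{\tau_j}{\theta} - \relent{\tau_j}{\tau}\right) \le \frac{1}{N}\sum_j \relent{\tau_j}{\theta},$$
where the inequality uses non-negativity of relative entropy. Now each $\tau_j = V^{(j)}(\Psi_{RC}\otimes\mu_{X_1X_2})V^{(j)\dagger}$, and since $V^{(j)}$ acts as identity on $R$ and is a controlled unitary keeping $X_1X_2$ maximally mixed, $\tau_j$ has marginal $\Psi_R$ on $R$ and $\mu_{X_1X_2}$ on $X_1X_2$. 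The term $\relent{\tau_j}{\theta}$ reduces, via unitary invariance of relative entropy applied with $V^{(j)}$, to $\relent{\Psi_{RC}\otimes\mu_{X_1X_2}}{V^{(j)\dagger}(\Psi_R\otimes\mu_C\otimes\mu_{X_1X_2})V^{(j)}}$; but this is not obviously small term-by-term. So the bound must instead exploit the averaging over $j$ more carefully.

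The better route — the one I expect the paper to take — is to bound $\relent{\tau}{\theta}$ directly using the operator inequality $\tau \le 2^{\lambda}\theta$ for a suitable $\lambda$, since $\relent{\tau}{\theta} \le \dmax{\tau}{\theta}$. Here is where pairwise independence enters. Because $\dmax{\Psi_{RC}}{\Psi_R\otimes\mu_C} = k$, we have $\Psi_{RC} \le 2^k (\Psi_R\otimes\mu_C)$, equivalently $\Psi_{RC} = \Psi_R\otimes\mu_C + (2^k-1)\Omega_{RC}$ for some subnormalized $\Omega_{RC}$ with the right trace (after rescaling). The key computation is to evaluate $\tau = \frac{1}{N}\sum_j V^{(j)}(\Psi_{RC}\otimes\mu_{X_1X_2})V^{(j)\dagger}$ and show that the "excess" part, coming from $(2^k-1)\Omega_{RC}$, gets spread out: averaging $\frac{1}{N}\sum_j V_{f_j(x_1,x_2)}(\cdot)V_{f_j(x_1,x_2)}^{\dagger}$ over the pairwise independent family, for a \emph{fixed} pair of distinct indices, produces the uniform average over the whole $1$-design and hence maps any state to $\mu_C$; whereas the diagonal ($j=k$) contribution is what fails to decouple and contributes the $\frac{2^k-1}{N}$ overhead. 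Concretely, I would compute the "collision" quantity $\frac{1}{N}\sum_j \tau_j$ acting against $\theta^{-1/2}$, or equivalently bound $\|\theta^{-1/2}\tau\theta^{-1/2}\|_\infty$, and show it is at most $1 + \frac{2^k-1}{N}$, which then yields $\dmax{\tau}{\theta} \le \log(1 + \frac{2^k-1}{N})$ and the theorem follows since $\relent{\cdot}{\cdot}\le\dmax{\cdot}{\cdot}$.

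To make the pairwise-independence step precise, I would isolate the role of the registers $X_1X_2$: conditioned on $X_1X_2 = (x_1,x_2)$, the state on $RC$ is $V_{f_j(x_1,x_2)}\Psi_{RC}V_{f_j(x_1,x_2)}^{\dagger}$, so $\tau$ restricted to the block $(x_1,x_2)$ is the average over $j$ of these. The decomposition $\Psi_{RC} = \Psi_R\otimes\mu_C + (2^k - 1)\Omega_{RC}$ lets me split $\tau$ into a "clean" part (the $\Psi_R\otimes\mu_C$ piece, which is $1$-design-invariant, giving exactly $\theta$) plus an "error" part $\frac{2^k-1}{N}\sum_j V^{(j)}(\Omega_{RC}\otimes\mu_{X_1X_2})V^{(j)\dagger}$. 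For the error part, the operator norm relative to $\theta$ is controlled by bounding $\sum_j V_{f_j(x_1,x_2)}\Omega_{RC}V_{f_j(x_1,x_2)}^{\dagger}$ against $\Psi_R\otimes\mu_C$ using that $\Omega_{RC}\le \Psi_R\otimes\mu_C$ (up to the trace normalization) and that each summand is again $\le$ something of the form $\Psi_R\otimes\mathrm{(state\ on}\ C)$; summing $N$ of these and dividing by $N$ is where the $(2^k-1)/N$ comes from — pairwise independence isn't even strictly needed for \emph{this} theorem (it is used in the companion randomness-reduction argument), but it guarantees the construction of the family $\{f_j\}$ exists. \textbf{The main obstacle} I anticipate is handling the non-commutativity: $\theta^{-1/2}\tau\theta^{-1/2}$ need not be diagonal, so "spreading out" must be phrased as an operator inequality $\tau \le (1 + \tfrac{2^k-1}{N})\theta$ rather than an entrywise bound, and verifying that the $1$-design average of $V_x\Omega_{RC}V_x^{\dagger}$ is dominated by $\Psi_R\otimes\mu_C$ as operators requires care — this is where the identity $\frac{1}{|C|^2}\sum_x V_x M V_x^{\dagger} = \Tr(M)\mu_C$ from Definition \ref{def:1design} is the crucial lever, applied to $M = $ the $C$-block of $\Omega$ conditioned on $R$.
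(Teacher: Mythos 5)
There is a genuine gap: the pivot to bounding $\dmax{\tau}{\Psi_R\otimes\mu_C\otimes\mu_{X_1X_2}}$ cannot work, because the claimed operator inequality $\tau \preceq \left(1+\tfrac{2^k-1}{N}\right)\Psi_R\otimes\mu_C\otimes\mu_{X_1X_2}$ is false in general. Both $\tau$ and the reference state are block-diagonal in $X_1X_2$ with equal block weights, so the max-relative entropy is the \emph{maximum} over blocks of $\dmax{\frac{1}{N}\sum_j V_{f_j(x_1,x_2)}\Psi_{RC}V_{f_j(x_1,x_2)}^{\dagger}}{\Psi_R\otimes\mu_C}$. Pairwise independence is a statement about averaging over the seed $(x_1,x_2)$; for a \emph{fixed} seed it gives no spreading at all. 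With the standard construction $f_j(x_1,x_2)=x_1+jx_2$ over a field, every block with $x_2=0$ has $f_j(x_1,0)=x_1$ for all $j$, so that block's state is a single unitary rotation of $\Psi_{RC}$ and its $\dmax$ against $\Psi_R\otimes\mu_C$ is exactly $k$. Hence $\dmax{\tau}{\theta}=k\gg\log\left(1+\tfrac{2^k-1}{N}\right)$, and the chain $\relent{\cdot}{\cdot}\le\dmax{\cdot}{\cdot}$ cannot give the theorem; your parenthetical claim that pairwise independence "isn't even strictly needed" for this theorem is also wrong for the same reason. (A smaller issue: the split $\Psi_{RC}=\Psi_R\otimes\mu_C+(2^k-1)\Omega_{RC}$ forces $\Tr\,\Omega_{RC}=0$, so $\Omega$ is not a state; the usable decomposition is $\Psi_R\otimes\mu_C=2^{-k}\Psi_{RC}+(1-2^{-k})\Omega$.)

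The fix is essentially your first instinct, which you abandoned too quickly: keep \emph{both} terms of Fact \ref{elemeq} rather than discarding $-\relent{\tau_j}{\tau}$ by nonnegativity. The paper first uses the controlled structure of $V^{(j)}$ to write $\relent{\tau}{\theta}$ as an average over seeds of per-block relative entropies, applies Fact \ref{elemeq} inside each block, and uses unitary invariance so the positive terms are all exactly $\relent{\Psi_{RC}}{\Psi_R\otimes\mu_C}$. The subtracted terms are then lower bounded: by convexity of relative entropy in the second argument one may average over the seed inside that argument, and \emph{here} pairwise independence converts, for $k\neq j$, the seed-average of $V^{\dagger}_{f_j}V_{f_k}\Psi_{RC}V^{\dagger}_{f_k}V_{f_j}$ into the full $1$-design average, which equals $\Psi_R\otimes\mu_C$. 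This yields the second argument $\frac{1}{N}\Psi_{RC}+\frac{N-1}{N}\Psi_R\otimes\mu_C \preceq \left(1+\tfrac{2^k-1}{N}\right)\Psi_R\otimes\mu_C$, and operator monotonicity of the logarithm makes the two copies of $\relent{\Psi_{RC}}{\Psi_R\otimes\mu_C}$ cancel exactly, leaving only $\log\left(1+\tfrac{2^k-1}{N}\right)$. The averaging over the seed thus happens \emph{inside} a relative-entropy term, where convexity is available — precisely the mechanism that the global $\dmax$ bound cannot emulate.
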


Now we give a canonical example of a unitary $1$-design, which we will use in later sections. 

\begin{definition}
\label{paulis}
Given a register $C$ and a basis $\{\ket{c}\}_{c=0}^{|C|-1}$. Define the Heisenberg-Weyl (HW) unitaries $\{V_{a,b}\}_{a,b=0}^{|Z|-1}$ with $V_{a,b}:\cH_C\rightarrow \cH_C$ as $V_{a,b}\defeq \sum_{c} e^{\frac{2\pi icb}{|C|}}\ket{c+a}_Z\bra{c}_Z$. 
\end{definition}

Following is a well known lemma, which shows that HW unitaries are a $1$-design.

\begin{lemma}
\label{totmix}
For all $\ket{c}_C,\ket{c'}_C,$ it holds that
$$\frac{1}{|C|^2}\sum_{a,b} V_{a,b}\ket{c}\bra{c'}_CV^{\dagger}_{a,b} = \delta_{c,c'}\mu_C.$$
In particular, this implies that for any state $\rho_{RC}$, 
$$\frac{1}{|C|^2}\sum_{a,b} V_{a,b}\rho_{RC}V^{\dagger}_{a,b} = \rho_R\otimes \mu_C.$$ 
\end{lemma}
\begin{proof}
Consider
\begin{eqnarray*}
\frac{1}{|C|^2}\sum_{a,b} V_{a,b}\ket{c}\bra{c'}_CV^{\dagger}_{a,b}&=& \left(\frac{1}{|C|^2}\sum_{a,b} e^{\frac{2\pi i(c-c')b}{|C|}}\ket{c+a}\bra{c'+a}_C\right)\\
&=& \delta_{c,c'}\left(\frac{1}{|C|}\sum_{a}\ket{c+a}\bra{c+a}_C\right)\\
&=& \delta_{c,c'}\mu_Z.
\end{eqnarray*}
In the second equation, we have used Fact \ref{harmonicseries}.
Expand $\rho_{RC} = \sum_{c,c'}\rho^{c,c'}_R\otimes \ket{c}\bra{c'}_C$. Consider
\begin{eqnarray*}
\frac{1}{|C|^2}\sum_{a,b} V_{a,b}\rho_{RC}V^{\dagger}_{a,b}&=& \sum_{c,c'}\rho^{c,c'}_R\otimes \left(\frac{1}{|C|^2}\sum_{a,b} V_{a,b}\ket{c}\bra{c'}_CV^{\dagger}_{a,b}\right)\\
&=& \sum_{c,c'}\delta_{c,c'}\rho^{c,c}_R\otimes \mu_C = \rho_R\otimes \mu_C.
\end{eqnarray*}
Above, $\delta_{z,z'}$ is the delta function. This completes the proof. 
\end{proof}

Above construction uses HW unitaries which also involve a phase. Hence, these unitaries are not classical. Below, we provide a construction that is completely classical, that is, it permutes basis vectors to basis vectors.

\subsection{Convex-split with classical unitaries}
\label{subsec:classicalunit}
Fix a register $C$. Let $\qbit$ be a register with $|\qbit|=2$.  We denote by $\brc$ a register such that $|\brc|\geq |C|^2$ is a prime and $\cH_{\brc}$ is a subspace of $\cH_{\qbit}\otimes\cH_C\otimes\cH_C$. This choice of $\brc$ can be made due to Bertrand's postulate \cite{Chebysev1852}. Let $\{\ket{c}\}_{c=0}^{|C|-1}$ be an arbitrary choice of basis in $\cH_C$, a natural example of which is the computational basis. This ensures that $\{\ket{q}\ket{c}\ket{c'}\}$ with $q\in\{0,1\}$ is a basis on $\cH_{\qbit}\otimes\cH_C\otimes\cH_C$. We construct a basis $\{\ket{i}\}_{i=0}^{|\brc|-1}$ on $\cH_{\brc}$ as follows. We relabel the vector $\ket{0}\ket{c,c'}$ as $\ket{c|C|+c'}$. This gives $|C|^2$ basis vectors for $\brc$. The remaining $|\brc| - |C|^2$ basis vectors are constructed by relabeling $\ket{1}\ket{c,c'}$ as $\ket{|C|^2 + c|C|+c'}$ as long as $|C|^2 + c|C|+c'\leq |\brc|-1$. We note that the constraint $|C|^2 + c|C|+c'\leq |\brc|-1$ is automatically satisfied in our analysis below, as all the additions, subtractions and multiplications appearing below are performed modulo $|\brc|$, unless explicitly stated. Now, introduce registers $C_0, C_1\equiv C$ and $\brc_1, \brc_2 \equiv \brc$, where $\brc_1$ is chosen such that $\cH_{\brc_1}\subset\cH_{\qbit}\otimes \cH_{C_0}\otimes \cH_{C_1}$.    
\begin{definition}
\label{def:Uell}
For an integer $\ell \in \{0,1, \ldots |\brc|-1\}$, define the operation $U_{\ell}: \cH_{\brc_1}\otimes \cH_{\brc_2} \rightarrow \cH_{\brc_1}\otimes \cH_{\brc_2}$ as follows:
$$U_{\ell}\defeq \sum_{i,j} \ket{i + (j-i)\ell}_{\brc_1}\ket{j + (j-i)\ell}_{\brc_2}\bra{i}_{\brc_1}\bra{j}_{\brc_2}.$$
\end{definition}
We choose the convention that the expression in the kets for registers $\brc_1, \brc_2$ are evaluated modulo $|\brc|$. The following lemma shows that the unitaries in Definition \ref{def:Uell} behave in a `cyclic' manner, analogous to the permutations in 
the convex-split lemma from \cite{AnshuDJ14}.
\begin{lemma}
\label{Uprops}
For every $m,\ell\in \{0,1, \ldots |\brc|-1\}$, it holds that $U_\ell$ is a unitary. Furthermore
$$U_{m} U_{\ell} = U_{m+\ell}, \quad U^{\dagger}_{\ell} = U_{-\ell}.$$
\end{lemma}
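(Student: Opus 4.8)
The plan is to verify the three claims directly from the explicit formula in Definition~\ref{def:Uell}, treating all index arithmetic modulo $|\brc|$.

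\textbf{Step 1: $U_\ell$ is a unitary.} I would show that $U_\ell$ permutes the computational basis $\{\ket{i}_{\brc_1}\ket{j}_{\brc_2}\}$, hence is a permutation matrix and therefore unitary. Define the map $\phi_\ell(i,j) = (i+(j-i)\ell,\; j+(j-i)\ell)$ on $\mathbb{Z}_{|\brc|}\times\mathbb{Z}_{|\brc|}$. The key observation is that $\phi_\ell$ preserves the difference of the two coordinates: if $(i',j') = \phi_\ell(i,j)$ then $j'-i' = j-i$. So on each ``diagonal'' $\{(i,j): j-i = d\}$, the map acts as $(i,j)\mapsto (i+d\ell, j+d\ell)$, which is a translation by the fixed amount $d\ell$ — clearly a bijection of that diagonal onto itself. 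Since $\phi_\ell$ restricts to a bijection on each diagonal and the diagonals partition $\mathbb{Z}_{|\brc|}^2$, the map $\phi_\ell$ is a bijection of $\mathbb{Z}_{|\brc|}^2$. Hence $U_\ell = \sum_{i,j}\ket{\phi_\ell(i,j)}\bra{i,j}$ is a permutation unitary.

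\textbf{Step 2: $U_m U_\ell = U_{m+\ell}$.} I would compute the composition on a basis vector $\ket{i}_{\brc_1}\ket{j}_{\brc_2}$. Applying $U_\ell$ first gives $\ket{i'}_{\brc_1}\ket{j'}_{\brc_2}$ with $i' = i+(j-i)\ell$, $j' = j+(j-i)\ell$, and — crucially — $j'-i' = j-i$. Now apply $U_m$: it sends this to $\ket{i' + (j'-i')m}\ket{j'+(j'-i')m} = \ket{i' + (j-i)m}\ket{j'+(j-i)m}$. Substituting back, $i' + (j-i)m = i + (j-i)\ell + (j-i)m = i + (j-i)(\ell+m)$, and similarly the second coordinate is $j + (j-i)(\ell+m)$. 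This is exactly $U_{m+\ell}\ket{i}_{\brc_1}\ket{j}_{\brc_2}$, so $U_m U_\ell = U_{m+\ell}$ (with $m+\ell$ taken modulo $|\brc|$, consistent with the labeling convention).

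\textbf{Step 3: $U_\ell^\dagger = U_{-\ell}$.} Since $U_\ell$ is a permutation unitary by Step~1, $U_\ell^\dagger = U_\ell^{-1}$. From Step~2, $U_{-\ell} U_\ell = U_{-\ell+\ell} = U_0$, and $U_0$ acts as $\ket{i}\ket{j}\mapsto\ket{i}\ket{j}$, i.e.\ $U_0 = \id$. Likewise $U_\ell U_{-\ell} = U_0 = \id$. Hence $U_{-\ell} = U_\ell^{-1} = U_\ell^\dagger$.

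I do not anticipate a genuine obstacle here; the only subtle point is bookkeeping of the modular arithmetic and making sure the difference-preserving property $j'-i' = j-i$ is invoked correctly when composing — it is what makes the exponents add. I would state that observation as a small remark before doing Steps~2 and~3 so the computations read cleanly.
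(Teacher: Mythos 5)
Your proof is correct, and Steps 2 and 3 follow essentially the same route as the paper: a direct computation on basis vectors for the composition law, followed by $U_0=\id$ to get $U_\ell^\dagger=U_{-\ell}$ (you are slightly more explicit in checking both $U_{-\ell}U_\ell=\id$ and $U_\ell U_{-\ell}=\id$, which the paper compresses into one line). The only genuine divergence is in Step 1. The paper proves unitarity by a linear-algebraic injectivity argument: it assumes two basis pairs have the same image, rearranges to the system $(j-j')\ell+(i-i')(1-\ell)=0$ and $(j-j')(1+\ell)-(i-i')\ell=0$, and eliminates to get $j=j'$, $i=i'$. You instead observe that the map $(i,j)\mapsto(i+(j-i)\ell,\,j+(j-i)\ell)$ preserves the difference $j-i$ and acts as a translation by $d\ell$ on each diagonal $\{j-i=d\}$, hence permutes the basis. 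Your argument is a bit more structural and makes transparent the invariant ($j-i$ is conserved) that also powers the composition law, whereas the paper's elimination is purely computational; neither argument uses primality of $|\brc|$ (that is only needed later, in the decoupling Lemma about $\Tr_{\brc_2}$). Both are equally valid, and your remark about stating the difference-preservation up front is a reasonable presentational improvement.
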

\begin{proof}
We first show that $U_\ell$ is a unitary. Let $i,i', j, j'$ be such that
$$i + (j-i)\ell = i' + (j'-i')\ell, \quad j + (j-i)\ell = j' + (j'-i')\ell.$$
This can be rearranged to obtain
$$(j-j')\ell +(i-i')(1-\ell) = 0, \quad (j-j')(1+\ell) - (i-i')\ell=0.$$
Multiplying the first equation by $\ell$, the second by $(1-\ell)$ and adding, we obtain $j-j'=0$. Thus, $(i-i')(1-\ell)=0$ and $(i-i')\ell=0$. Adding, we conclude that $i=i'$. Hence, $U_\ell$ is a unitary.

\noindent Now, consider
\begin{eqnarray*}
U_mU_\ell\ket{i}_{\brc_1}\ket{j}_{\brc_2} &=& U_m\ket{i+(j-i)\ell}_{\brc_1}\ket{j+(j-i)\ell}_{\brc_2}\\
&=& \ket{i+(j-i)\ell + (j-i)m}_{\brc_1}\ket{j+(j-i)\ell+(j-i)m}_{\brc_2}\\
&=& \ket{i+(j-i)(\ell+m) }_{\brc_1}\ket{j+(j-i)(\ell+m)}_{\brc_2}\\
&=& U_{m+\ell}\ket{i}_{\brc_1}\ket{j}_{\brc_2}.
\end{eqnarray*}
Thus, $U_mU_\ell = U_{m+\ell}$. Since $U_0=\id$, we conclude $U^{\dagger}_\ell = U_{-\ell}$. This completes the proof.
\end{proof}

Following is an important property of our collection of unitaries and is analogous to Lemma \ref{totmix}.
\begin{lemma}
\label{lem:Uprop2}
For any quantum state $\Psi_{RC_0}$ and any $m\in \{1, \ldots |\brc|-1\}$, 
it holds that $$\Tr_{\brc_2}\left(U_{m}\left(\Psi_{RC_0}\otimes\ketbra{0}_{\qbit}\otimes\mu_{C_1}\otimes \mu_{\brc_2}\right)U^{\dagger}_{m}\right) = \Psi_R\otimes \mu_{\brc_1},$$
where we use the fact that $\cH_{\brc_1}\subseteq \cH_{\qbit}\otimes\cH_{C_0}\otimes\cH_{C_1}$ to change the register label.  
\end{lemma}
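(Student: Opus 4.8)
The plan is to compute the left-hand side directly by exploiting the structure of $U_m$ when the second input register $\brc_2$ is maximally mixed, and to notice that the ``classical permutation'' $U_m$ acts, for each fixed value $i$ in register $\brc_1$, as a bijection on the basis of $\brc_2$. Concretely, I would first expand $\Psi_{RC_0}\otimes\ketbra{0}_{\qbit}\otimes\mu_{C_1}$ in the basis $\{\ket{i}_{\brc_1}\}$ obtained from the relabeling: writing $\Psi_{RC_0}\otimes\ketbra{0}_{\qbit}\otimes\mu_{C_1}=\sum_{i,i'}\Psi^{i,i'}_R\otimes\ket{i}\bra{i'}_{\brc_1}$ (where the sum ranges only over the $|C|^2$ indices coming from $q=0$, with appropriate normalization from $\mu_{C_1}$), and $\mu_{\brc_2}=\frac{1}{|\brc|}\sum_j\ket{j}\bra{j}_{\brc_2}$.

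Next I would apply $U_m$ using Definition \ref{def:Uell}: $U_m$ sends $\ket{i}_{\brc_1}\ket{j}_{\brc_2}\bra{i'}_{\brc_1}\bra{j}_{\brc_2}$ to $\ket{i+(j-i)m}_{\brc_1}\ket{j+(j-i)m}_{\brc_2}\bra{i'+(j-i')m}_{\brc_1}\bra{j+(j-i')m}_{\brc_2}$, all modulo $|\brc|$. Then I trace out $\brc_2$: the partial trace forces the two $\brc_2$-labels $j+(j-i)m$ and $j+(j-i')m$ to be equal, i.e. $(i-i')m\equiv 0\pmod{|\brc|}$. Since $|\brc|$ is prime and $m\in\{1,\dots,|\brc|-1\}$, this forces $i=i'$, which is the crucial place where primality and $m\neq 0$ enter — this is the analogue of the $\delta_{c,c'}$ collapse in Lemma \ref{totmix}. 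So only the diagonal terms $i=i'$ survive, contributing $\sum_i \Psi^{i,i}_R\otimes \frac{1}{|\brc|}\sum_j \ket{i+(j-i)m}\bra{i+(j-i)m}_{\brc_1}$.

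Finally, for each fixed $i$, as $j$ ranges over $\{0,1,\dots,|\brc|-1\}$ the quantity $i+(j-i)m \bmod |\brc|$ also ranges over all of $\{0,1,\dots,|\brc|-1\}$ exactly once, because $j\mapsto i+(j-i)m$ is an affine bijection of $\mathbb{Z}/|\brc|\mathbb{Z}$ (again using that $m$ is invertible mod the prime $|\brc|$). Hence $\frac{1}{|\brc|}\sum_j \ket{i+(j-i)m}\bra{i+(j-i)m}_{\brc_1}=\mu_{\brc_1}$ independently of $i$, and $\sum_i\Psi^{i,i}_R=\Tr_{\brc_1}(\Psi_{RC_0}\otimes\ketbra{0}_{\qbit}\otimes\mu_{C_1})=\Psi_R$. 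Putting this together gives exactly $\Psi_R\otimes\mu_{\brc_1}$, as claimed. I expect no genuine obstacle here; the only mild subtlety is bookkeeping the relabeling $\cH_{\brc_1}\subseteq \cH_{\qbit}\otimes\cH_{C_0}\otimes\cH_{C_1}$ so that $\ketbra{0}_{\qbit}\otimes$ (input on $C_0$) $\otimes\,\mu_{C_1}$ is genuinely a state supported inside $\cH_{\brc_1}$, and being careful that the affine map $j\mapsto i+(j-i)m$ is viewed as acting on all of $\mathbb{Z}/|\brc|\mathbb{Z}$ (not just the relabeled $C\times C$ block), which is fine since $\mu_{\brc_2}$ already lives on the full space $\cH_{\brc}$.
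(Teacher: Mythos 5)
Your proof is correct and follows essentially the same route as the paper: the partial trace over $\brc_2$ forces $(i-i')m\equiv 0 \pmod{|\brc|}$, primality of $|\brc|$ with $m\neq 0$ collapses the off-diagonal terms, and the affine bijection $j\mapsto i+(j-i)m$ turns the diagonal contribution into $\mu_{\brc_1}$. The only cosmetic difference is that you expand the whole input state directly in the $\brc_1$ basis and conclude via $\sum_i\Psi^{i,i}_R=\Psi_R$, whereas the paper first establishes the rank-one identity on $\ket{i}\bra{i'}_{\brc_1}\otimes\mu_{\brc_2}$ and then expands $\Psi_{RC_0}$ over the $C_0$ basis; the content is identical.
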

\begin{proof}
Define $\delta_{i,i'}\defeq 1$ if $i=i'$ and $0$ otherwise. Consider 
\begin{eqnarray}
\label{crossvanish}
&&\Tr_{\brc_2}\left(U_{m}\left(\ket{i}\bra{i'}_{\brc_1}\otimes \mu_{\brc_2}\right)U^{\dagger}_{m}\right)\nonumber\\
&&= \frac{1}{|\brc|}\sum_{j=0}^{|\brc|-1} \Tr_{\brc_2}\left(U_{m}\left(\ket{i}\bra{i'}_{\brc_1}\otimes \ketbra{j}_{\brc_2}\right)U^{\dagger}_{m}\right)\nonumber\\
&& =\frac{1}{|\brc|}\sum_{j=0}^{|\brc|-1} \Tr_{\brc_2}\bigg(\ket{jm+ i(1-m)}\bra{jm+ i'(1-m)}_{\brc_1}\otimes\ket{j(m+1)- im}\bra{j(m+1)- i'm}_{\brc_2}\bigg)\nonumber\\ 
&& =\frac{1}{|\brc|}\sum_{j=0}^{|\brc|-1} \ket{jm+ i(1-m)}\bra{jm+ i'(1-m)}_{\brc_1}\cdot \delta_{i,i'}\nonumber\\
&&= \frac{1}{|\brc|}\sum_{j=0}^{|\brc|-1} \ketbra{jm+ i(1-m)}_{\brc_1}\cdot \delta_{i,i'}\nonumber\\
&&= \mu_{\brc_1}\cdot \delta_{i,i'},
\end{eqnarray}
where we have used the fact that for $0<m<|\brc|$ and $|\brc|$ prime, the quantity $jm+i(1-m)$ takes all possible values in $\{0,1,\ldots |\brc|-1\}$ as $j$ varies in $\{0,1, \ldots |\brc|-1\}$. For this, observe that for two $j,j'$, $$jm+i(1-m) = j'm+i(1-m)\implies (j-j')m=0,$$ which implies $j=j'$ as $m\neq 0$.
Now, expand $\Psi_{RC_0}= \sum_{c,c'}\Psi^{(c,c')}_R\otimes \ket{c}\bra{c'}_{C_0}$, where $\Psi^{(c,c')}_R$ are some matrices. Observe that $\Psi_R= \Tr_{C_0}(\Psi_{RC_0}) = \sum_{c}\Psi^{(c,c)}_R$. For any $m>0$, using Equation \ref{crossvanish}, we have
\begin{eqnarray*}
&&\Tr_{\brc_2}\left(U_{m}\left(\Psi_{RC_0}\otimes \ketbra{0}_{\qbit}\otimes\mu_{C_1}\otimes \mu_{\brc_2}\right)U^{\dagger}_{m}\right)\\
&& =\sum_{c,c',c_1}\frac{1}{|C|}\Psi^{(c,c')}_R\otimes \Tr_{\brc_2}\left(U_{m}\left(\ket{c}\bra{c'}_{C_0}\otimes\ketbra{0}_{\qbit}\otimes \ketbra{c_1}_{C_1}\otimes \mu_{\brc_2}\right)U^{\dagger}_{m}\right)\\
&&= \sum_{c,c',c_1}\frac{1}{|C|}\Psi^{(c,c')}_R\otimes \mu_{\brc_1}\cdot \delta_{c,c'}\\
&&= \sum_c \Psi^{(c,c)}_R\otimes \mu_{\brc_1} = \Psi_{R}\otimes \mu_{\brc_1},
\end{eqnarray*}
where we have used that fact that $c|C|+c_1 = c'|C|+c_1 \iff c=c'$.
This completes the proof.
\end{proof}
Now, we are in a position to prove our main result. Its proof appears in Section \ref{proofs:smallrand}.
\begin{theorem}
\label{main:theo}
Let $\Psi_{RC}$ be a quantum state and let $k\defeq \dmax{\Psi_{RC}}{\Psi_R\otimes \mu_{C}}$. For a subset $S\subseteq \{0,1,\ldots |\brc|-1\}$ of size $N\defeq |S|$, define the quantum state
$$\tau_{R\brc_1\brc_2}\defeq \frac{1}{N}\sum_{\ell\in S} U_{\ell}\left(\Psi_{RC_0}\otimes\ketbra{0}_{\qbit}\otimes \mu_{C_1}\otimes \mu_{\brc_2}\right) U_{\ell}^{\dagger}.$$
It holds that
$$\relent{\tau_{R\brc_1\brc_2}}{\Psi_R\otimes \mu_{\brc_1}\otimes\mu_{\brc_2}} \leq \log\left(1+\frac{2^{k+1}-1}{N}\right).$$ From Fact \ref{pinsker}, we conclude that
$$\F^2(\tau_{R\brc_1\brc_2},\Psi_R\otimes \mu_{\brc_1}\otimes\mu_{\brc_2}) \geq \frac{1}{1+\frac{2^{k+1}-1}{N}}.$$
\end{theorem}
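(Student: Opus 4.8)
The plan is to exploit the key identity from \cite{AnshuDJ14} quoted in the proof outline, namely
$$\relent{\sum_\ell p_\ell \rho_\ell}{\theta} = \sum_\ell p_\ell\left(\relent{\rho_\ell}{\theta} - \relent{\rho_\ell}{\rho}\right),$$
with $p_\ell = 1/N$, $\rho_\ell = U_\ell(\Psi_{RC_0}\otimes\ketbra{0}_{\qbit}\otimes\mu_{C_1}\otimes\mu_{\brc_2})U_\ell^{\dagger}$, $\rho = \tau_{R\brc_1\brc_2}$, and $\theta = \Psi_R\otimes\mu_{\brc_1}\otimes\mu_{\brc_2}$. Since relative entropy is non-negative, dropping the $-\relent{\rho_\ell}{\rho}$ terms is not what we want; instead I would keep them and, for each $\ell$, lower bound $\relent{\rho_\ell}{\tau}$ by picking out the single summand in $\tau$ corresponding to the same $\ell$: since $\tau \succeq \frac{1}{N}\rho_\ell$, we get $\relent{\rho_\ell}{\tau} \leq \log N$ — but more usefully, one typically rewrites the bound so that the unwanted term cancels. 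Concretely, I would first observe that for $\ell = 0$ the state $\rho_0 = \Psi_{RC_0}\otimes\ketbra{0}\otimes\mu_{C_1}\otimes\mu_{\brc_2}$ equals (after relabeling $\brc_1$) a state whose reduced density on $R\brc_1$ is $\Psi_{R\brc_1}$ in the "identified" sense, while for every $\ell \neq 0$, Lemma \ref{lem:Uprop2} tells us $\Tr_{\brc_2}\rho_\ell = \Psi_R\otimes\mu_{\brc_1}$, i.e.\ these states already have the desired decoupled marginal. So $\tau$ is the average of one "bad" state and $N-1$ "perfect" states (the role of $\brc_2$ being merely catalytic, returned to $\mu$, which I would verify separately from the explicit action of $U_\ell$).

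The main computation is then to bound $\relent{\tau}{\theta}$. Following the AnshuDJ14 template, I would split the index set $S$ into $\{0\}$ (assuming WLOG $0 \in S$; if not, relabel via Lemma \ref{Uprops}'s group structure $U_m U_\ell = U_{m+\ell}$, which lets us translate $S$) and $S\setminus\{0\}$. For $\ell \in S\setminus\{0\}$ the term $\relent{\rho_\ell}{\theta} = \relent{\rho_\ell}{\Psi_R\otimes\mu_{\brc_1}\otimes\mu_{\brc_2}}$; because $\Tr_{\brc_2}\rho_\ell = \Psi_R\otimes\mu_{\brc_1}$ and one checks $\rho_\ell$ restricted appropriately is a product with $\mu_{\brc_2}$ on the last register, this relative entropy vanishes. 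For $\ell = 0$, $\relent{\rho_0}{\theta} = \relent{\Psi_{R\brc_1}\otimes\mu_{\brc_2}}{\Psi_R\otimes\mu_{\brc_1}\otimes\mu_{\brc_2}} = \relent{\Psi_{R\brc_1}}{\Psi_R\otimes\mu_{\brc_1}}$; the point is that since $\cH_{\brc_1}\subseteq\cH_\qbit\otimes\cH_{C_0}\otimes\cH_{C_1}$ with $\Psi$ supported on the $\ket{0}_\qbit\otimes C_0$-part and $\mu_{C_1}$ uniform, this is at most $\dmax{\Psi_{RC}}{\Psi_R\otimes\mu_C} + \log(|\brc_1|/(2|C|^2))$ — and since $|C|^2 \le |\brc| \le 2|C|^2$, this extra additive term is at most $1$, giving the exponent $k+1$ rather than $k$ (this is exactly the source of the "$2^{k+1}$" in the statement versus "$2^k$" in Theorem \ref{paulisplit}). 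So $\frac{1}{N}\sum_\ell \relent{\rho_\ell}{\theta} \leq \frac{k+1}{N}$.

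Finally I would handle the $-\frac{1}{N}\sum_\ell\relent{\rho_\ell}{\tau}$ terms. Rather than bounding these individually, the cleaner route used in \cite{AnshuDJ14} is to convert the whole expression: one shows directly that $\relent{\tau}{\theta} \le \log\big(1 + (2^{k+1}-1)/N\big)$ by bounding $\Tr(\tau \log\tau) - \Tr(\tau\log\theta)$ via the operator inequality $\tau \preceq \frac{1}{N}(2^{k+1}-1)\theta + \theta = \big(1 + \frac{2^{k+1}-1}{N}\big)\theta$ on the relevant support — this follows because the $\ell=0$ term contributes $\frac{1}{N}\Psi_{R\brc_1}\otimes\mu_{\brc_2} \preceq \frac{1}{N}2^{k+1}\Psi_R\otimes\mu_{\brc_1}\otimes\mu_{\brc_2}$ (using $\dmax{\Psi_{R\brc_1}}{\Psi_R\otimes\mu_{\brc_1}}\le k+1$) while each other term equals $\frac{1}{N}\theta$ exactly (at least after tracing $\brc_2$ and checking the $\brc_2$-marginal is $\mu$); summing, $\tau \preceq \big(1+\frac{2^{k+1}-1}{N}\big)\theta$, and then monotonicity of $\log$ together with $\Tr(\tau\log\tau)\le \Tr(\tau\log((1+\tfrac{2^{k+1}-1}{N})\theta))$ gives the relative-entropy bound. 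The fidelity bound follows immediately from Pinsker-type Fact \ref{pinsker} as stated. I expect the main obstacle to be the careful bookkeeping of the Hilbert-space embeddings — making precise in what sense $\brc_1$ "is" $\qbit\otimes C_0\otimes C_1$, that $\Psi_{RC_0}\otimes\ketbra{0}_\qbit\otimes\mu_{C_1}$ sits inside $\cH_{\brc_1}$ with the right normalization, and that the operator inequality $\Psi_{R\brc_1}\preceq 2^{k+1}\Psi_R\otimes\mu_{\brc_1}$ genuinely holds with the claimed constant (this is where the factor-of-2 slack from Bertrand's postulate gets consumed). The group-theoretic facts (Lemma \ref{Uprops}) and the decoupling identity (Lemma \ref{lem:Uprop2}) do all the structural work; the rest is the now-standard convex-split relative-entropy estimate.
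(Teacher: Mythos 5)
Your proposal contains a genuine gap, and it sits exactly at the step that makes this theorem nontrivial. First, the claim that $\relent{\rho_\ell}{\theta}$ vanishes for $\ell\neq 0$ is false: since $U_\ell$ acts only on $\brc_1\brc_2$ and $\mu_{\brc_1}\otimes\mu_{\brc_2}$ is $U_\ell$-invariant, unitary invariance gives $\relent{\rho_\ell}{\theta}=\relent{\Psi_{RC_0}\otimes\ketbra{0}_{\qbit}\otimes\mu_{C_1}}{\Psi_R\otimes\mu_{\brc_1}}$ for \emph{every} $\ell$, a quantity of order $k$; Lemma \ref{lem:Uprop2} only says the $\brc_2$-marginal of $\rho_\ell$ is decoupled, not that $\rho_\ell$ equals $\theta$. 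So your intermediate bound $\frac{1}{N}\sum_\ell\relent{\rho_\ell}{\theta}\leq\frac{k+1}{N}$ is wrong. Second, and more importantly, your "cleaner route" rests on the global operator inequality $\tau\preceq\left(1+\frac{2^{k+1}-1}{N}\right)\theta$ on $R\brc_1\brc_2$, justified by "each other term equals $\frac{1}{N}\theta$ exactly (at least after tracing $\brc_2$)". Equality of the $\Tr_{\brc_2}$-marginals does not imply equality of the full states: for $\ell\neq 0$ the state $\rho_\ell$ carries correlations between $\brc_1$ and $\brc_2$ (indeed the index $\ell$ is recoverable from the pair of values in $\brc_1\brc_2$, since their difference determines $j-i$ and hence $\ell$ from the first coordinate). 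Consequently $\dmax{\tau}{\theta}$ stays of order $k$ no matter how large $N$ is, so no operator inequality of the form $\tau\preceq\left(1+\frac{2^{k+1}-1}{N}\right)\theta$ can hold in general; the $1/N$ improvement is a relative-entropy (not max-relative-entropy) phenomenon. A simple classical sanity check already kills it: with $R$ trivial and $\Psi_C$ a basis state, the likelihood ratio of $\tau$ against $\theta$ on the event that identifies the branch is $\approx 2^{k}$, independent of $N$.

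The missing idea is the order of operations in the paper's proof: keep the negative terms $-\relent{\rho_\ell}{\tau}$ coming from Fact \ref{elemeq}, use unitary invariance to rotate by $U_\ell^{\dagger}$, and then apply monotonicity under $\Tr_{\brc_2}$ (Fact \ref{fact:dataprocessing}) to \emph{those} terms. Only after this partial trace does Lemma \ref{lem:Uprop2} (together with the group law $U_\ell^{\dagger}U_m=U_{m-\ell}$ and the bound $\Psi_{RC_0}\otimes\ketbra{0}_{\qbit}\otimes\mu_{C_1}\preceq 2^{k+1}\Psi_R\otimes\mu_{\brc_1}$, where the $+1$ indeed comes from $|\brc|\leq 2|C|^2$) yield the operator inequality $\Tr_{\brc_2}\left(U_\ell^{\dagger}\tau U_\ell\right)\preceq\left(1+\frac{2^{k+1}-1}{N}\right)\Psi_R\otimes\mu_{\brc_1}$ on the \emph{reduced} system $R\brc_1$. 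Operator monotonicity of the logarithm then turns this into a lower bound on $\relent{\Psi_{RC_0}\otimes\ketbra{0}_{\qbit}\otimes\mu_{C_1}}{\Tr_{\brc_2}(U_\ell^{\dagger}\tau U_\ell)}$, which cancels exactly against the (equal, non-vanishing) positive terms $\relent{\rho_\ell}{\theta}$, leaving $\log\left(1+\frac{2^{k+1}-1}{N}\right)$. Your bookkeeping of the embedding $\cH_{\brc_1}\subseteq\cH_{\qbit}\otimes\cH_{C_0}\otimes\cH_{C_1}$ and the source of the exponent $k+1$ is correct, but without the partial-trace step the argument does not go through.
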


An immediate corollary is the smooth version of above result. 
\begin{corollary}
\label{cor:convexcomb}
Let $\eps, \delta\in (0,1)$ and $\Psi_{RC}$ be a quantum state. Let $k\defeq \log|C| - \hmineps{C}{R}{\frac{\eps}{2}}_{\Psi} + \log\frac{8}{\eps^3}$ and $N\geq \frac{2^{k+1}}{\delta^2}$. For a set $S\subseteq \{0,1,\ldots |\brc|-1\}$ of size $|S|=N$, define the quantum state
$$\tau_{R\brc_1\brc_2}\defeq \frac{1}{N}\sum_{\ell\in S} U_{\ell}\left(\Psi_{RC_0}\otimes\ketbra{0}_{\qbit}\otimes\mu_{C_1}\otimes \mu_{\brc_2}\right) U_{\ell}^{\dagger}.$$
It holds that
$$\Pur(\tau_{R\brc_1\brc_2}, \Psi_R\otimes \mu_{\brc_1}\otimes \mu_{\brc_2})\leq 2\eps+\delta.$$
\end{corollary}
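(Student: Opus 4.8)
The plan is to deduce the corollary from Theorem \ref{main:theo} by a standard smoothing argument, so that the only real work is bookkeeping with the purified distance and the triangle inequality. First I would pick a near-optimal smoothing state: let $\Psi'_{RC}\in\ball{\eps/2}{\Psi_{RC}}$ be a state achieving (up to a negligible slack, which one can absorb) the smooth conditional min-entropy, so that $\hmin{C}{R}_{\Psi'}\geq \hmineps{C}{R}{\eps/2}_{\Psi}$, i.e. there is a state $\sigma_R$ with $\Psi'_{RC}\preceq 2^{-\hmineps{C}{R}{\eps/2}_{\Psi}}\,\id_C\otimes\sigma_R$. The point of the extra $\log(8/\eps^3)$ term in the definition of $k$ is to pass from this min-entropy bound to a bound of the form $\dmax{\Psi'_{RC}}{\Psi'_R\otimes\mu_C}\leq k$; this is the standard fact (see e.g. the min-entropy/max-relative-entropy dictionary) that conditioning against the \emph{actual} marginal $\Psi'_R$ rather than an arbitrary $\sigma_R$, and replacing $\sigma_R$-normalization issues, costs at most an additive $\log(8/\eps^3)$ (with a further $\eps/2$ of smoothing, which I would fold into choosing $\Psi'$ slightly more carefully or into a second smoothing step). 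Either way, after this step we have a state $\Psi'_{RC}$ with $\Pur(\Psi_{RC},\Psi'_{RC})\le 2\eps$ and $\dmax{\Psi'_{RC}}{\Psi'_R\otimes\mu_C}\le k$.

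Next I would apply Theorem \ref{main:theo} to $\Psi'_{RC}$ (not to $\Psi_{RC}$): forming $\tau'_{R\brc_1\brc_2}\defeq \frac1N\sum_{\ell\in S}U_\ell(\Psi'_{RC_0}\otimes\ketbra{0}_\qbit\otimes\mu_{C_1}\otimes\mu_{\brc_2})U_\ell^\dagger$, the theorem gives
$$\F^2(\tau'_{R\brc_1\brc_2},\Psi'_R\otimes\mu_{\brc_1}\otimes\mu_{\brc_2})\ \ge\ \frac{1}{1+\frac{2^{k+1}-1}{N}}\ \ge\ \frac{1}{1+\delta^2},$$
using $N\ge 2^{k+1}/\delta^2$. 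Hence $\Pur(\tau'_{R\brc_1\brc_2},\Psi'_R\otimes\mu_{\brc_1}\otimes\mu_{\brc_2})\le \delta/\sqrt{1+\delta^2}\le\delta$.

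Finally I would transfer this back to the unsmoothed objects by two invocations of the data-processing (monotonicity) of purified distance under CPTP maps, plus the triangle inequality. The map $\rho_{RC_0}\mapsto \frac1N\sum_\ell U_\ell(\rho_{RC_0}\otimes\ketbra0_\qbit\otimes\mu_{C_1}\otimes\mu_{\brc_2})U_\ell^\dagger$ is CPTP, so $\Pur(\tau_{R\brc_1\brc_2},\tau'_{R\brc_1\brc_2})\le\Pur(\Psi_{RC_0},\Psi'_{RC_0})=\Pur(\Psi_{RC},\Psi'_{RC})\le 2\eps$; and tracing out everything but $R$ is CPTP, so $\Pur(\Psi'_R,\Psi_R)\le\Pur(\Psi'_{RC},\Psi_{RC})\le2\eps$, whence $\Pur(\Psi'_R\otimes\mu_{\brc_1}\otimes\mu_{\brc_2},\Psi_R\otimes\mu_{\brc_1}\otimes\mu_{\brc_2})\le 2\eps$. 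Combining the three estimates by the triangle inequality for $\Pur(\cdot,\cdot)$,
$$\Pur(\tau_{R\brc_1\brc_2},\Psi_R\otimes\mu_{\brc_1}\otimes\mu_{\brc_2})\ \le\ 2\eps+\delta+2\eps,$$
and here I should note that the factors of $2\eps$ can be tightened: the honest way to land exactly on the claimed $2\eps+\delta$ is to observe that the two $\eps$-contributions are really the \emph{same} smoothing (the marginal bound is inherited from the joint one), so a single application of monotonicity on the composed channel "apply the $U_\ell$ mixture, then discard $\brc_1\brc_2$'' already gives $\Pur(\tau_{R\brc_1\brc_2},\Psi'_R\otimes\mu_{\brc_1}\otimes\mu_{\brc_2})\le\Pur(\tau_{R\brc_1\brc_2},\tau'_{R\brc_1\brc_2})+\delta\le 2\eps+\delta$, wait — I would instead write it as $\Pur(\tau,\Psi_R\otimes\mu\otimes\mu)\le\Pur(\tau,\tau')+\Pur(\tau',\Psi'_R\otimes\mu\otimes\mu)+\Pur(\Psi'_R\otimes\mu\otimes\mu,\Psi_R\otimes\mu\otimes\mu)$ and absorb two of these into the stated $2\eps$ by being slightly more generous with the smoothing radius, which is the route the paper almost certainly takes. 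The only genuinely delicate point is the first paragraph: getting the clean implication "$\hmineps{C}{R}{\eps/2}_\Psi$ large $\Rightarrow$ $\dmax{\Psi'_{RC}}{\Psi'_R\otimes\mu_C}\le k$ for some $\Psi'$ within $2\eps$ of $\Psi$,'' including correctly accounting for the $\log(8/\eps^3)$ and the replacement of the optimizing $\sigma_R$ by the true marginal — everything after that is routine.
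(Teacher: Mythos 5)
Your strategy is exactly the paper's: replace $\Psi_{RC}$ by a nearby state $\Psi'_{RC}$ with a genuine $\dmax{\Psi'_{RC}}{\Psi'_R\otimes\mu_C}\leq k$ bound, apply Theorem \ref{main:theo} to $\Psi'$ (your computation $\F^2\geq\frac{1}{1+\delta^2}\Rightarrow\Pur\leq\delta$ is the same one the paper uses), and then transfer back by data processing and the triangle inequality. The one genuine issue is the bookkeeping of the smoothing radius, which is where your bound comes out as $4\eps+\delta$ instead of the claimed $2\eps+\delta$. The "delicate point" you flag in the first paragraph is not something the paper re-derives: it is exactly Fact \ref{dmaxhmin} (itself quoting a marginal-substitution lemma from earlier work), and the paper invokes it with smoothing parameter $\eps/2$, so that the resulting state satisfies $\Psi'_{RC}\in\ball{\eps}{\Psi_{RC}}$ — radius $\eps$, not $2\eps$. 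With that radius, the three triangle-inequality legs are $\Pur(\tau,\tau')\leq\eps$ (monotonicity under the CPTP map $\rho\mapsto\frac1N\sum_\ell U_\ell(\rho\otimes\ketbra{0}_{\qbit}\otimes\mu_{C_1}\otimes\mu_{\brc_2})U_\ell^\dagger$), $\Pur(\tau',\Psi'_R\otimes\mu_{\brc_1}\otimes\mu_{\brc_2})\leq\delta$, and $\Pur(\Psi'_R\otimes\mu_{\brc_1}\otimes\mu_{\brc_2},\Psi_R\otimes\mu_{\brc_1}\otimes\mu_{\brc_2})\leq\eps$, summing to exactly $2\eps+\delta$.

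By contrast, your proposed repairs of the $4\eps$ do not work as stated. The two $\eps$-legs are \emph{not} "the same smoothing": one is the distance between $\tau$ and $\tau'$, the other between $\Psi'_R\otimes\mu_{\brc_1}\otimes\mu_{\brc_2}$ and $\Psi_R\otimes\mu_{\brc_1}\otimes\mu_{\brc_2}$, and no single application of monotonicity lets you drop either — the paper also pays both, and the total is $2\eps$ only because each leg is at most the smoothing radius $\eps$. So the fix is not to merge terms or to "be more generous with the smoothing radius," but to use Fact \ref{dmaxhmin} at parameter $\eps/2$, which simultaneously supplies the additive $\log\frac{8}{\eps^3}$-type slack in the definition of $k$ and pins the ball radius at $\eps$. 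With that single correction your argument coincides with the paper's proof.
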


\subsection{Implementation of the unitary in Definition \ref{def:Uell}}
\label{unitimp}
The circuit size of the decoupling unitary in Definition \ref{def:Uell} can be bounded as follows. The relabeling $\ket{c_0,c_1}_{C_0C_1}\rightarrow \ket{c_0|C|+c_1}_{\brc_1}$ can be performed by the multiplication algorithm of Sch{\"o}nhage and Strassen \cite{SchonS71} using a circuit of size $\mathcal{O}(\log|C|\log\log|C|)$ and depth $\mathcal{O}(\log\log|C|)$.  Thus, we focus on unitary transformation over the basis $\{\ket{i}\}_{i=0}^{|\brc|}$ for $\cH_{\brc_1}$. From Definition \ref{def:Uell}, we have
$$U = \sum_{i,j, \ell} \ket{j\ell+i(1-\ell)}_{\brc_1}\ket{j(\ell+1)-i\ell}_{\brc_2}\bra{i}_{\brc_1}\bra{j}_{\brc_2}\otimes \ketbra{\ell}_L.$$
Using McLaughlin's algorithm \cite{McLaughlin04} based on the algorithm of Sch{\"o}nhage and Strassen \cite{SchonS71} (see \cite[Section 2.4.3]{BrentZ10} for details), and the standard techniques of reversible computing \cite{Toffoli80, SaeediM13}, the following transformation can be achieved with a circuit of size $\mathcal{O}(\log|C|\log\log|C|)$ and depth $\mathcal{O}(\log\log|C|)$:
\begin{eqnarray*}
W_1&:&\ket{i}_{\brc_1}\ket{j}_{\brc_2}\ket{\ell}_L\ket{0}_{\brc'_1}\ket{0}_{\brc'_2}\\&\rightarrow&\ket{i}_{\brc_1}\ket{j}_{\brc_2}
\ket{\ell}_L\ket{0+j\ell+i(1-\ell)}_{\brc'_1}\ket{0+j(\ell+1)-i\ell}_{\brc'_2}\\
&=&\ket{i}_{\brc_1}\ket{j}_{\brc_2}\ket{\ell}_L\ket{j\ell+i(1-\ell)}_{\brc'_1}\ket{j(\ell+1)-i\ell}_{\brc'_2}.
\end{eqnarray*}
$W_1$ also uses $\mathcal{O}(\log|C|)$ ancillary qubits in initial state $\ket{0}$, which are returned in the initial state after the computation. Now, we swap registers $\brc_1,\brc'_1$ and $\brc_2, \brc'_2$:\
\begin{eqnarray*}
S &:&\ket{i}_{\brc_1}\ket{j}_{\brc_2}\ket{\ell}_L\ket{j\ell+i(1-\ell)}_{\brc'_1}\ket{j(\ell+1)-i\ell}_{\brc'_2}\\
&\rightarrow& \ket{j\ell+i(1-\ell)}_{\brc_1}\ket{j(\ell+1)-i\ell}_{\brc_2}\ket{\ell}_L\ket{i}_{\brc'_1}\ket{j}_{\brc'_2}.
\end{eqnarray*}
Swapping two qubits requires three CNOT gates. Hence this operation can be done in depth $3$. Finally we observe that 
$$i= (\ell+1)\cdot\left(j\ell+i(1-\ell)\right) - \ell\cdot\left(j(\ell+1)-i\ell\right)$$ and $$j=\ell\cdot\left(j\ell+i(1-\ell)\right) + (\ell-1)\cdot\left(j(\ell+1)-i\ell\right).$$
Thus, using the aforementioned circuit for modular multiplication and addition, we can achieve the transformation:
\begin{eqnarray*}
W_2&:&\ket{j\ell+i(1-\ell)}_{\brc_1}\ket{j(\ell+1)-i\ell}_{\brc_2}\ket{\ell}_L\ket{i}_{\brc'_1}\ket{j}_{\brc'_2}\\ 
&\rightarrow& \ket{j\ell+i(1-\ell)}_{\brc_1}\ket{j(\ell+1)-i\ell}_{\brc_2}\ket{\ell}_L\ket{i-i}_{\brc'_1}\ket{j-j}_{\brc'_2}\\
&=& \ket{j\ell+i(1-\ell)}_{\brc_1}\ket{j(\ell+1)-i\ell}_{\brc_2}\ket{\ell}_L\ket{0}_{\brc'_1}\ket{0}_{\brc'_2}.
\end{eqnarray*}
Thus, $U$ can be implemented as $U= W_2SW_1$. All of these constructions can be implemented using the Toffoli gate \cite{Toffoli80}.  Hence, the overall size of the circuit is $\mathcal{O}(\log|C|\log\log|C|)$, depth is $\mathcal{O}(\log\log|C|)$ and additional ancillary $\mathcal{O}(\log|C|)$ qubits initialized in $\ket{0}$, that are reset to $\ket{0}$, are used.

\section{Decoupling up to the max-mutual information using a flattening procedure}
\label{sec:maxmutdec}

 We introduce a close variant of the embezzling state \cite{DamH03}.
\begin{definition}
\label{def:embez}
Let $a,n$ be positive integers such that $n\geq a$ and let $D$ be a register satisfying $|D|\geq n-a$. Define
$$\xi^{a:n}_D\defeq \frac{1}{S(a,n)}\sum_{j=a}^n\frac{1}{j}\ketbra{j}_D,$$
where $S(a,n)\defeq \sum_{j=a}^n\frac{1}{j}$ is the normalization factor. Define
$$\ket{\xi^{a:n}}_{D'D}\defeq \frac{1}{\sqrt{S(a:n)}}\sum_{j=a}^n\frac{1}{\sqrt{j}}\ket{j}_{D'}\ket{j}_D$$ as the canonical purification of $\xi^{a:n}_D$, where $D'\equiv D$.
\end{definition}
We have the following claim, which is a variant of the property of embezzling states proved in \cite{DamH03}.
\begin{claim}
\label{embezclose}
Let $\delta\in(0,\frac{1}{15})$ and $a,b,n$ be integers such that $n\geq a^{\frac{1}{\delta}}, a\geq 2$ and $a\geq b$. Fix registers $D,E$ satisfying $|D|\geq n$ and $|E|\geq b$. Let $W_b$ be the unitary that acts as $$W_b\ket{j}_D\ket{0}_E = \ket{\lfloor j/b\rfloor}_D\ket{j\mmod{b}}_E.$$
It holds that 
$$W_b \left(\xi_D^{a:n}\otimes \ketbra{0}_E\right) W_b^{\dagger} \preceq (1+15\delta) \xi^{1:n}_D\otimes \frac{1}{b}\sum_{e=0}^{b-1}\ketbra{e}_E.$$
\end{claim}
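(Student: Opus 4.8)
The plan is to compute the left-hand side explicitly and compare it term-by-term with the right-hand side. First I would observe that $W_b$ maps the basis vector $\ket{j}_D\ket{0}_E$ to $\ket{\lfloor j/b\rfloor}_D\ket{j \bmod b}_E$, so that
$$W_b\left(\xi_D^{a:n}\otimes\ketbra{0}_E\right)W_b^\dagger = \frac{1}{S(a,n)}\sum_{j=a}^n \frac{1}{j}\,\ketbra{\lfloor j/b\rfloor}_D\otimes\ketbra{j\bmod b}_E.$$
Grouping the sum by the value $q=\lfloor j/b\rfloor$ and the residue $e=j\bmod b$, each pair $(q,e)$ with $q\geq 1$ contributes the single term $j=qb+e$ (and for the partial block near $j=a$ some pairs are missing, which only helps the $\preceq$ direction). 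So it suffices to show, for every $q\geq 1$ and every $e\in\{0,\ldots,b-1\}$ that actually appears,
$$\frac{1}{S(a,n)}\cdot\frac{1}{qb+e} \;\leq\; (1+15\delta)\cdot\frac{1}{S(1,n)}\cdot\frac{1}{q}\cdot\frac{1}{b},$$
i.e. $\frac{qb}{qb+e}\cdot\frac{S(1,n)}{S(a,n)}\leq 1+15\delta$. Since $\frac{qb}{qb+e}\leq 1$, it is enough to bound $\frac{S(1,n)}{S(a,n)}\leq 1+15\delta$.

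The heart of the argument is therefore the estimate $S(1,n)\leq(1+15\delta)S(a,n)$, which I would get from standard bounds on harmonic sums: $S(a,n)=\sum_{j=a}^n \frac1j \geq \ln(n/a) = \ln n - \ln a$ and $S(1,n)=\sum_{j=1}^n\frac1j \leq 1+\ln n \leq \ln n + 1$. Using the hypothesis $n\geq a^{1/\delta}$, we have $\ln a \leq \delta\ln n$, so $S(a,n)\geq (1-\delta)\ln n$ and hence
$$\frac{S(1,n)}{S(a,n)} \leq \frac{\ln n + 1}{(1-\delta)\ln n} = \frac{1}{1-\delta}\left(1+\frac{1}{\ln n}\right).$$
Since $a\geq 2$ and $n\geq a^{1/\delta}\geq 2^{1/\delta}$, we get $\ln n \geq \frac{\ln 2}{\delta} > \frac{1}{2\delta}$, so $\frac{1}{\ln n} < 2\delta$; combined with $\frac{1}{1-\delta}\leq 1+2\delta$ for $\delta<\frac12$, this gives $\frac{S(1,n)}{S(a,n)} \leq (1+2\delta)(1+2\delta) \leq 1+15\delta$ for $\delta\in(0,\frac{1}{15})$ (the constant $15$ is generous; I would just check the crude bound $(1+2\delta)^2 = 1+4\delta+4\delta^2 \leq 1+5\delta \leq 1+15\delta$).

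The remaining point is bookkeeping about the support: the operator on the left is supported on $\ket{q}_D\otimes\ket{e}_E$ for $1\leq q\leq\lfloor n/b\rfloor$, $0\leq e\leq b-1$, which is contained in the support of $\xi_D^{1:n}\otimes\frac1b\sum_{e=0}^{b-1}\ketbra{e}_E$ (here I use $|D|\geq n$ and $|E|\geq b$), so the Löwner inequality $\preceq$ makes sense; on each such one-dimensional block the inequality reduces to the scalar inequality verified above, and blocks that appear on the right but not on the left only make the right-hand side larger. I expect the main (minor) obstacle to be pinning down the constants in the harmonic-sum estimates cleanly enough that they fit under $1+15\delta$ uniformly for all $\delta\in(0,\frac{1}{15})$ — in particular being careful that $\ln n$ is large enough, which is exactly where the hypotheses $a\geq 2$ and $n\geq a^{1/\delta}$ are used together. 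Everything else is a direct computation.
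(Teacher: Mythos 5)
Your proposal is correct and follows essentially the same route as the paper: conjugate the diagonal state by $W_b$, bound each coefficient via $\frac{1}{qb+e}\leq\frac{1}{qb}$, and reduce everything to the ratio bound $\frac{S(1,n)}{S(a,n)}\leq 1+15\delta$ using $n\geq a^{1/\delta}$ and $a\geq b$ (so $\lfloor j/b\rfloor\geq 1$). The only cosmetic difference is that you establish the ratio bound with integral estimates on the harmonic sum, whereas the paper invokes the bound $|S(a,n)-\log\frac{n}{a}|\leq 4$; both give the stated constant.
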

\begin{proof}
Consider
\begin{eqnarray*}
&&W_b \left(\xi_D^{a:n}\otimes \ketbra{0}_E\right) W_b^{\dagger}\\
&&=\frac{1}{S(a,n)}\sum_{j=a}^n\frac{1}{j}W_b \left(\ketbra{j}_D\otimes \ketbra{0}_E\right) W_b^{\dagger}\\
&&=\frac{1}{S(a,n)}\sum_{j=a}^n\frac{1}{j}\ketbra{\lfloor j/b\rfloor}_D\otimes \ketbra{j\mmod{b}}_E\\
&&=\frac{1}{S(a,n)}\sum_{j'=\lfloor\frac{a}{b}\rfloor}^{\lfloor\frac{n}{b}\rfloor}\sum_{e=0}^{b-1}\frac{1}{bj'+e}\ketbra{j'}_D\otimes \ketbra{e}_E\\
&&\preceq \frac{1}{S(a,n)}\sum_{j'=\lfloor\frac{a}{b}\rfloor}^{\lfloor\frac{n}{b}\rfloor}\sum_{e=0}^{b-1}\frac{1}{bj'}\ketbra{j'}_D\otimes \ketbra{e}_E\\
&&=\frac{1}{S(a,n)}\sum_{j'=\lfloor\frac{a}{b}\rfloor}^{\lfloor\frac{n}{b}\rfloor}\frac{1}{j'}\ketbra{j'}_D\otimes \sum_{e=0}^{b-1}\frac{1}{b}\ketbra{e}_E \preceq \frac{S(1,n)}{S(a,n)} \xi^{1:n}_D\otimes \sum_{e=0}^{b-1}\frac{1}{b}\ketbra{e}_E.
\end{eqnarray*}
Now, as shown in \cite{Mascheroni1790}, $|S(a,n) - \log\frac{n}{a}| \leq 4$. Thus, 
$$\frac{S(1,n)}{S(a,n)} \leq \frac{\log n +4}{\log n - \log a - 4} \leq \frac{1+4\delta}{1-5\delta} \leq 1+15\delta.$$
This completes the proof.
\end{proof}

Following claim shows how to `unembezzle' a state.
\begin{claim}
\label{unembezclose}
Fix the integers $n,b,a$ as given in Claim \ref{embezclose}. Let the register $D$ satisfy $n^2\geq |D|\geq (n+1)b$. Let $W_b$ be as defined in Claim \ref{embezclose}. It holds that
$$W_b^{\dagger}\left( \xi^{1:n}_D\otimes \frac{1}{b}\sum_{e=0}^{b-1}\ketbra{e}_E\right) W_b \preceq 4\cdot \xi^{1:|D|}_D\otimes\ketbra{0}_E.$$
\end{claim}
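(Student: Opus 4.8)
The plan is to mirror the computation in the proof of Claim~\ref{embezclose}, but now reading $W_b$ backwards. Recall $W_b\ket{j}_D\ket{0}_E = \ket{\lfloor j/b\rfloor}_D\ket{j\bmod b}_E$, so $W_b^{\dagger}\ket{j'}_D\ket{e}_E = \ket{bj'+e}_D\ket{0}_E$ whenever $0\le e\le b-1$. First I would apply this directly to the spectral decomposition of $\xi^{1:n}_D\otimes\frac{1}{b}\sum_{e=0}^{b-1}\ketbra{e}_E$, obtaining
$$
W_b^{\dagger}\!\left(\xi^{1:n}_D\otimes\frac{1}{b}\sum_{e=0}^{b-1}\ketbra{e}_E\right)\!W_b
=\frac{1}{S(1,n)}\sum_{j'=1}^{n}\sum_{e=0}^{b-1}\frac{1}{b\,j'}\,\ketbra{bj'+e}_D\otimes\ketbra{0}_E.
$$
The index $bj'+e$, as $j'$ ranges over $\{1,\dots,n\}$ and $e$ over $\{0,\dots,b-1\}$, runs over the integers in $[b,(n+1)b-1]$, each hit exactly once; the constraint $|D|\ge (n+1)b$ guarantees these all lie in the register. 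So this is a diagonal operator $\sum_{m=b}^{(n+1)b-1}\alpha_m\ketbra{m}_D\otimes\ketbra{0}_E$ with $\alpha_m=\frac{1}{S(1,n)\,b\,\lfloor m/b\rfloor}$.

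Next I would bound $\alpha_m$ against the corresponding eigenvalue $\frac{1}{S(1,|D|)\,m}$ of $\xi^{1:|D|}_D$. The key elementary inequality is $b\lfloor m/b\rfloor \ge b(m/b - 1) = m - b \ge m/2$ for $m\ge 2b$ (and for $m$ in the range $[b,2b)$, i.e.\ $j'=1$, one has $b\lfloor m/b\rfloor = b$ while $m<2b$, so $b\lfloor m/b\rfloor \ge m/2$ there too). Hence $\alpha_m \le \frac{2}{S(1,n)\,m}$. Combining this with the normalization ratio, $\alpha_m \le \frac{2\,S(1,|D|)}{S(1,n)}\cdot\frac{1}{S(1,|D|)\,m}$, so that
$$
W_b^{\dagger}\!\left(\xi^{1:n}_D\otimes\frac{1}{b}\sum_{e=0}^{b-1}\ketbra{e}_E\right)\!W_b
\;\preceq\; \frac{2\,S(1,|D|)}{S(1,n)}\;\xi^{1:|D|}_D\otimes\ketbra{0}_E.
$$
It remains to control the ratio $S(1,|D|)/S(1,n)$. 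Using $|S(1,m)-\log m|\le 4$ from \cite{Mascheroni1790} (more precisely $\log m \le S(1,m)\le \log m + 4$ together with $S(1,m)\ge \log m \ge 1$), and the hypothesis $n^2 \ge |D|$, one gets $S(1,|D|)\le \log|D| + 4 \le 2\log n + 4$ and $S(1,n)\ge \log n \ge 1$; a short case analysis (e.g.\ $\log n \ge 4$ versus $\log n < 4$, the latter meaning $n$ is bounded and can be absorbed) yields $S(1,|D|)/S(1,n)\le 2$, hence the overall factor $2\cdot 2 = 4$, giving the claimed bound $\preceq 4\cdot\xi^{1:|D|}_D\otimes\ketbra{0}_E$.

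The only mildly delicate point — the ``main obstacle'' — is the second step: making sure the floor in $b\lfloor m/b\rfloor$ does not cost more than a factor of $2$, uniformly over the relevant range of $m$, including the smallest block $j'=1$ where the crude bound $m-b$ is not yet available. This is handled by the direct observation above that $b\lfloor m/b\rfloor \ge b \ge m/2$ whenever $m < 2b$ and $b\lfloor m/b\rfloor \ge m-b \ge m/2$ whenever $m\ge 2b$. Everything else is bookkeeping: tracking that the relabeling is a bijection onto an interval of length $nb$ fitting inside $D$, and plugging the harmonic-sum estimate of \cite{Mascheroni1790} into the normalization ratio exactly as in Claim~\ref{embezclose}. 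One should also double-check the constant $a\ge 2$ and $a\ge b$ from the hypotheses of Claim~\ref{embezclose} are only needed to make $n\ge a^{1/\delta}$ meaningful and are not otherwise used here, so the statement goes through with the stated hypotheses on $n,b,|D|$ alone.
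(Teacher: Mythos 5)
Your proposal is correct and follows essentially the same route as the paper's proof: conjugate the spectral decomposition by $W_b^{\dagger}$, absorb the mismatch between $\frac{1}{bj'}$ and $\frac{1}{bj'+e}$ at the cost of a factor $2$ (your $b\lfloor m/b\rfloor\geq m/2$ is exactly the paper's $e<b\leq jb$ bound), compare term-by-term with $\xi^{1:|D|}_D$, and pay another factor $2$ for the normalization ratio using $|D|\leq n^2$. One caution on that last step: your sketched justification leans on $S(1,m)\geq \log m$ (base $2$), which is false for $m\gtrsim 16$, and even granting it the arithmetic gives $2+4/\log n$ rather than $2$; the clean argument is $S(1,|D|)\leq S(1,n^2)=S(1,n)+\sum_{j=n+1}^{n^2}\frac{1}{j}\leq S(1,n)+\ln n\leq 2\,S(1,n)$, which recovers the stated constant $4$.
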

\begin{proof}
We observe that $W^{\dagger}_b\ket{j}_D\ket{e}_E = \ket{jb+e}_D\ket{0}_E$ for all $j\leq n$ and $e<b$. We leave the action of $W^{\dagger}_b$ unspecified for $j\geq n, e\geq b$. Consider
\begin{eqnarray*}
W_b^{\dagger}\left( \xi^{1:n}_D\otimes \frac{1}{b}\sum_{e=0}^{b-1}\ketbra{e}_E\right) W_b&=& \frac{1}{S(1,n)}\sum_{j=1}^n\sum_{e=0}^{b-1}\frac{1}{jb}W_b^{\dagger}\ketbra{j}_D\otimes \ketbra{e}_EW_b\\
&=& \frac{1}{S(1,n)}\sum_{j=1}^n\sum_{e=0}^{b-1}\frac{1}{jb}\ketbra{jb+e}_D\otimes\ketbra{0}_E\\
&\preceq & \frac{2}{S(1,n)}\sum_{j=1}^n\sum_{e=0}^{b-1}\frac{1}{jb+e}\ketbra{jb+e}_D\otimes\ketbra{0}_E\\
&\preceq& \frac{2}{S(1,n)}\sum_{j'=1}^{nb+b}\frac{1}{j'}\ketbra{j'}_D\otimes\ketbra{0}_E\\
&\preceq& \frac{2}{S(1,n)}\sum_{j'=1}^{|D|}\frac{1}{j'}\ketbra{j'}_D\otimes\ketbra{0}_E\\
&=& \frac{2S(1, |D|)}{S(1,n)}\xi^{1:|D|}_D\otimes\ketbra{0}_E\\
&\preceq& 4 \xi^{1:|D|}_D\otimes\ketbra{0}_E,
\end{eqnarray*}
where in the last operator inequality, we use the fact that $|D|\leq n^2$. This completes the proof. 
\end{proof}

A `purified version' of above claims is the following restatement of the result in \cite{DamH03}.
\begin{claim}
\label{purembezzle}
Let $\delta\in (0,\frac{1}{25})$. Let $a,b,n$ be positive integers such that $n\geq a^{\frac{1}{\delta}}, a\geq b/\delta$ and let $D$ be a register satisfying $|D|\geq n-a$. Let $\ket{\mu}_{E'E}\defeq \frac{1}{\sqrt{b}}\sum_{e=0}^{b-1}\ket{e}_{E'}\ket{e}_E$. It holds that
$$\Pur\left(\left(W_b\otimes W_b\right)\left(\xi^{a:n}_{D'D}\otimes \ketbra{0}_{E'}\otimes \ketbra{0}_E\right)\left(W^{\dagger}_b\otimes W^{\dagger}_b\right), \xi^{1:n}_{D'D}\otimes \mu_{EE'}\right)\leq 5\sqrt{\delta}.$$
\end{claim}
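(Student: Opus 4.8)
The plan is to reduce the purified (fidelity-based) statement to the two operator inequalities already established in Claims \ref{embezclose} and \ref{unembezclose}, by passing through max-relative entropy and then converting back to purified distance. First I would observe that applying $W_b\otimes W_b$ to the canonical purification $\xi^{a:n}_{D'D}\otimes\ketbra{0}_{E'}\otimes\ketbra{0}_E$ and then comparing with $\xi^{1:n}_{D'D}\otimes\mu_{EE'}$ is, on the level of the reduced states on $DE$, exactly the content of Claim \ref{embezclose}: we have $W_b(\xi^{a:n}_D\otimes\ketbra{0}_E)W_b^\dagger \preceq (1+15\delta)\,\xi^{1:n}_D\otimes\tfrac1b\sum_e\ketbra{e}_E$, i.e. $\dmax{W_b(\xi^{a:n}_D\otimes\ketbra{0}_E)W_b^\dagger}{\xi^{1:n}_D\otimes\tfrac1b\sum_e\ketbra e_E}\leq \log(1+15\delta)\leq 22\delta$ (using $\log(1+x)\le x/\ln2$ and $\delta<1/25$, with room to spare). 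Symmetrically, Claim \ref{unembezclose} gives the reverse-direction bound, and this pair of one-sided operator inequalities is what pins down both reduced states simultaneously.

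Next I would lift this to the purifications. The states $(W_b\otimes W_b)(\xi^{a:n}_{D'D}\otimes\ketbra0_{E'E})(W_b^\dagger\otimes W_b^\dagger)$ and $\xi^{1:n}_{D'D}\otimes\mu_{EE'}$ are both pure, so their fidelity is just the squared overlap $|\braket{\phi_1}{\phi_2}|^2$. The key step is to compute this overlap directly: write $\ket{\xi^{a:n}}_{D'D}=\tfrac{1}{\sqrt{S(a,n)}}\sum_{j=a}^n \tfrac1{\sqrt j}\ket j_{D'}\ket j_D$, apply $W_b$ on the $DE$ side (sending $\ket j_D\ket 0_E$ to $\ket{\lfloor j/b\rfloor}_D\ket{j\bmod b}_E$) and on the $D'E'$ side, and take the inner product with $\xi^{1:n}_{D'D}\otimes\mu_{EE'}$. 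Each term $j$ contributes $\tfrac1j$ on both sides, the $E,E'$ factors contribute $\tfrac1{\sqrt b}\braket{j\bmod b}{e}$ summed against $\mu$, and after collecting, the overlap reduces to a ratio of the form $\tfrac{1}{\sqrt{S(a,n)S(1,n)}}\sum_{j=a}^n \tfrac1j \cdot(\text{something of order }1)$, which by the harmonic-sum estimate $|S(a,n)-\log(n/a)|\le 4$ (from \cite{Mascheroni1790}) and $n\ge a^{1/\delta}$ is at least $\sqrt{S(a,n)/S(1,n)}\cdot(1-O(\delta))\ge 1-O(\delta)$. One must be slightly careful that the images $\lfloor j/b\rfloor$ for $j$ ranging over $[a,n]$ land in $[1,n]$ so they are genuinely in the support of $\xi^{1:n}$, and that the $\ket{j\bmod b}_E$ values are uniformly distributed enough that summing against $\ket\mu_{EE'}$ loses only the expected $1/\sqrt b$-type factors — the hypothesis $a\ge b/\delta$ is exactly what makes the "boundary" blocks (where $j$ ranges over an incomplete residue class mod $b$) contribute a negligible $O(\delta)$ correction. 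This yields $\F\big(\cdot,\cdot\big)\ge 1-O(\delta)$, hence $\Pur=\sqrt{1-\F^2}\le\sqrt{2\cdot O(\delta)}\le 5\sqrt\delta$ once the constants are tracked.

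Alternatively, and perhaps more cleanly, I would avoid recomputing the overlap from scratch and instead invoke Claim \ref{embezclose} together with the standard relation between max-relative entropy and purified distance for the specific case at hand: $\dmax{\rho}{\sigma}\le\log(1+\epsilon)$ with $\sigma$ and $\rho$ both of trace one forces $\F^2(\rho,\sigma)\ge \Tr\sqrt\rho\sqrt\sigma)^2$ which one bounds below using $\rho\preceq(1+\epsilon)\sigma$ and $\Tr\rho=\Tr\sigma=1$ to get $\F^2(\rho,\sigma)\ge 1/(1+\epsilon)$ (this is exactly the mechanism in Fact \ref{pinsker}-type arguments and appears in the move from Theorem \ref{main:theo} to its fidelity corollary). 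Applying this with $\rho=W_b(\xi^{a:n}_D\otimes\ketbra0_E)W_b^\dagger$ and the appropriate reference state gives $\Pur\le\sqrt{1-1/(1+15\delta)}=\sqrt{15\delta/(1+15\delta)}\le 4\sqrt\delta$ on the level of the $DE$-marginals; then Uhlmann's theorem, together with the fact that $(W_b\otimes W_b)(\ket{\xi^{a:n}}\otimes\ket0\ket0)$ is a purification of the left marginal and $\ket{\xi^{1:n}}\otimes\ket\mu$ is a purification of the right marginal whose purifying systems match up under $W_b$ on the primed side, upgrades the marginal purified-distance bound to the claimed bound on the global pure states (Uhlmann gives that the optimal purifications achieve the marginal fidelity, and here the $W_b$ on the $D'E'$ side is precisely the isometry realizing the optimum, up to the $O(\delta)$ slack). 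The main obstacle I anticipate is bookkeeping: making sure the registers $D,D',E,E'$ and the ranges of summation line up so that the "boundary residue class" error is genuinely $O(\delta)$ and not $O(b/a)$ without the hypothesis $a\ge b/\delta$, and tracking the numerical constant so that $5\sqrt\delta$ (rather than, say, $10\sqrt\delta$) comes out — the slightly tighter range $\delta<1/25$ here versus $\delta<1/15$ in Claim \ref{embezclose} is presumably there to absorb exactly this.
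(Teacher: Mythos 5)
Your second paragraph---the direct overlap computation---is exactly the paper's proof. After reindexing $j=bj'+e$ the overlap is $\frac{1}{\sqrt{S(a,n)S(1,n)}}\sum_{j'}\sum_{e=0}^{b-1}\frac{1}{j'b\sqrt{1+e/(j'b)}}$; the hypothesis $a\geq b/\delta$ gives $e/(j'b)\leq b/a\leq \delta$ for \emph{every} block (this is its actual role, rather than only taming the incomplete boundary residue classes), so the square-root factor costs only $\sqrt{1-\delta}$, and the estimate $|S(a,n)-\log(n/a)|\leq 4$ together with $n\geq a^{1/\delta}$ yields $\F\geq\sqrt{1-25\delta}$, hence $\Pur\leq 5\sqrt{\delta}$ exactly as claimed. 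So the constant you left untracked does come out, and this route is sound.

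Your alternative route, by contrast, has a genuine gap at the Uhlmann step. Uhlmann's theorem only guarantees the \emph{existence} of an isometry on the purifying registers attaining the marginal fidelity; it does not say that the two specific purifications appearing in the statement attain it, and the assertion that ``$W_b$ on the $D'E'$ side is precisely the isometry realizing the optimum'' is precisely what would need proof. The natural patch is to observe that both global states are canonical purifications of their $DE$-marginals with respect to the same basis pairing, so Fact \ref{canonicalfid} applies; but the generic inequality $\F(\text{purifications})\geq 1-\Pur(\text{marginals})$ then converts a marginal bound of order $\sqrt{\delta}$ into a global bound of order $\delta^{1/4}$, which is weaker than $5\sqrt{\delta}$ in the stated range of $\delta$. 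To recover order $\sqrt{\delta}$ along these lines you must instead bound $\Tr(\sqrt{\rho}\sqrt{\sigma})=\sum_x\sqrt{p_x q_x}$ directly from the operator inequality $\rho\preceq(1+15\delta)\sigma$ of Claim \ref{embezclose}, using that both marginals are diagonal in the same basis (this gives $\F\geq(1+15\delta)^{-1/2}$, i.e.\ $\Pur\leq\sqrt{15\delta}$)---at which point you are effectively redoing the same per-eigenvalue computation as the direct route. Keep the second-paragraph argument as the proof and drop, or substantially repair, the third.
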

\begin{proof}
We have
\begin{eqnarray*}
&&\left(W_b\otimes W_b\right)\ket{\xi^{a:n}}_{D'D}\otimes \ket{0}_{E'}\otimes \ket{0}_E\\
&&= \frac{1}{\sqrt{S(a:n)}}\sum_{j=a}^n\frac{1}{\sqrt{j}}\ket{\lfloor j/b\rfloor}_{D'}\ket{\lfloor j/b\rfloor}_D\ket{j\mmod{b}}_{E'}
\ket{j\mmod{b}}_{E}\\
&& = \frac{1}{\sqrt{S(a:n)}}\sum_{j'=\lfloor a/b\rfloor}^{\lfloor n/b\rfloor}\sum_{e=0}^{b-1} \frac{1}{\sqrt{bj'+e}}\ket{j',j'}_{D'D}\ket{e,e}_{E'E}.
\end{eqnarray*}
Since $\ket{\xi^{1:n}}_{D'D}\ket{\mu}_{E'E} = \frac{1}{\sqrt{S(1:n)}}\sum_{j'=1}^n\sum_{e=0}^{b-1}\frac{1}{\sqrt{j'b}}\ket{j',j'}_{D'D}\ket{e,e}_{E'E}$, we have
\begin{eqnarray*}
&&\F\left(\left(W_b\otimes W_b\right)\left(\xi^{a:n}_{D'D}\otimes \ketbra{0}_{E'}\otimes \ketbra{0}_E\right)\left(W^{\dagger}_b\otimes W^{\dagger}_b\right), \xi^{1:n}_{D'D}\otimes \mu\right)\\
&& =\frac{1}{\sqrt{S(a:n)S(1:n)}}\sum_{j'=\lfloor a/b\rfloor}^{\lfloor n/b\rfloor}\sum_{e=0}^{b-1} \frac{1}{j'b\sqrt{1+ \frac{e}{j'b}}}\\
&& \geq \frac{\sqrt{1-\delta}}{\sqrt{S(a:n)S(1:n)}}\sum_{j'=\lfloor a/b\rfloor}^{\lfloor n/b\rfloor}\sum_{e=0}^{b-1} \frac{1}{j'b}\\
&& = \sqrt{1-\delta}\cdot\frac{S(\lfloor a/b\rfloor, \lfloor n/b\rfloor)}{\sqrt{S(a:n)S(1:n)}} \geq \sqrt{1-25\delta},
\end{eqnarray*}
where we use the fact that $|S(a,n) - \log\frac{n}{a}|\leq 4$. This completes the proof.
\end{proof}

We now introduce the following definition, which shows how to extend a suitable quantum state to make it uniform in a subspace.
\begin{definition}
\label{broextend}
{\bf Flattening a quantum state:} Fix a $\gamma\in(0,1)$ such that $\frac{|C|}{\gamma}$ is an integer and a quantum state $\sigma_C\defeq \sum_cq(c)\ketbra{c}_C$ with eigenvalues $q(c)$ that are integer multiples of $\frac{\gamma}{|C|}$. For a register $E$ satisfying $|E|= \frac{|C|}{\gamma}\max_cq(c)$, define the quantum state $\sigma_{CE}$ as follows:
$$\sigma_{CE}\defeq \sum_c q(c) \ketbra{c}_C \otimes \left(\frac{\gamma}{q(c)|C|}\sum_{e=0}^{\frac{q(c)|C|}{\gamma}-1} \ketbra{e}_E\right) = \frac{\gamma}{|C|}\sum_{c}\sum_{e=0}^{\frac{q(c)|C|}{\gamma}-1}\ketbra{c}_C\otimes \ketbra{e}_E.$$
Observe that $\sigma_{CE}$ is uniform in its support.
\end{definition}
 
The flattening of $\sigma_C$ can be realized in a unitary manner as follows. We define some registers and unitaries required for this process.
\begin{definition}
\label{unitaryflat}
Fix $\delta\in(0,\frac{1}{15})$. Let $a\defeq |E|=\frac{|C|}{\gamma}\max_c q(c)$ and $n \defeq a^{\frac{1}{\delta}}$. Introduce a register $D$ satisfying $|D|\geq n$ with the quantum state $\xi^{a:n}_D$ as given in Definition \ref{def:embez}. 
Define the unitary $W: \cH_{CED}\rightarrow \cH_{CED}$ as $$W \defeq \sum_c \ketbra{c}_C\otimes W_{\frac{q(c)|C|}{\gamma}},$$ where $q(c), \gamma$ are given in Definition \ref{broextend} and the unitary $W_{\frac{q(c)|C|}{\gamma}}$ is defined in Claim \ref{embezclose}.
\end{definition}

Flattening is ensured via the following relation, which uses Claim \ref{embezclose}.
\begin{equation}
\label{unitarybros}
W\left(\sigma_C\otimes \ketbra{0}_E\otimes \xi^{a:n}_D\right)W^{\dagger}= \sum_cq(c)\ketbra{c}_{C}\otimes W_{\frac{q(c)|C|}{\gamma}}\left(\ketbra{0}_{E}\otimes \xi^{a:n}_D\right)W^{\dagger}_{\frac{q(c)|C|}{\gamma}} \preceq  (1+15\delta)\sigma_{CE}\otimes \xi^{1:n}_D.
\end{equation}
Given the flattening of a quantum state $\sigma_C$,  Definition \ref{paulis} gives us $|\supp(\sigma_{CE})|^2$ unitaries $V_x:\supp(\sigma_{CE})\rightarrow \supp(\sigma_{CE})$. If $|\supp(\sigma_{CE})|^2= \left(\frac{|C|}{\gamma}\right)^2$ is a prime power, Definition \ref{def:pairwiseunit} gives us the collection of unitaries $V^{\ell}:\supp(\sigma_{CE})\otimes \cH_{X_1X_2}\rightarrow \supp(\sigma_{CE})\otimes\cH_{X_1X_2}$, with $ \log|X_1|=\log|X_2| = \log\left(\frac{|C|}{\gamma}\right)^2$. This allows us to construct the quantum states
\begin{equation}
\label{eq:extendtau}
\tau_\ell\defeq V^{(\ell)}\left(W\left(\Psi_{RC}\otimes \ketbra{0}_{E}\otimes \xi^{a:n}_D\right)W^{\dagger}\otimes \mu_{X_1X_2}\right)V^{(\ell)\dagger},
\end{equation}
Now we prove the following theorem, which is the analogue of Theorem \ref{paulisplit} for a flattened quantum state. Its proof appears in Section \ref{proofs:maxmutdec}.
\begin{theorem}
\label{theo:paulimaxmut}
Fix $\eps\in (0,1), \gamma\in(0,\frac{1}{2})$, $\delta\in(0,\frac{1}{15})$ such that $\frac{|C|}{\gamma}$ is a prime power and quantum states $\Psi_{RC}, \omega_C.$  Let $k\defeq \dmax{\Psi_{RC}}{\Psi_R\otimes\omega_C}$ and $N$ be an integer. Let $\sigma_C$ be the quantum state as constructed in the first part of Fact \ref{nearbygood} using $\omega_C$. For quantum states $\tau_\ell$ as given in Equation \ref{eq:extendtau}, define
$$\tau\defeq \frac{1}{N}\sum_{\ell}\tau_\ell.$$
It holds that
$$\relent{\tau}{\Psi_R\otimes \sigma_{CE}\otimes \xi^{1:n}_D\otimes\mu_{X_1X_2}}\leq 15\delta+\log\left(1+\frac{2^{k+2}-1}{N}\right).$$
Since one can choose $\log|D|=\log n \leq \frac{1}{\delta}\log\frac{|C|}{\gamma}$, the number of qubits of additional registers is $\log|D| + \log|E|+2\log|X_1| \leq  \left(4+\frac{1}{\delta}\right)\log\frac{|C|}{\gamma}$.
\end{theorem}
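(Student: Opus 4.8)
The plan is to reduce the statement to the ``flat'' situation already handled in Theorem \ref{paulisplit}, using the embezzling machinery of Claim \ref{embezclose} to realize the flattening of Definition \ref{broextend} by a genuine unitary while keeping tight control on the max-relative entropy. \emph{Step 1 (rounding and flattening).} First I would use the first part of Fact \ref{nearbygood} to replace $\omega_C$ by the nearby state $\sigma_C$ whose eigenvalues are integer multiples of $\gamma/|C|$; since $\sigma_C\succeq\tfrac12\,\omega_C$ for $\gamma<\tfrac12$, conjugating $\Psi_{RC}\preceq 2^{k}\,\Psi_R\otimes\omega_C$ by nothing but the inclusion gives $\dmax{\Psi_{RC}}{\Psi_R\otimes\sigma_C}\le k+1$. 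Next, set $\Phi_{RCED}\defeq W\bigl(\Psi_{RC}\otimes\ketbra{0}_E\otimes\xi^{a:n}_D\bigr)W^{\dagger}$ for the unitary $W$ of Definition \ref{unitaryflat}. Since $W=\sum_c\ketbra{c}_C\otimes W_{q(c)|C|/\gamma}$ is block-diagonal in the chosen basis of $C$ and acts as identity on $R$, I would conjugate the operator inequality $\Psi_{RC}\otimes\ketbra{0}_E\otimes\xi^{a:n}_D\preceq 2^{k+1}\,\Psi_R\otimes\sigma_C\otimes\ketbra{0}_E\otimes\xi^{a:n}_D$ by $W$ and feed the relation in Equation~\ref{unitarybros} (i.e.\ Claim \ref{embezclose}) into the right-hand side, obtaining
$$\Phi_{RCED}\;\preceq\;(1+15\delta)\,2^{k+1}\,\Psi_R\otimes\sigma_{CE}\otimes\xi^{1:n}_D,$$
so that $\dmax{\Phi_{RCED}}{\Psi_R\otimes\sigma_{CE}\otimes\xi^{1:n}_D}\le k+2$. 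A short direct computation from the definition of $W$ also gives $\Phi_R=\Psi_R$ and $\Phi_D\preceq(1+15\delta)\,\xi^{1:n}_D$ with $\supp(\Phi_D)\subseteq\supp(\xi^{1:n}_D)$, so that $\relent{\Phi_{RCED}}{\Psi_R\otimes\sigma_{CE}\otimes\xi^{1:n}_D}$ is finite; the $15\delta$ in the theorem is the only error carried out of this step.

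\textbf{Step 2 (convex-split with a $1$-design).} Because $\sigma_{CE}$ is maximally mixed on a subspace of dimension $\tfrac{|C|}{\gamma}$ and $\bigl(\tfrac{|C|}{\gamma}\bigr)^{2}$ is a prime power by hypothesis, Definition \ref{paulis} furnishes a $1$-design on $\supp(\sigma_{CE})$ and Definition \ref{def:pairwiseunit} furnishes the pairwise-independent family $\{V^{(\ell)}\}$ on $\supp(\sigma_{CE})\otimes\cH_{X_1X_2}$. The point to verify is that the target $\theta\defeq\Psi_R\otimes\sigma_{CE}\otimes\xi^{1:n}_D\otimes\mu_{X_1X_2}$ is invariant under each $V^{(\ell)}$, which holds since $V_x\sigma_{CE}V_x^{\dagger}=\sigma_{CE}$ while $R$ and $D$ are untouched and $X_1X_2$ is only block-permuted. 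With $\tau_\ell$ as in Equation~\ref{eq:extendtau} and $\tau=\tfrac1N\sum_\ell\tau_\ell$, I would then rerun the proof of Theorem \ref{paulisplit}: apply the identity $\relent{\sum_i p_i\rho_i}{\theta}=\sum_i p_i\bigl(\relent{\rho_i}{\theta}-\relent{\rho_i}{\tau}\bigr)$; use $V^{(\ell)}$-invariance of $\theta$ to conclude $\relent{\tau_\ell}{\theta}=\relent{\Phi_{RCED}}{\Psi_R\otimes\sigma_{CE}\otimes\xi^{1:n}_D}\le k+2$ for every $\ell$; and lower-bound the correction $\tfrac1N\sum_\ell\relent{\tau_\ell}{\tau}$ by a data-processing estimate on the overlaps $\Tr(\Pi_\ell\tau_{\ell'})$ for $\ell'\ne\ell$, where $\Pi_\ell$ is a projector with $\Tr(\Pi_\ell\tau_\ell)=1$. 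Here the pairwise independence of $(f_\ell,f_{\ell'})$ collapses the average of these overlaps, via the $1$-design twirl on $\supp(\sigma_{CE})$, to a quantity governed by $\dmax{\Phi_{RCED}}{\Psi_R\otimes\sigma_{CE}\otimes\xi^{1:n}_D}\le k+2$, yielding the bound $\log\bigl(1+\tfrac{2^{k+2}-1}{N}\bigr)$; adding the $15\delta$ from Step~1 finishes the inequality. The final register count is the bookkeeping $\log|D|=\log n\le\tfrac1\delta\log\tfrac{|C|}{\gamma}$, $\log|E|=\log a\le\log\tfrac{|C|}{\gamma}$, and $\log|X_1|+\log|X_2|$ supplying the remaining $\tfrac{|C|}{\gamma}$-terms.

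\textbf{Expected main obstacle.} The routine pieces are the rounding via Fact \ref{nearbygood}, the conjugation of the $\dmax$ inequality by $W$, the $V^{(\ell)}$-invariance of $\theta$, and the register counting. The delicate point is threading the embezzling ancilla $D$ through the convex-split estimate: its actual marginal $\Phi_D$ is \emph{not} the reference $\xi^{1:n}_D$ appearing in $\theta$, so one cannot merely apply Theorem \ref{paulisplit} as a black box with ``$R$'' replaced by ``$RD$''. Instead one must carry $\theta$ itself into the estimate from the outset, exploiting that $D$ is never acted on by any $V^{(\ell)}$ (so it rides along as a spectator) and that the operator inequality $\Phi_{RCED}\preceq 2^{k+2}\,\Psi_R\otimes\sigma_{CE}\otimes\xi^{1:n}_D$ of Step~1, together with $\supp(\Phi_D)\subseteq\supp(\xi^{1:n}_D)$, is exactly what bounds the overlaps $\Tr(\Pi_\ell\tau_{\ell'})$. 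Beyond this, one must confirm that the constant ``$+2$'' in the exponent (one unit from rounding $\omega_C\to\sigma_C$, one from absorbing $1+15\delta\le2$) and the arithmetic side-conditions---$\tfrac{|C|}{\gamma}$ a prime power, $n\ge a^{1/\delta}$, and $a\ge b$ for every bin $b=q(c)|C|/\gamma$---are mutually consistent.
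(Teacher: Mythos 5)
Your Step 1 is exactly the paper's opening move (Fact \ref{nearbygood} to get $\Psi_{RC}\preceq 2^{k+1}\Psi_R\otimes\sigma_C$, then conjugation by $W$ and Equation \ref{unitarybros} to get $W\left(\Psi_{RC}\otimes\ketbra{0}_E\otimes\xi^{a:n}_D\right)W^{\dagger}\preceq 2^{k+2}\,\Psi_R\otimes\sigma_{CE}\otimes\xi^{1:n}_D$), and your observation that the target $\theta$ is invariant under each $V^{(\ell)}$ is also what the paper uses. The gap is in how you handle the correction term $\frac{1}{N}\sum_\ell\relent{\tau_\ell}{\tau}$. In the actual argument (the one you would get by genuinely rerunning Theorem \ref{paulisplit}), after convexity in the second argument and pairwise independence, the $N-1$ cross terms collapse to the \emph{full} $1$-design twirl of the flattened state, and what you need there is the near-unit-constant domination
$$\frac{1}{|\cX|}\sum_x V_x\,W\left(\Psi_{RC}\otimes\ketbra{0}_E\otimes\xi^{a:n}_D\right)W^{\dagger}\,V_x^{\dagger}\;\preceq\;(1+15\delta)\,\Psi_R\otimes\sigma_{CE}\otimes\xi^{1:n}_D,$$
which is the paper's Equation \ref{embezcrossvanish}. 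This inequality does \emph{not} follow from the $\dmax$ bound of Step 1: twirling $\Phi\preceq 2^{k+2}\Psi_R\otimes\sigma_{CE}\otimes\xi^{1:n}_D$ only reproduces the factor $2^{k+2}$, and if you use that factor on the $N-1$ cross terms the operator-monotonicity step yields only $\relent{\tau}{\theta}\leq k+2$, which is vacuous. The $\dmax$ bound is what controls the single diagonal ($\ell'=\ell$) term and produces $\frac{2^{k+2}-1}{N}$; the cross terms require the separate structural computation that the twirl kills the $c\neq c'$ blocks (Lemma \ref{totmix}) and that, for each diagonal block, the $c$-dependent compressed embezzling marginal on $D$ satisfies $\frac{1}{S(a,n)}\sum_j\frac{1}{j}\ketbra{\lfloor j/b(c)\rfloor}_D\preceq(1+15\delta)\xi^{1:n}_D$ via the harmonic-sum estimate of Claim \ref{embezclose}. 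Your proposal never derives this, and in the ``main obstacle'' paragraph you explicitly assert that the Step-1 operator inequality ``is exactly what bounds the overlaps,'' which is the wrong ingredient at the wrong place.

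The specific mechanism you propose for the lower bound --- data processing through a projector $\Pi_\ell$ with $\Tr(\Pi_\ell\tau_\ell)=1$ and control of $\Tr(\Pi_\ell\tau_{\ell'})$ --- also cannot deliver the required estimate even in principle: such a $\Pi_\ell$ must contain the support of $\tau_\ell$, so whenever $\tau_\ell$ has full support (which is compatible with $\dmax{\Phi}{\Psi_R\otimes\sigma_{CE}\otimes\xi^{1:n}_D}$ being as large as $k+2$, e.g.\ due to one small eigenvalue direction) the binary test gives $\relent{\tau_\ell}{\tau}\geq 0$ and nothing more, while the theorem demands $\relent{\tau_\ell}{\tau}\gtrsim\relent{\Phi}{\Psi_R\otimes\sigma_{CE}\otimes\xi^{1:n}_D}-\log\left(1+\frac{2^{k+2}-1}{N}\right)$. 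The correct replacement is the paper's route: Fact \ref{elemeq}, unitary invariance, convexity, pairwise independence, the twirl inequality \ref{embezcrossvanish} with constant $1+15\delta$, the $\dmax\leq k+2$ bound for the diagonal term, and operator monotonicity of the logarithm. So the overall plan of ``flatten, then redo Theorem \ref{paulisplit}'' is the right one, but the one genuinely new lemma this theorem needs --- the $(1+15\delta)$ twirl bound for the flattened state --- is missing from your proposal and is replaced by a step that fails.
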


For later application, we also state a smooth version of Theorem \ref{theo:paulimaxmut}, which is similar to Corollary \ref{cor:convexcomb}.
\begin{corollary}
\label{cor:maxmut}
Fix $\eps\in (0,1), \delta\in (0,\frac{1}{15}), \gamma\in(0,1)$ such that $\frac{|C|}{\gamma}$ is an integer and a quantum state $\Psi_{RC}$. Let $k\defeq \min_{\Psi'_{RC}\in\ball{\eps}{\Psi_{RC}}}\dmax{\Psi'_{RC}}{\Psi_R\otimes \Psi_{C}}$ and $N\defeq \frac{3\cdot 2^{k+2}}{\delta^3}$.  Let $\sigma_C$ be the quantum state as constructed in the first part of Fact \ref{nearbygood} using $\Psi_C$. For quantum states $\tau_\ell$ as given in Equation \ref{eq:extendtau}, define
$$\tau\defeq \frac{1}{N}\sum_{\ell}\tau_\ell.$$ It holds that
$$\Pur(\tau, \Psi_R\otimes \sigma_{CE}\otimes \xi^{1:n}_D\otimes\mu_{X_1X_2})\leq 2\eps+4\sqrt{\delta}.$$
Since one can choose $\log|D|=\log n\leq \frac{1}{\delta}\log\frac{|C|}{\gamma}$, the number of qubits of additional registers is $\log|D| + 2\log|E| \leq  \left(4+\frac{1}{\delta}\right)\log\frac{|C|}{\gamma}$.
\end{corollary}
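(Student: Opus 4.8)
The plan is to obtain Corollary~\ref{cor:maxmut} from Theorem~\ref{theo:paulimaxmut} by a standard smoothing argument followed by the quantum Pinsker inequality (Fact~\ref{pinsker}): run Theorem~\ref{theo:paulimaxmut} on a state $\Psi'_{RC}$ near $\Psi_{RC}$ that attains the smoothed max-relative entropy $k$, turn the relative-entropy conclusion it gives into a purified-distance bound, and absorb the $\eps$-level smoothing error through the triangle inequality for $\Pur$.

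First I would fix a minimizer $\Psi'_{RC}\in\ball{\eps}{\Psi_{RC}}$ with $\dmax{\Psi'_{RC}}{\Psi_R\otimes\Psi_C}=k$, chosen (as one may) so that $\Psi'_R=\Psi_R$. Then I would note that the map
$$\mathcal F(\rho_{RC})\ \defeq\ \frac1N\sum_{\ell}V^{(\ell)}\Big(W\big(\rho_{RC}\otimes\ketbra{0}_{E}\otimes\xi^{a:n}_D\big)W^{\dagger}\otimes\mu_{X_1X_2}\Big)V^{(\ell)\dagger}$$
is CPTP, since $\sigma_C$, $W$, the $V^{(\ell)}$ and the appended states $\ketbra{0}_E,\xi^{a:n}_D,\mu_{X_1X_2}$ are all determined by $\Psi_C$ (via Fact~\ref{nearbygood}) and do not depend on the input $\rho_{RC}$. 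By construction $\tau=\mathcal F(\Psi_{RC})$, whereas $\mathcal F(\Psi'_{RC})$ is exactly the averaged state that Theorem~\ref{theo:paulimaxmut} produces when run on $\Psi'_{RC}$ with $\omega_C=\Psi_C$. Monotonicity of the purified distance under CPTP maps then gives $\Pur\big(\tau,\mathcal F(\Psi'_{RC})\big)\le\Pur(\Psi_{RC},\Psi'_{RC})\le\eps$.

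Next I would apply Theorem~\ref{theo:paulimaxmut} to $\Psi'_{RC}$ with $\omega_C=\Psi_C$ (first replacing $\gamma$ by $|C|/p$ for a suitable prime $p$ so that $|C|/\gamma$ becomes a prime power $<\tfrac12$, using Bertrand's postulate exactly as in the construction of $\brc$; this changes $|C|/\gamma$ by at most a constant factor and the register-size bound by an additive constant). With $N\defeq\lceil 3\cdot 2^{k+2}/\delta^3\rceil$ one has $\tfrac{2^{k+2}-1}{N}\le\tfrac{\delta^3}{3}$, so the theorem yields
$$\relent{\mathcal F(\Psi'_{RC})}{\Psi_R\otimes\sigma_{CE}\otimes\xi^{1:n}_D\otimes\mu_{X_1X_2}}\ \le\ 15\delta+\log\Big(1+\tfrac{\delta^3}{3}\Big)\ \le\ 16\delta.$$
Fact~\ref{pinsker} converts this to $\F^2\big(\mathcal F(\Psi'_{RC}),\,\Psi_R\otimes\sigma_{CE}\otimes\xi^{1:n}_D\otimes\mu_{X_1X_2}\big)\ge 2^{-16\delta}$, hence $\Pur\le\sqrt{1-2^{-16\delta}}\le\sqrt{16\delta\ln 2}\le 4\sqrt\delta$. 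Combining with the previous paragraph via the triangle inequality for $\Pur$,
$$\Pur\big(\tau,\ \Psi_R\otimes\sigma_{CE}\otimes\xi^{1:n}_D\otimes\mu_{X_1X_2}\big)\ \le\ \eps+4\sqrt\delta\ \le\ 2\eps+4\sqrt\delta,$$
and the register-size estimate $\log|D|+2\log|E|\le(4+\tfrac1\delta)\log\tfrac{|C|}{\gamma}$ is inherited verbatim from Theorem~\ref{theo:paulimaxmut} (with the harmless $\gamma$-adjustment above).

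I expect the only genuinely delicate point to be the marginal bookkeeping in the smoothing step. Theorem~\ref{theo:paulimaxmut} run on $\Psi'_{RC}$ produces a bound relative to $\Psi'_R$ with exponent $\dmax{\Psi'_{RC}}{\Psi'_R\otimes\Psi_C}$, whereas the corollary is stated with $\Psi_R$ and exponent $k=\dmax{\Psi'_{RC}}{\Psi_R\otimes\Psi_C}$; one must justify that the smoothing can be performed with $\Psi'_R=\Psi_R$ (so the two exponents coincide and the chosen $N$ is large enough), or else verify that the discrepancy costs at most $\eps$ in purified distance and an additive constant in the exponent, so that the final estimate still closes at $2\eps+4\sqrt\delta$. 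The remaining ingredients --- CPTP contractivity, the Pinsker step, the triangle inequality, and the constant chase --- are routine.
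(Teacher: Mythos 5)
Your overall route (smooth, run Theorem \ref{theo:paulimaxmut} on the smoothed state with $\omega_C=\Psi_C$, convert the relative-entropy bound via Fact \ref{pinsker}, and close with CPTP contractivity and the triangle inequality for $\Pur$) is the intended one --- it mirrors how the paper derives Corollary \ref{cor:convexcomb} from Theorem \ref{main:theo}. However, the point you flag as ``delicate'' is a genuine gap, and your proposed resolutions do not close it. You cannot assume the minimizer satisfies $\Psi'_R=\Psi_R$: the smooth max-relative entropy in the corollary is taken against the \emph{fixed} product $\Psi_R\otimes\Psi_C$, and the optimizing $\Psi'_{RC}$ generally has a different $R$-marginal; there is no argument that the minimum is attained (or approximated within a constant) by a state with marginal exactly $\Psi_R$. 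Your fallback --- that the discrepancy costs ``at most $\eps$ in purified distance and an additive constant in the exponent'' --- is also unsubstantiated: $\dmax{\Psi'_{RC}}{\Psi'_R\otimes\Psi_C}$ is not controlled by $k$ up to a constant (a small eigenvalue of $\Psi'_R$ in a direction carrying weight of $\Psi'_{RC}$ can blow it up), so Theorem \ref{theo:paulimaxmut} applied directly to $\Psi'_{RC}$ does not give an exponent comparable to $k$.

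The missing ingredient is Fact \ref{nearbygooddmax}: starting from the minimizer $\Psi'_{RC}$, pass to a further state $\Psi''_{RC}\in\ball{\delta}{\Psi'_{RC}}$ satisfying $\dmax{\Psi''_{RC}}{\Psi''_R\otimes\Psi_C}\leq k+\log\frac{3}{\delta^2}$, and run Theorem \ref{theo:paulimaxmut} on $\Psi''_{RC}$ (whose own marginal now appears in the exponent, as the theorem requires). This is exactly why the corollary sets $N=\frac{3\cdot 2^{k+2}}{\delta^3}$: the factor $\frac{3}{\delta^2}$ is reserved to absorb the penalty from Fact \ref{nearbygooddmax}, leaving $\frac{2^{k'+2}}{N}\leq\delta$ for the effective exponent $k'=k+\log\frac{3}{\delta^2}$, after which Pinsker gives roughly $\sqrt{(15\delta+\log(1+\delta))\ln 2}\leq 3.5\sqrt{\delta}$ and the chain $\Pur(\tau,\cdot)\leq \Pur(\Psi_{RC},\Psi'_{RC})+\Pur(\Psi'_{RC},\Psi''_{RC})+3.5\sqrt{\delta}+\Pur(\Psi''_R\otimes\cdots,\Psi_R\otimes\cdots)\leq 2\eps+2\delta+3.5\sqrt{\delta}\leq 2\eps+4\sqrt{\delta}$ closes for $\delta<\frac{1}{15}$. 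In your write-up you only spend $\frac{\delta^3}{3}$ of the available slack in $N$ and claim $\eps+4\sqrt{\delta}$, which is a sign that the marginal-repair step (the only nontrivial part of the corollary) has been left out; the remaining ingredients, including your prime-power adjustment of $\gamma$, are fine.
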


In a similar manner, we obtain an improved version of Theorem \ref{main:theo}. We first construct the desired states to be used in the statement of the Theorem. For the flattening of a quantum state $\sigma_C$ as given in Definition \ref{broextend}, let  $\cH'_{CE}\defeq \supp(\sigma_{CE})\subset \cH_C\otimes\cH_E$ denote the support of $\sigma_{CE}$. Introduce registers $C_0E_0\equiv CE$ and $C_1E_1\equiv CE$. Let $\qbit$ be a register such that $|\qbit|=2$. Let $\brce$ be a register such that $|\brce|$ is a prime, $\cH_{\brce}\subseteq\cH_{\qbit}\otimes\cH'_{CE}\otimes \cH'_{CE}$ and $\supp(\ketbra{0}_{\qbit})\otimes\cH'_{CE}\otimes \cH'_{CE}\subseteq \cH_{\brce}$. This choice of $\brce$ is guaranteed by Bertrand's postulate \cite{Chebysev1852}. Introduce register $\brce_2, \brce_1\equiv \brce$ such that $\cH'_{C_0E_0}\otimes \cH'_{C_1E_1}\subseteq \cH_{\brce_1}$. We identify the pair $(c,e)$ with an element in $\{0,1,\ldots \frac{|C|}{\gamma}-1\}$ through some one to one mapping and let $\{U_{\ell}\}_{\ell=0}^{|\brce|-1}$ be the unitaries constructed in Definition \ref{def:Uell}, by setting $C\leftarrow \supp(\sigma_{CE})$. Observe that $U_\ell:  \cH_{\brce_1}\otimes\cH_{\brce_2}\rightarrow \cH_{\brce_1}\otimes\cH_{\brce_2}$ are `classical' as long as choice of the preferred basis on $\cH_C$ is the eigenbasis of $\sigma_C$. Define the quantum states
\begin{equation}
\label{eq:tauellprime}
\tau_\ell\defeq U_\ell\left(W\left(\Psi_{RC_0}\otimes \ketbra{0}_{E_0}\otimes \xi^{a:n}_D\right)W^{\dagger}\otimes \ketbra{0}_{\qbit}\otimes \sigma_{C_1E_1}\otimes \mu_{\brce_2}\right)U_\ell^{\dagger}, \quad 
\end{equation}
where $W$ is as given in Definition \ref{unitaryflat} and $|D|\geq n$. We have the following theorem. Its proof appears in Section \ref{proofs:maxmutdec}.
\begin{theorem}
\label{theo:maxmut}
Fix $\eps, \gamma\in(0,1)$, $\delta\in(0,\frac{1}{15})$ such that $\frac{|C|}{\gamma}$ is an integer and quantum states $\Psi_{RC}, \omega_C$ .  Let $k\defeq \dmax{\Psi_{RC}}{\Psi_R\otimes\omega_C}$, $S\subseteq \{0,1,\ldots \frac{|C|^2}{\gamma^2}-1\}$ and $N\defeq |S|$. Let $\sigma_C$ be the quantum state as constructed in the first part of Fact \ref{nearbygood} using $\omega_C$. For the quantum states $\tau_{\ell}$ as constructed in Equation \ref{eq:tauellprime},  define
$$\tau\defeq \frac{1}{N}\sum_{\ell\in S}\tau_\ell.$$
It holds that
$$\relent{\tau}{\Psi_R\otimes \mu_{\brce_1}\otimes \xi^{1:n}_D\otimes \mu_{\brce_2}}\leq 15\delta+\log\left(1+\frac{2^{k+2}-1}{N}\right).$$
Since one can choose $\log|D|=\log n\leq \frac{1}{\delta}\log\frac{|C|}{\gamma}$, the number of qubits of additional registers is $\log|D| + 2\log|\brce| \leq  \left(4+\frac{1}{\delta}\right)\log\frac{|C|}{\gamma}$.
\end{theorem}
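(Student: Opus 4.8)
The plan is to combine the flattening relation (Equation \ref{unitarybros}) with the classical-unitary convex-split machinery of Theorem \ref{main:theo}, applied to the flattened state rather than to $\Psi_{RC}$ directly. First I would recall from Fact \ref{nearbygood} that $\sigma_C$ is a quantum state close to $\omega_C$ whose eigenvalues are integer multiples of $\gamma/|C|$ and which satisfies $\omega_C \preceq (1+O(\gamma))\sigma_C$ (or a comparable operator inequality), so that passing from $\omega_C$ to $\sigma_C$ costs only a small additive constant in $\dmax{\cdot}{\cdot}$; thus $\dmax{\Psi_{RC}}{\Psi_R\otimes\sigma_C} \leq k + O(\gamma)$, and this is where the ``$+2$'' in the exponent $2^{k+2}$ will ultimately be absorbed. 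Then I would apply the embezzling relation \ref{unitarybros}: acting with $W$ on $\Psi_{RC_0}\otimes\ketbra{0}_{E_0}\otimes\xi^{a:n}_D$ produces a state $\preceq (1+15\delta)\,\Psi'_{RC_0E_0}\otimes\xi^{1:n}_D$, where $\Psi'_{RC_0E_0}$ is the flattened version of $\Psi_{RC_0}$ (i.e., $\Psi_{RC_0}$ tensored, eigenvector by eigenvector, with the uniform padding of Definition \ref{broextend}). The key point is that flattening is ``compatible'' with the relative-entropy bound: because $\Psi'_{RC_0E_0}$ has the same correlations with $R$ as $\Psi_{RC_0}$ and its $C_0E_0$-marginal is $\sigma_{CE}$ which is uniform on its support, we get $\dmax{\Psi'_{RC_0E_0}}{\Psi_R\otimes\mu_{\supp(\sigma_{CE})}} \leq k + O(\gamma)$, matching the hypothesis needed to run Theorem \ref{main:theo} on the register $\supp(\sigma_{CE})$ in place of $C$.

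Next I would invoke Theorem \ref{main:theo} with $C$ replaced by $\supp(\sigma_{CE})$, so that the unitaries $\{U_\ell\}$ of Definition \ref{def:Uell} (now acting on $\brce_1,\brce_2$, built from $\frac{|C|}{\gamma}$-dimensional registers) produce, for $\tau'_\ell \defeq U_\ell(\Psi'_{RC_0E_0}\otimes\ketbra{0}_{\qbit}\otimes\sigma_{C_1E_1}\otimes\mu_{\brce_2})U_\ell^{\dagger}$ and $\tau' \defeq \frac1N\sum_{\ell\in S}\tau'_\ell$, the bound $\relent{\tau'}{\Psi_R\otimes\mu_{\brce_1}\otimes\mu_{\brce_2}} \leq \log(1 + \frac{2^{k+1+O(\gamma)}-1}{N})$. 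The main work is then to relate $\tau'$ to the actual $\tau$ of Equation \ref{eq:tauellprime}: since $\tau_\ell$ differs from $\tau'_\ell$ only in that the input carries the extra factor $\xi^{a:n}_D$ in place of nothing and $W$ is applied, the operator inequality \ref{unitarybros} (which $U_\ell$ preserves, being a fixed unitary, and which is preserved under the convex average over $\ell$) gives $\tau \preceq (1+15\delta)\,\tau'\otimes\xi^{1:n}_D$. I would then use the standard fact that $A \preceq \lambda B$ and $\relent{B'}{\theta}\leq c$ for the normalized $B'$ imply $\relent{A'}{\theta\otimes\xi} \leq \log\lambda + c$ after accounting for the $\xi^{1:n}_D$ factor — more precisely, one writes $\relent{\tau}{\Psi_R\otimes\mu_{\brce_1}\otimes\xi^{1:n}_D\otimes\mu_{\brce_2}}$, uses that the $D$-register factors out identically on both sides, applies $\dmax{\tau}{\tau'\otimes\xi^{1:n}_D}\leq \log(1+15\delta)\leq 15\delta$, and combines with the relative-entropy bound on $\tau'$ via the data-processing / additivity identity $\relent{\rho}{\theta} \leq \dmax{\rho}{\rho''} + \relent{\rho''}{\theta}$ (valid since $\relent{\rho}{\theta}\leq \dmax{\rho}{\theta}$ and one can interpolate). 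This yields $\relent{\tau}{\Psi_R\otimes\mu_{\brce_1}\otimes\xi^{1:n}_D\otimes\mu_{\brce_2}} \leq 15\delta + \log(1 + \frac{2^{k+2}-1}{N})$.

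The main obstacle, and the step I would be most careful about, is the bookkeeping that lets the $O(\gamma)$ error from the $\omega_C \to \sigma_C$ replacement, the embezzling factor $1+15\delta$, and the base change from $\mu_C$ to $\mu_{\supp(\sigma_{CE})}$ all be absorbed cleanly into the single clean bound $15\delta + \log(1+\frac{2^{k+2}-1}{N})$ — in particular verifying that the exponent genuinely lands at $k+2$ and not something larger, which requires checking that flattening does not increase $\dmax{\Psi_{RC}}{\Psi_R\otimes\omega_C}$ beyond a single additive unit and that the $\qbit$-register embedding of $\cH_{\brce_1}$ into $\cH_{\qbit}\otimes\cH'_{CE}\otimes\cH'_{CE}$ (needed so that the state starts in $\ketbra{0}_{\qbit}$) interacts correctly with Lemma \ref{lem:Uprop2}. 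The rest — unitary invariance of relative entropy, tensoring out the $D$-register, convexity of $\dmax$ under the average over $\ell\in S$ — is routine. Finally, the dimension count $\log|D| + 2\log|\brce| \leq (4 + \frac1\delta)\log\frac{|C|}{\gamma}$ follows immediately from $|D| = n = a^{1/\delta}$ with $a = |E| = \frac{|C|}{\gamma}\max_c q(c) \leq \frac{|C|}{\gamma}$ (using that $\sigma_C$'s top eigenvalue is at most $1$, up to the discretization), $|\brce| \leq 2(\frac{|C|}{\gamma})^2$ by Bertrand's postulate, and $\log n \leq \frac1\delta \log\frac{|C|}{\gamma}$.
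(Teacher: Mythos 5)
There is a genuine gap, and it sits exactly at the reduction you describe as the heart of the argument. The flattening inequality (Equation \ref{unitarybros}) holds only for $\sigma_C$, which is diagonal in the basis used to define $W$; it does not give you a normalized ``flattened version of $\Psi_{RC_0}$''. If $\Psi_{RC}$ has coherences in the eigenbasis of $\sigma_C$, then $W\left(\Psi_{RC_0}\otimes\ketbra{0}_{E_0}\otimes\xi^{a:n}_D\right)W^{\dagger}$ contains cross terms $\ket{c}\bra{c'}_{C_0}\otimes\ket{k\mmod{b(c)}}\bra{k\mmod{b(c')}}_{E_0}\otimes\ket{\lfloor k/b(c)\rfloor}\bra{\lfloor k/b(c')\rfloor}_D$ that genuinely correlate $C_0E_0$ with $D$, and one can check on small examples (e.g.\ $\Psi_C$ a superposition of two eigenvectors of $\sigma_C$ with $b(c)=1,2$) that no state $\Psi'_{RC_0E_0}$ satisfies $W(\Psi_{RC_0}\otimes\ketbra{0}_{E_0}\otimes\xi^{a:n}_D)W^{\dagger}\preceq(1+15\delta)\Psi'_{RC_0E_0}\otimes\xi^{1:n}_D$: a two-dimensional compression forces $\Tr\Psi'\geq 2$. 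So your claimed $\tau\preceq(1+15\delta)\tau'\otimes\xi^{1:n}_D$ is not available, and with it the invocation of Theorem \ref{main:theo} on a flattened $\Psi'$ collapses. In the paper these cross terms are not killed at the level of an operator inequality against a product with $\xi^{1:n}_D$; they only vanish after applying $U_m$ and tracing out $\brce_2$, which is precisely the content of Claim \ref{embezprop2} ($\Tr_{\brce_2}(\tau_m)\preceq(1+15\delta)\Psi_R\otimes\mu_{\brce_1}\otimes\xi^{1:n}_D$), proved by expanding $\Psi_{RC_0}=\sum_{c,c'}\Psi^{(c,c')}_R\otimes\ket{c}\bra{c'}_{C_0}$ and using the $\delta_{c,c'}$ cancellation of Lemma \ref{lem:Uprop2}.

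Your final combination step is also invalid as stated: the inequality $\relent{\rho}{\theta}\leq\dmax{\rho}{\rho''}+\relent{\rho''}{\theta}$ is false in general (take $\rho=\mathrm{diag}(0,1)$, $\rho''=\mathrm{diag}(\tfrac12,\tfrac12)$, $\theta=\mathrm{diag}(1-\epsilon,\epsilon)$: the left side is $\log\tfrac1\epsilon$ while the right side is about $1+\tfrac12\log\tfrac1\epsilon$), so an operator upper bound on the \emph{first} argument of the relative entropy cannot be cashed in this way. The paper instead runs the Fact \ref{elemeq} decomposition directly on the actual states $\tau_\ell$ of Equation \ref{eq:tauellprime}: the term $\relent{\tau_\ell}{\tau}$ is lower bounded by first showing the operator inequality $\Tr_{\brce_2}\left(U^{\dagger}_\ell\tau U_\ell\right)\preceq(1+15\delta)\left(1+\frac{2^{k+2}-1}{N}\right)\Psi_R\otimes\mu_{\brce_1}\otimes\xi^{1:n}_D$ — using Claim \ref{embezprop2} for the $m\neq\ell$ terms and the chain $\Psi_{RC_0}\preceq 2^{k+1}\Psi_R\otimes\sigma_{C_0}$, Equation \ref{unitarybros}, and the factor $|\brce|/|\supp(\sigma_{CE})|^2\leq 2$ for the $m=\ell$ term — and only then applying operator monotonicity of the logarithm, which is legitimate because the bound sits in the \emph{second} argument of a relative entropy. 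Your dimension count and the bookkeeping remarks at the end are fine, but the two steps above need to be replaced by this (or an equivalent) argument for the proof to go through.
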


We state a smooth version of Theorem \ref{theo:maxmut} which will be used later.

\begin{corollary}
\label{cor:maxmutnew}
Fix $\eps, \gamma\in(0,1)$, $\delta\in(0,\frac{1}{15})$ such that $\frac{|C|}{\gamma}$ is an integer and a quantum state $\Psi_{RC}$ .  Let $k\defeq \min_{\Psi'_{RC}\in\ball{\eps}{\Psi_{RC}}}\dmax{\Psi'_{RC}}{\Psi_R\otimes \Psi_{C}}$ and $N\defeq \frac{3\cdot 2^{k+2}}{\delta^3}$. Let $\sigma_C$ be the quantum state as constructed in the first part of Fact \ref{nearbygood} using $\Psi_C$. For the quantum states $\tau_{\ell}$ as constructed in Equation \ref{eq:tauellprime},  define
$$\tau\defeq \frac{1}{N}\sum_{\ell=1}^N\tau_\ell.$$
It holds that
$$\Pur(\tau,\Psi_R\otimes \mu_{\brce_1}\otimes \xi^{1:n}_D\otimes \mu_{\brce_2})\leq 2\eps + 4\sqrt{\delta}.$$
Since one can choose $\log|D|=\log n\leq \frac{1}{\delta}\log\frac{|C|}{\gamma}$, the number of qubits of additional registers is $\log|D| + 2\log|\brce| \leq  \left(4+\frac{1}{\delta}\right)\log\frac{|C|}{\gamma}$.
\end{corollary}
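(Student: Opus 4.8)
The plan is to deduce the corollary from Theorem~\ref{theo:maxmut} by the same smoothing argument that takes Theorem~\ref{main:theo} to Corollary~\ref{cor:convexcomb} and Theorem~\ref{theo:paulimaxmut} to Corollary~\ref{cor:maxmut}. First, by the definition of $k$ together with Fact~\ref{nearbygood}, I would fix a state $\Psi'_{RC}\in\ball{\eps}{\Psi_{RC}}$ with $\Psi'_R=\Psi_R$ and $\dmax{\Psi'_{RC}}{\Psi_R\otimes\Psi_C}\le k$, so that also $\dmax{\Psi'_{RC}}{\Psi'_R\otimes\Psi_C}\le k$. Let $\tau'_\ell$ be the state obtained from Equation~\ref{eq:tauellprime} by replacing $\Psi_{RC_0}$ with $\Psi'_{RC_0}$; all other ingredients there ($\sigma_C$, the isometry $W$, the register $\brce$, the unitaries $U_\ell$) depend only on $\Psi_C$ and on the fixed parameters $\gamma,\delta,n$, hence are unchanged. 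Put $\tau'\defeq\tfrac{1}{N}\sum_\ell\tau'_\ell$.

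Next I would apply Theorem~\ref{theo:maxmut} to the state $\Psi'_{RC}$ with $\omega_C\defeq\Psi_C$ (so the $\sigma_C$ produced by the first part of Fact~\ref{nearbygood} is exactly the one in the corollary), index set $S=\{0,1,\dots,N-1\}$, and $N=\tfrac{3\cdot 2^{k+2}}{\delta^{3}}$. Writing $\theta\defeq\Psi_R\otimes\mu_{\brce_1}\otimes\xi^{1:n}_D\otimes\mu_{\brce_2}$ and using $\tfrac{2^{k+2}-1}{N}<\tfrac{\delta^{3}}{3}$, this gives
$$\relent{\tau'}{\theta}\ \le\ 15\delta+\log\left(1+\frac{2^{k+2}-1}{N}\right)\ \le\ 15\delta+\log\left(1+\frac{\delta^{3}}{3}\right).$$
Converting to purified distance via Fact~\ref{pinsker} (the step that turns Theorem~\ref{main:theo} into its fidelity statement) yields $\F^{2}(\tau',\theta)\ge 2^{-15\delta}/(1+\delta^{3}/3)$, hence, using $1-2^{-15\delta}\le 15\delta\ln 2$ and $\delta<\tfrac{1}{15}$,
$$\Pur(\tau',\theta)=\sqrt{1-\F^{2}(\tau',\theta)}\ \le\ \sqrt{15\delta\ln 2+\tfrac{\delta^{3}}{3}}\ \le\ 4\sqrt{\delta}.$$

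Finally I would transfer this from $\tau'$ to $\tau$. The assignments $\Psi_{RC_0}\mapsto\tau$ and $\Psi'_{RC_0}\mapsto\tau'$ are realized by one and the same CPTP map $\cE$: adjoin the fixed states $\ketbra{0}_{E_0}$, $\xi^{a:n}_D$, $\ketbra{0}_{\qbit}$, $\sigma_{C_1E_1}$, $\mu_{\brce_2}$, apply $W$, and then apply $\rho\mapsto\tfrac{1}{N}\sum_\ell U_\ell\,\rho\,U_\ell^{\dagger}$. Since purified distance is contractive under CPTP maps, $\Pur(\tau,\tau')\le\Pur(\Psi_{RC},\Psi'_{RC})\le\eps$, and the two reference states coincide because $\Psi'_R=\Psi_R$. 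The triangle inequality then gives $\Pur(\tau,\theta)\le\Pur(\tau,\tau')+\Pur(\tau',\theta)\le\eps+4\sqrt{\delta}\le 2\eps+4\sqrt{\delta}$ (if one does not arrange $\Psi'_R=\Psi_R$ one instead pays an extra $\Pur(\Psi'_R,\Psi_R)\le\eps$ on the reference side, which accounts for the stated $2\eps$). The register count is the one already recorded in Theorem~\ref{theo:maxmut}.

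\textbf{Expected main obstacle.} The only delicate point is the marginal bookkeeping: Theorem~\ref{theo:maxmut} bounds $\relent{\cdot}{\Psi'_R\otimes\cdots}$ against the $R$-marginal of whatever state is fed in, and the flattened $\sigma_C$ inside it is built from $\omega_C$, whereas $k$ in the corollary is measured against the original $\Psi_R\otimes\Psi_C$. Choosing the smoothed $\Psi'$ so that simultaneously $\Psi'_{RC}\in\ball{\eps}{\Psi_{RC}}$, $\Psi'_R=\Psi_R$, and $\dmax{\Psi'_{RC}}{\Psi_R\otimes\Psi_C}\le k$, and checking that the extra multiplicative slack from rounding $\Psi_C$ to $\sigma_C$ (a factor $\le 4$) is already absorbed in the exponent $k+2$ of Theorem~\ref{theo:maxmut}, is precisely what Fact~\ref{nearbygood} is invoked for; everything else is routine.
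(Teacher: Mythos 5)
Your overall architecture is the right one (and matches the template the paper uses for Corollary \ref{cor:convexcomb}): smooth $\Psi_{RC}$, apply Theorem \ref{theo:maxmut} to the smoothed state with $\omega_C=\Psi_C$, convert the relative-entropy bound to purified distance via Fact \ref{pinsker}, and transfer back to $\tau$ by CPTP contractivity plus the triangle inequality. However, there is a genuine gap at exactly the point you flag as the ``main obstacle''. You fix a state $\Psi'_{RC}\in\ball{\eps}{\Psi_{RC}}$ with $\Psi'_R=\Psi_R$ and $\dmax{\Psi'_{RC}}{\Psi_R\otimes\Psi_C}\le k$, ``by the definition of $k$ together with Fact \ref{nearbygood}''. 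Neither source gives you this: the minimizer in the definition of $k$ need not satisfy $\Psi'_R=\Psi_R$, and Fact \ref{nearbygood} only rounds the eigenvalues of a state on $C$; it says nothing about the $R$-marginal. Without $\Psi'_R=\Psi_R$ you cannot invoke Theorem \ref{theo:maxmut} with exponent $k$, because its hypothesis is $\dmax{\Psi'_{RC}}{\Psi'_R\otimes\omega_C}$, i.e.\ the max-relative entropy against the marginal of the state actually fed into the theorem, and $\dmax{\Psi'_{RC}}{\Psi_R\otimes\Psi_C}\le k$ does not imply $\dmax{\Psi'_{RC}}{\Psi'_R\otimes\Psi_C}\le k$ (the proof of the theorem genuinely uses the input's own marginal, e.g.\ through Claim \ref{embezprop2} and the final cancellation of relative-entropy terms, so one cannot simply rerun it with mismatched marginals). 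Your parenthetical fallback only repairs the reference state in the final triangle inequality; it does not repair the hypothesis of the theorem.

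The intended fix is Fact \ref{nearbygooddmax}: starting from the minimizer $\Psi^{(1)}_{RC}\in\ball{\eps}{\Psi_{RC}}$ with $\dmax{\Psi^{(1)}_{RC}}{\Psi_R\otimes\Psi_C}=k$, it produces $\Psi'_{RC}\in\ball{\delta}{\Psi^{(1)}_{RC}}$ with $\dmax{\Psi'_{RC}}{\Psi'_R\otimes\Psi_C}\le k+\log\frac{3}{\delta^2}$, i.e.\ it aligns the first marginal at the cost of an additive $\log\frac{3}{\delta^2}$ and an extra $\delta$ of smoothing. This is precisely why the corollary sets $N\defeq\frac{3\cdot 2^{k+2}}{\delta^3}$: with $k'=k+\log\frac{3}{\delta^2}$ one gets $\frac{2^{k'+2}-1}{N}\le\delta$, hence $\relent{\tau'}{\Psi'_R\otimes\mu_{\brce_1}\otimes\xi^{1:n}_D\otimes\mu_{\brce_2}}\le 15\delta+\log(1+\delta)$, and Pinsker gives $\Pur\le\sqrt{(15\delta+\log(1+\delta))\ln 2}$; the triangle inequality then pays $\Pur(\tau,\tau')\le\eps+\delta$ and $\Pur(\Psi'_R,\Psi_R)\le\eps+\delta$, and the total is at most $2\eps+4\sqrt{\delta}$ for $\delta<\frac{1}{15}$. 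Your reading of the factor $\frac{3}{\delta^3}$ as merely making $\frac{2^{k+2}-1}{N}<\frac{\delta^3}{3}$ small is a symptom of the missed step: that slack is there to absorb the $\log\frac{3}{\delta^2}$ penalty from Fact \ref{nearbygooddmax}, not to tighten the bound beyond what is needed. With this substitution your argument goes through.
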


\section{Analogues of position-based decoding}
\label{sec:posbased}

We now show how to perform hypothesis testing as a dual to Theorem \ref{main:theo}, in analogy with position-based decoding \cite{AnshuJW17CC}. We note that similar construction can achieve a dual to Theorem \ref{paulisplit}, but we do not state it here as it will be constructed in details in Theorem \ref{optchancode}. We have the following theorem. Its proof appears in Section \ref{proofs:posbased}.
\begin{theorem}
\label{theo:posbased}
Let $\eps\in (0,1)$ and $\Psi_{BC}$ be a quantum state. Let $\cS\subseteq \{0,1,\ldots |\brc|-1\}$ such that 
$$|\cS| \leq \frac{\delta^2}{4\eps}2^{\dheps{\Psi_{BC}}{\Psi_B\otimes \mu_C}{\eps}}.$$ For each $\ell\in \cS$, let $\tau_\ell$ be the quantum state defined in Theorem \ref{main:theo} with $\Psi_{RC}\leftarrow \Psi_{BC}$. There exists an POVM $\{\Lambda_{-1}, \Lambda_{\ell}\}_{\ell\in \cS}$ such that
$$\Tr\left(\Lambda_\ell\tau_\ell\right) \geq 1- \eps-4\delta \quad \forall \ell \in \cS.$$
\end{theorem}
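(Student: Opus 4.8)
The plan is to mimic position‑based decoding \cite{AnshuJW17CC}: take the optimizer $\Pi_{BC}$ for the hypothesis‑testing relative entropy $\dheps{\Psi_{BC}}{\Psi_B\otimes\mu_C}{\eps}$, transport it through the classical unitaries $U_\ell$ to obtain a ``test operator'' $T_\ell$ for each $\ell\in\cS$, show that $T_\ell$ accepts $\tau_\ell$ with probability $\ge 1-\eps$ while $T_m$ accepts $\tau_\ell$ ($m\neq\ell$) with probability $\le 2^{-\dheps{\Psi_{BC}}{\Psi_B\otimes\mu_C}{\eps}}$, and then assemble the $T_\ell$ into a decoding POVM by a sequential measurement together with a non‑commutative union bound.

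\textbf{Test operators and their two properties.} Fix $0\preceq\Pi_{BC}\preceq\id$ with $\Tr(\Pi_{BC}\Psi_{BC})\ge 1-\eps$ and $\Tr(\Pi_{BC}(\Psi_B\otimes\mu_C))=2^{-\dheps{\Psi_{BC}}{\Psi_B\otimes\mu_C}{\eps}}\defeq\beta$. For $\ell\in\cS$ set
$$T_\ell\defeq U_\ell\big(\Pi_{BC_0}\otimes\ketbra{0}_{\qbit}\otimes\id_{C_1}\otimes\id_{\brc_2}\big)U_\ell^{\dagger},$$
where $\Pi_{BC_0}$ is $\Pi_{BC}$ acting on $BC_0$ and the bracketed operator is restricted to $B\brc_1\brc_2$ via $\cH_{\brc_1}\subseteq\cH_{\qbit}\otimes\cH_{C_0}\otimes\cH_{C_1}$, exactly as in Lemma~\ref{lem:Uprop2}; then $0\preceq T_\ell\preceq\id$. \emph{Completeness:} the $U_\ell^{\dagger}$ cancels the $U_\ell$ in $\tau_\ell$, so $\Tr(T_\ell\tau_\ell)=\Tr(\Pi_{BC_0}\Psi_{BC_0})=\Tr(\Pi_{BC}\Psi_{BC})\ge 1-\eps$. \emph{Pairwise soundness:} for $m\neq\ell$, Lemma~\ref{Uprops} gives $U_m^{\dagger}U_\ell=U_{\ell-m}$ with $\ell-m\not\equiv 0$, hence $\Tr(T_m\tau_\ell)=\Tr\big((\Pi_{BC_0}\otimes\ketbra{0}_{\qbit}\otimes\id_{C_1})\,\Tr_{\brc_2}[U_{\ell-m}(\Psi_{BC_0}\otimes\ketbra{0}_{\qbit}\otimes\mu_{C_1}\otimes\mu_{\brc_2})U_{\ell-m}^{\dagger}]\big)$; by Lemma~\ref{lem:Uprop2} (with $R\leftarrow B$) the partial trace equals $\Psi_B\otimes\mu_{\brc_1}$, and bounding $\mu_{\brc_1}\preceq\frac{1}{|\brc|}\id_{\qbit C_0C_1}$ together with $|\brc|\ge|C|^2$ gives $\Tr(T_m\tau_\ell)\le\frac{|C|^2}{|\brc|}\Tr(\Pi_{BC}(\Psi_B\otimes\mu_C))\le\beta$.

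\textbf{Assembling the decoder.} Fix any ordering $\cS=\{\ell_1,\dots,\ell_{|\cS|}\}$ and let the decoder apply the two‑outcome measurements $\{T_{\ell_1},\id-T_{\ell_1}\},\{T_{\ell_2},\id-T_{\ell_2}\},\dots$ in turn, outputting the first index whose test accepts and $-1$ if none does; this yields a POVM $\{\Lambda_{-1},\{\Lambda_\ell\}_{\ell\in\cS}\}$. When the true index is $\ell=\ell_k$, $\Tr(\Lambda_\ell\tau_\ell)$ is the probability that $T_{\ell_1},\dots,T_{\ell_{k-1}}$ all reject and $T_{\ell_k}$ accepts; a non‑commutative union bound (in the spirit of the position‑based decoding analysis of \cite{AnshuJW17CC}) controls this in terms of the single‑test failure $1-\Tr(T_{\ell_k}\tau_\ell)\le\eps$ and the cross‑acceptances $\sum_{j<k}\Tr(T_{\ell_j}\tau_\ell)\le(|\cS|-1)\beta$. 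Substituting the hypothesis $|\cS|\,\beta=|\cS|\,2^{-\dheps{\Psi_{BC}}{\Psi_B\otimes\mu_C}{\eps}}\le\frac{\delta^2}{4\eps}$ makes the union‑bound correction of order $\sqrt{\eps\,|\cS|\,\beta}\le\delta/2$, so $\Tr(\Lambda_\ell\tau_\ell)\ge 1-\eps-4\delta$ for every $\ell\in\cS$.

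\textbf{Main obstacle.} The delicate point is the last step: a crude pretty‑good‑measurement plus Hayashi--Nagaoka argument only delivers $1-2\eps-O(|\cS|\,\beta)=1-2\eps-O(\delta^2/\eps)$, which is weaker than claimed. One must instead use the sequential decoder with a non‑commutative union bound so that (a) the ``correct'' test contributes only $\eps$ rather than $2\eps$, and (b) the residual error appears in the geometric‑mean form $\sqrt{\eps\,|\cS|\,\beta}$, which is precisely the quantity the bound $|\cS|\le\frac{\delta^2}{4\eps}2^{\dheps{\Psi_{BC}}{\Psi_B\otimes\mu_C}{\eps}}$ is engineered to keep at most a constant multiple of $\delta$. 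A secondary bookkeeping issue, as in Lemma~\ref{lem:Uprop2} and Theorem~\ref{main:theo}, is tracking the subspace embeddings $\cH_{\brc}\subseteq\cH_{\qbit}\otimes\cH_C\otimes\cH_C$ consistently when defining $T_\ell$ and when applying Lemma~\ref{lem:Uprop2}.
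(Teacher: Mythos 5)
Your construction of the test operators and the two estimates on them is essentially the paper's: the paper also takes the optimizer $\Omega_{BC}$ of $\dheps{\Psi_{BC}}{\Psi_B\otimes\mu_C}{\eps}$, forms $U_\ell\Omega_{BC_0}U_\ell^{\dagger}$, gets completeness $\geq 1-\eps$ by unitary invariance, and bounds each cross term via Lemma \ref{lem:Uprop2} and $\mu_{\brc_1}\preceq\frac{|\qbit||C|^2}{|\brc|}\mu_{\qbit}\otimes\mu_{C_0}\otimes\mu_{C_1}$ (the paper gets $2\cdot 2^{-\dheps{\Psi_{BC}}{\Psi_B\otimes\mu_C}{\eps}}$; your extra $\ketbra{0}_{\qbit}$ removes the factor $2$, which is fine). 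The gap is in the final assembly. Your ``main obstacle'' paragraph rejects the pretty-good-measurement/Hayashi--Nagaoka route on the grounds that it only yields $1-2\eps-O(|\cS|\beta)$, but that is only the $c=1$ instance of the inequality. Fact \ref{haynag} carries a free parameter $c$, and the paper's proof simply builds the square-root POVM from the operators $U_\ell\Omega_{BC_0}U_\ell^{\dagger}$ and chooses $c=\delta/\eps$: then the correct-message term contributes $(1+c)\eps=\eps+\delta$ and the cross terms are weighted by $(2+c+c^{-1})\approx 4\eps/\delta$, so that $\frac{4\eps}{\delta}\,|\cS|\,2^{1-\dheps{\Psi_{BC}}{\Psi_B\otimes\mu_C}{\eps}}\leq 2\delta$ by the hypothesis on $|\cS|$, giving $\Tr(\Lambda_\ell\tau_\ell)\geq 1-\eps-4\delta$ directly. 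So the approach you dismissed is exactly the one that closes the argument.

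The substitute you propose --- a sequential decoder controlled by an unspecified ``non-commutative union bound'' --- is not actually carried out and, as stated, does not obviously work. The standard bounds do not have the form you need: Sen's bound gives an error of order $2\sqrt{\eps+|\cS|\beta}\approx 2\sqrt{\eps}$, and Gao's union bound gives $4(\eps+|\cS|\beta)$, both of which exceed $\eps+4\delta$ in the relevant regime. Union bounds for sequential decoding that do produce the asymmetric $(1+c)\eps+(2+c+c^{-1})\sum_{m\neq\ell}\Tr(T_m\tau_\ell)$ structure exist, but they are precisely Hayashi--Nagaoka-type statements, so invoking one of them amounts to the argument you claimed was insufficient; moreover such bounds are usually stated for projectors, so your non-projective $T_\ell$ would need an extra dilation step. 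As written, the step that actually produces the constant $1-\eps-4\delta$ is missing; replacing the sequential-decoder sketch by a direct application of Fact \ref{haynag} with $c=\delta/\eps$ repairs it and reproduces the paper's proof.
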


Along the lines similar to Theorem \ref{theo:posbased}, we have the following theorem for position-based decoding. We will directly use the registers and unitaries as introduced in Theorem \ref{theo:maxmut}. The proof appears in Section \ref{proofs:posbased}.

\begin{theorem}
\label{cor:posbased}
Let $\eps\in (0,1), \delta\in(0,\frac{1}{15})$ and $\Psi_{BC}, \omega_C$ be quantum states.  Let $\cS\subseteq \{0,1,\ldots |\brc|-1\}$ such that 
$$|\cS| \leq \frac{\delta^2}{4\eps}2^{\dheps{\Psi_{BC}}{\Psi_B\otimes \omega_C}{\eps}}.$$ Let $\sigma_C$ be the quantum state as constructed in the second part of Fact \ref{nearbygood} using $\omega_C$. Let $\tau_\ell$ be the quantum states as defined in Equation \ref{eq:tauellprime}, using the quantum states $\Psi_{RC}\leftarrow\Psi_{BC}$, $\sigma_{CE}$ and by choosing $|D|\leq 2n\cdot|E| \leq 2|E|^{1+\frac{1}{\delta}}$.  There exists a collection of POVM $\{\Lambda_{-1},\Lambda_\ell\}_{\ell\in \cS}$ such that $$\Tr\left(\Lambda_\ell\tau_\ell\right) \geq 1- \eps-64\delta \quad \forall \ell\in \cS.$$
\end{theorem}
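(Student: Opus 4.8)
The plan is to imitate the proof of Theorem~\ref{theo:posbased}, replacing the sharp test against $\Psi_B\otimes\mu_C$ by one built from the flattened reference state $\sigma_{CE}$ and the embezzling register $D$, and using Claims~\ref{embezclose} and~\ref{unembezclose} to dispose of the auxiliary registers. Let $\sigma_C$ be the state produced from $\omega_C$ by the second part of Fact~\ref{nearbygood}; that fact gives $h\defeq\dheps{\Psi_{BC}}{\Psi_B\otimes\sigma_C}{\eps}\ge\dheps{\Psi_{BC}}{\Psi_B\otimes\omega_C}{\eps}$ up to a small loss absorbed into the constant below, and I fix a test $0\preceq\Pi_{BC}\preceq\id$ attaining it, i.e.\ $\Tr(\Pi_{BC}\Psi_{BC})\ge1-\eps$ and $\Tr(\Pi_{BC}(\Psi_B\otimes\sigma_C))\le2^{-h}$. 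Write $\Phi_{BC_0E_0D}\defeq W(\Psi_{BC_0}\otimes\ketbra{0}_{E_0}\otimes\xi^{a:n}_D)W^{\dagger}$, so that $U_\ell^{\dagger}\tau_\ell U_\ell=\Phi_{BC_0E_0D}\otimes\ketbra{0}_{\qbit}\otimes\sigma_{C_1E_1}\otimes\mu_{\brce_2}$. Lift $\Pi_{BC}$ to $\hat\Pi_{BC_0E_0D}\defeq W(\Pi_{BC_0}\otimes\ketbra{0}_{E_0}\otimes\id_D)W^{\dagger}$; since $W$ carries the $E_0=\ket{0}$ subspace of each $c$-block onto $\supp(\sigma_{C_0E_0})$, this is an operator with $0\preceq\hat\Pi\preceq\id$ supported on $B\otimes\cH'_{C_0E_0}\otimes D$. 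Finally put $\Pi_\ell\defeq U_\ell\bigl(\hat\Pi_{BC_0E_0D}\otimes\ketbra{0}_{\qbit}\otimes\id_{C_1E_1}\otimes\id_{\brce_2}\bigr)U_\ell^{\dagger}$ for $\ell\in\cS$, read inside $\cH_{\brce_1}\otimes\cH_{\brce_2}$ (each $\Pi_\ell\in[0,\id]$).

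Completeness: $\Tr(\Pi_\ell\tau_\ell)=\Tr(\hat\Pi_{BC_0E_0D}\,\Phi_{BC_0E_0D})$, and inside this trace the two copies of $W$ cancel, leaving $\Tr(\Pi_{BC_0}\Psi_{BC_0})=\Tr(\Pi_{BC}\Psi_{BC})\ge1-\eps$. Soundness: for $\ell'\ne\ell$ set $m\defeq\ell-\ell'\ne0$, so $U_{\ell'}^{\dagger}U_\ell=U_m$ by the $\brce$-analogue of Lemma~\ref{Uprops}. Tracing out $\brce_2$ and running the off-diagonal-killing computation of Lemma~\ref{lem:Uprop2} --- now with $\supp(\sigma_{C_1E_1})$, of dimension $|C|/\gamma$, in the role of the auxiliary position and the relabeling $\supp(\sigma_{CE})\times\supp(\sigma_{CE})\hookrightarrow\{0,\dots,(|C|/\gamma)^2-1\}$ injective in each coordinate --- gives $\Tr_{\brce_2}[U_m(\Phi_{BC_0E_0D}\otimes\ketbra{0}_{\qbit}\otimes\sigma_{C_1E_1}\otimes\mu_{\brce_2})U_m^{\dagger}]=\Phi_{BD}\otimes\mu_{\brce_1}$, where $\Phi_{BD}=\Tr_{C_0E_0}\Phi_{BC_0E_0D}$. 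Restricting $\mu_{\brce_1}$ to the $\ket{0}_{\qbit}$-sector introduces the factor $(|C|/\gamma)^2/|\brce|\le1$ and the state $\sigma_{C_0E_0}\otimes\sigma_{C_1E_1}$ (because $\sigma_{CE}$ is uniform on its support), whence $\Tr(\Pi_{\ell'}\tau_\ell)\le\Tr\bigl(\hat\Pi_{BC_0E_0D}(\Phi_{BD}\otimes\sigma_{C_0E_0})\bigr)$.

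Two operator inequalities finish the soundness bound. Taking $\Tr_{C_0E_0}$ of Claim~\ref{embezclose} applied block-by-block in the $\sigma_C$-eigenbasis gives $\Phi_{BD}\preceq(1+15\delta)\,\Psi_B\otimes\xi^{1:n}_D$; and Claim~\ref{unembezclose} applied in each eigenblock --- its hypotheses hold since each block size $q(c)|C|/\gamma$ is at most $|E|$ while $n^2\ge|D|\ge(n+1)|E|$ follows from $|D|\le 2n|E|\le 2|E|^{1+1/\delta}$ and $\delta<1/15$ --- gives $W^{\dagger}(\xi^{1:n}_D\otimes\sigma_{C_0E_0})W\preceq4\,\sigma_{C_0}\otimes\ketbra{0}_{E_0}\otimes\xi^{1:|D|}_D$. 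Substituting both into $\hat\Pi_{BC_0E_0D}=W(\Pi_{BC_0}\otimes\ketbra{0}_{E_0}\otimes\id_D)W^{\dagger}$ collapses the trace to $4(1+15\delta)\Tr(\Pi_{BC}(\Psi_B\otimes\sigma_C))\le4(1+15\delta)\,2^{-h}$. Now glue $\{\Pi_\ell\}_{\ell\in\cS}$ into a POVM $\{\Lambda_{-1},\Lambda_\ell\}$ by the Hayashi--Nagaoka / sequential-decoding argument of Theorem~\ref{theo:posbased}: from $|\cS|\le\frac{\delta^2}{4\eps}2^{\dheps{\Psi_{BC}}{\Psi_B\otimes\omega_C}{\eps}}$ and $h\ge\dheps{\Psi_{BC}}{\Psi_B\otimes\omega_C}{\eps}-\mathcal{O}(\delta)$, the off-diagonal weight $|\cS|\cdot4(1+15\delta)2^{-h}$ is $\mathcal{O}(\delta^2/\eps)$, and absorbing the Hayashi--Nagaoka factor, the embezzling factor $4(1+15\delta)$, and the Fact~\ref{nearbygood} loss into the constant yields $\Tr(\Lambda_\ell\tau_\ell)\ge1-\eps-64\delta$.

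The conceptual steps are short; the real work is the register/subspace bookkeeping --- verifying that $\hat\Pi$, the projectors $\ketbra{0}_{\qbit}$ and the $U_m$ restrict correctly to $\cH_{\brce}$, that the relabeling of $\supp(\sigma_{CE})^{\otimes 2}$ keeps the off-diagonal identity of Lemma~\ref{lem:Uprop2} intact, and that every numeric hypothesis of Claims~\ref{embezclose} and~\ref{unembezclose} (the relations among $a$, $b=q(c)|C|/\gamma$, $n$ and $|D|$) holds in each eigenblock --- together with tracking the accumulated multiplicative constants carefully enough to land inside $64\delta$.
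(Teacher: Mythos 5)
Your proposal is correct and follows essentially the same route as the paper's proof: transport an optimal hypothesis test through $W$ and the $U_\ell$'s, kill the off-diagonal ($c\neq c'$) terms using the cyclic structure (the paper packages this as Claim \ref{embezprop2}), use Claims \ref{embezclose} and \ref{unembezclose} together with Fact \ref{nearbygood} to reduce the crossing term to $\Tr\left(\Pi_{BC}\left(\Psi_B\otimes\sigma_C\right)\right)$, and finish with Hayashi--Nagaoka. Your deviations are cosmetic and harmless --- carrying $\ketbra{0}_{E_0}$ and $\ketbra{0}_{\qbit}$ inside the test, and keeping the exact identity before inserting the $(1+15\delta)$ factor --- though note that the condition $|D|\geq (n+1)|E|$ needed for Claim \ref{unembezclose} is a choice of $|D|$ rather than a consequence of $|D|\leq 2n|E|$, exactly as in the paper's own implicit parameter choice.
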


\section{Applications}
\label{sec:apps}

\subsection{Entanglement-assisted quantum channel coding}
\label{subsec:chancode}

We show how exponential improvement in entanglement can be obtained for entanglement-assisted quantum channel coding, in comparison to the entanglement required in \cite{AnshuJW17CC}. We begin by defining an entanglement-assisted code. 
\begin{definition}
Fix an $\eps\in(0,1)$ and a positive integer $R$. Let $M'$ be a register of dimension $|M|=2^R$. A $(R, \eps)$ entanglement-assisted code for a quantum channel $\cN_{C\to B}$  consists of a shared entanglement $\ket{\Theta}_{E_AE_B}$ between Alice ($E_A$) and Bob ($E_B$) and
\begin{itemize}
\item An encoding operation $\cE_m: \cL(E_A)\rightarrow \cL(C)$ for each $m\in \{1,2, \ldots 2^R\}$,
\item A decoding operation $\cD: \cL(BE_B)\rightarrow \cL(M')$ which leads to a classical distribution on register $M'$ such that 
$$\Pr\left[M'\neq m\right] \leq \eps, \quad \forall m\in \{1,2, \ldots 2^R\}.$$
\end{itemize}
\end{definition}

We have the following theorem, near-optimality of which is shown by the converse given in \cite{MatthewsW14}. Its proof appears in Section \ref{proofs:apps}.
\begin{theorem}
\label{optchancode}
Let $\eps,\delta'\in(0,1), \delta\in(0,\frac{1}{25}), \gamma\in(0,\frac{1}{2})$. For any pure quantum state $\ket{\Psi}_{AC}$ and
$$R \leq \dheps{\chnl{\Psi_{AC}}}{\chnl{\Psi_A}\otimes\Psi_C}{\eps}-5 -\log\frac{4(\eps+4\gamma^{1/4})}{\delta'},$$
there exists a $(R, \eps + 4\gamma^{1/4}+ \delta'+ 20\sqrt{\delta})$ entanglement-assisted code for a quantum channel $\cN_{A\to B}$. The protocol uses $ \frac{1}{\delta}\log\frac{|A|}{\gamma\cdot\delta}$ qubits of shared entanglement and $4\log|A|$ bits of shared randomness. The latter can be fixed by standard derandomization argument.
\end{theorem}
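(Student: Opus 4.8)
The plan is to build the entanglement-assisted code by combining position-based decoding (Theorem \ref{cor:posbased}) for the decoding step with the flattening construction of Section \ref{sec:maxmutdec} to keep the shared entanglement small. First I would set up the flattened state: apply Fact \ref{nearbygood} to $\Psi_C$ to obtain a nearby quantum state $\sigma_C$ whose eigenvalues are integer multiples of $\gamma/|C|$ and with $\sigma_C \preceq (1+\text{small})\Psi_C$ (so that relative-entropy / $\mathrm{D}_\mathrm{H}$ quantities change only by $O(\log\frac{1}{\gamma})$ or $O(\gamma^{1/4})$ terms — this is where the $4\gamma^{1/4}$ in the error and the $\log(\ldots)$ in the rate come from). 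Then flatten $\sigma_C$ via the embezzling unitary $W$ of Definition \ref{unitaryflat}, obtaining $\sigma_{CE}$ uniform in its support, together with the near-catalytic register $D$ in state $\xi^{1:n}_D$. The shared entanglement $\ket{\Theta}_{E_AE_B}$ will be (a purification of) $\xi^{a:n}_D$ together with one copy of the flattened register-pair and the maximally mixed register $\brce_2$ needed in Equation \ref{eq:tauellprime}; since $\log|D| \le \frac{1}{\delta}\log\frac{|C|}{\gamma}$ and $|C| \le |A|$ (as $\ket{\Psi}_{AC}$ is pure, $\Psi_C$ and $\Psi_A$ have the same rank), this is $\frac{1}{\delta}\log\frac{|A|}{\gamma\cdot\delta}$ qubits as claimed.

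Next I would describe the encoding: for message $m$, Alice applies (a unitary dilation of) the map that produces $\tau_m$ from Equation \ref{eq:tauellprime} — i.e. she applies $W$ to her share, then the classical unitary $U_m$ from Definition \ref{def:Uell} (indexed by $m$ drawn from the shared-randomness set $S$ of size $N = |M| = 2^R$), and sends the register $\brce_1$ (which is $C$-sized, hence communicated through $\cN_{A\to B}$ — here I need $\brce_1$ to embed into the channel input register, using $|\brce| \le 2(\frac{|C|}{\gamma})^2$ but really only the $C$-part is sent, with $E$ absorbed into shared entanglement). After the channel acts, Bob holds $\cN_{A\to B}(\tau_m)$ on his registers. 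Bob's decoding is exactly the POVM $\{\Lambda_{-1},\Lambda_\ell\}_{\ell\in S}$ furnished by Theorem \ref{cor:posbased}, applied with $\Psi_{BC} \leftarrow \cN_{A\to B}(\Psi_{AC})$ (tracing $A$), $\omega_C \leftarrow \Psi_C$: it outputs $m$ with probability $\ge 1 - \eps - 64\delta$ on input $\tau_m$. (I would reconcile the $64\delta$ in Theorem \ref{cor:posbased} with the $20\sqrt{\delta}$ in the statement by choosing constants / using $\sqrt{\delta}$-scaling in the embezzling claims — Claims \ref{embezclose}, \ref{purembezzle} — carefully; this bookkeeping of error constants is routine but must be tracked.) The hypothesis-testing size constraint in Theorem \ref{cor:posbased} is $|S| \le \frac{\delta'^2}{4\eps}2^{\dheps{\Psi_{BC}}{\Psi_B\otimes\omega_C}{\eps}}$, which, after accounting for the flattening perturbation and taking logs, gives precisely $R \le \dheps{\chnl{\Psi_{AC}}}{\chnl{\Psi_A}\otimes\Psi_C}{\eps} - 5 - \log\frac{4(\eps+4\gamma^{1/4})}{\delta'}$.

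The main obstacle I expect is making the reduction from the abstract convex-split/position-based statements to an honest communication protocol fully rigorous: specifically (i) verifying that the register $\brce_1$ actually fits inside the channel's input register $A$ — one must check $|C| = |A|$ (rank equality for a pure state) and that only $C$, not the flattening register $E$, is transmitted, with $E$, $D$, $\brce_2$ all pre-shared; (ii) checking that the pre-shared entanglement, which includes the maximally mixed $\mu_{\brce_2}$ and $\mu_{C_1E_1}$ pieces, can be taken to be a \emph{pure} state $\ket{\Theta}_{E_AE_B}$ (standard: purify the maximally mixed parts, Bob holds the purifying systems, they play no further role); and (iii) propagating the several small error terms — the $\gamma^{1/4}$ from Fact \ref{nearbygood}, the $\sqrt{\delta}$ from embezzling, the $\delta'$ from the hypothesis-testing loss, and the base $\eps$ — additively through the triangle inequality for purified distance and the measurement, to land at total error $\eps + 4\gamma^{1/4} + \delta' + 20\sqrt{\delta}$. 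None of these is conceptually deep, but the interface-matching in (i) and the constant-tracking in (iii) are where a careless proof would break.
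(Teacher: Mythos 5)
There is a genuine gap, and it is the central idea of the paper's proof rather than bookkeeping. Your encoding step --- ``Alice applies $W$ and then the classical unitary $U_m$ of Definition \ref{def:Uell} and sends $\brce_1$'' --- is not an implementable encoding in the channel-coding setup. The states $\tau_\ell$ of Equation \ref{eq:tauellprime} (and hence Theorem \ref{cor:posbased}) are defined on the register $B$, which is the \emph{channel output} and does not exist before transmission, together with registers ($\sigma_{C_1E_1}$, $\mu_{\brce_2}$, and the flattened copy correlated with $B$) that must sit on Bob's side as pre-shared entanglement; the unitary $U_\ell$ acts jointly on all of them, so Alice cannot apply it locally, and the register that goes through $\cN_{A\to B}$ must be the channel input $A$, not a $C E$-sized register $\brce_1$. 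You flagged the dimension mismatch as ``interface-matching,'' but the obstruction is structural: no local operation of Alice realizes the convex-split member $\tau_m$ this way. What the paper does instead is (i) flatten $\Psi_C$ so that the shared state $\ket{\sigma'}_{ACE'E}$ is maximally entangled on the support of $\sigma_{CE}$, (ii) use Heisenberg--Weyl ($1$-design) unitaries $V_x$ on $\supp(\sigma_{CE})$ combined with pairwise-independent functions $f_m$ on shared randomness $X_1X_2$ (this is the source of the $4\log|A|$ bits in the statement, which your proposal does not account for), and (iii) invoke the transpose trick (Fact \ref{transposetrick}) so that Alice, by applying $(\wal)^{\dagger}V^T_{f_m(x_1,x_2)}\wal$ on \emph{her} purifying registers $AE'D'$, effectively applies $V_{f_m(x_1,x_2)}$ on Bob's flattened register $CE$ while leaving Bob's marginal untouched; she then sends $A$ through the channel and Bob decodes with a Hayashi--Nagaoka POVM built from $(V^{(m)}\wbob)\Omega_{BC}(V^{(m)}\wbob)^{\dagger}$. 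The classical unitaries $U_\ell$ do not admit this trick in the channel-coding geometry (they are used in the paper for state redistribution, not here), which is precisely why the paper's note before Theorem \ref{theo:posbased} defers the channel-coding dual to the $1$-design construction of Theorem \ref{paulisplit}/\ref{theo:paulimaxmut}.

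A secondary consequence of the wrong route is that your error accounting cannot be repaired by ``constant tracking'': the $20\sqrt{\delta}$ in the statement arises from the purified embezzling guarantee of Claim \ref{purembezzle} ($5\sqrt{\delta}$, doubled by the triangle inequality and again by Fact \ref{closestatesmeasurement}), while the $\eps+4\gamma^{1/4}$ and $\delta'$ come from Equation \ref{eq:sigmadh} and the choice $c=\delta'/(\eps+4\gamma^{1/4})$ in the Hayashi--Nagaoka bound; the $64\delta$ of Theorem \ref{cor:posbased} never enters. To fix the proposal you would need to replace steps built on Equation \ref{eq:tauellprime} and Theorem \ref{cor:posbased} by the transpose-trick encoding with the $1$-design unitaries and redo the collision-term estimate using Equation \ref{embezcrossvanish} and Claim \ref{unembezclose}, as the paper does.
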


\subsection{Consequences for quantum state merging and quantum state redistribution}
\label{subsec:stateredist}

Combining Corollary \ref{cor:maxmutnew} (which is a smooth version of Theorem \ref{theo:maxmut}; alternatively we could use Corollary \ref{cor:maxmut}) and Theorem \ref{cor:posbased}, we exponentially improve upon the entanglement cost of the protocol for quantum state redistribution given in \cite{AnshuJW17SR}. Since the proof is similar to that given in \cite{AnshuJW17SR}, we give the statement of the result.
\begin{corollary}
\label{cor:stateredist}
Fix $\eps\in(0,1), \delta \in (0, \frac{1}{15})$ and a pure quantum state $\ket{\Psi}_{RABC}$. There exists an entanglement-assisted one-way protocol in which Alice ($AC$), Bob ($B$) and Reference ($R$) start with the quantum state $\ket{\Psi}_{RABC}$ and Alice communicates a message to Bob such that the final state $\Phi_{RABC}$ between Alice ($A$), Bob ($BC$) and Reference ($R$) satisfies $\Phi_{RABC}\in \ball{4\eps + 65\delta}{\Psi_{RABC}}$. Reference plays no role in the protocol. The number of qubits of shared entanglement required is at most $\left(4+\frac{1}{\delta}\right)\log\frac{|C|}{\delta}$ and  the number of qubits communicated is
$$\min_{\omega_C}\frac{1}{2}\left(\dmaxeps{\Psi'_{RBC}}{\Psi'_{RB}\otimes \omega_{C}}{\eps} -  \dheps{\Psi_{BC}}{\Psi_B\otimes\omega_C}{\eps} + \log\frac{32}{\eps^2\delta^6}\right).$$
\end{corollary}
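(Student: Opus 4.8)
The plan is to re-run the one-shot quantum state redistribution protocol of \cite{AnshuJW17SR}, keeping its overall structure --- a coherent convex-split on Alice's side, a position-based ``merging'' on Bob's side, and transfer of the index register via the teleportation/super-dense-coding duality of Figure \ref{decoupling} (which is what produces the $\tfrac12$ in the communication) --- but replacing its two expensive ingredients by the efficient ones built above: the convex-split is done with Corollary \ref{cor:maxmutnew} (equivalently, Theorem \ref{theo:maxmut} applied to a near-optimal $\Psi'_{RBC}\in\ball{\eps}{\Psi_{RBC}}$ with reference state $\omega_C$), and the decoding is done with the position-based decoding of Theorem \ref{cor:posbased}. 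Here $\omega_C$ is chosen to attain the minimum in the claimed communication, $\sigma_C$ is the flattened state obtained from $\omega_C$ via Fact \ref{nearbygood}, and the flattening of Definitions \ref{broextend}--\ref{unitaryflat} is carried out, introducing the embezzling register $D$ (holding $\xi^{a:n}_D$), the registers $E$ and $\qbit$, the prime-dimensional registers $\brce_1,\brce_2$, and an index set $S$ of size $N\approx 3\cdot 2^{k+2}/\delta^3$ with $k=\dmax{\Psi'_{RBC}}{\Psi'_{RB}\otimes\omega_C}$.

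The single change that matters for the entanglement count is that these auxiliary systems amount to only $\log|D|+2\log|\brce|\le\big(4+\tfrac1\delta\big)\log\tfrac{|C|}{\gamma}$ qubits, by the bound in Theorem \ref{theo:maxmut}/Corollary \ref{cor:maxmutnew}, whereas \cite{AnshuJW17SR} used $\Theta(|C|)$ copies of a purification of $\omega_C$ for the same purpose; choosing $\gamma$ of order $\delta$ yields the stated $\big(4+\tfrac1\delta\big)\log\tfrac{|C|}{\delta}$ qubits of pre-shared entanglement. The communication bound and its $\tfrac12$ factor then come out exactly as in \cite{AnshuJW17SR}: the convex-split fixes $\log N\approx\dmaxeps{\Psi'_{RBC}}{\Psi'_{RB}\otimes\omega_C}{\eps}$ up to the usual $O(\log\tfrac1{\eps\delta})$ corrections from the definitions of $N$ and $|\cS|$, position-based decoding with Bob's side register $B$ lets Bob locate the active register (with error $\eps+64\delta$) and shaves off $\dheps{\Psi_{BC}}{\Psi_B\otimes\omega_C}{\eps}$ from what must be transmitted, and the teleportation duality turns the residual index transfer into $\tfrac12\big(\dmaxeps{\Psi'_{RBC}}{\Psi'_{RB}\otimes\omega_C}{\eps}-\dheps{\Psi_{BC}}{\Psi_B\otimes\omega_C}{\eps}+\log\tfrac{32}{\eps^2\delta^6}\big)$ qubits of communication, which one then minimizes over $\omega_C$.

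For the error analysis I would track purified distance through the protocol: $\eps$ from the initial smoothing step $\Psi_{RBC}\to\Psi'_{RBC}$; $2\eps+4\sqrt\delta$ from the convex-split of Corollary \ref{cor:maxmutnew}, which makes the reference decoupled and hence (by Uhlmann's theorem) guarantees a recovery isometry that, given the index, reconstructs $\ket{\Psi}_{RABC}$ with $C$ on Bob's side; $\eps+64\delta$ from the position-based decoding of Theorem \ref{cor:posbased}; and an $O(\sqrt\delta)$ from returning the embezzling register $D$ (via Claim \ref{unembezclose}/Claim \ref{purembezzle}) and the catalytic register $\brce_2$ (via Lemma \ref{lem:Uprop2}) to their original states while uncomputing. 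Assembling these with the triangle inequality and monotonicity of purified distance under the partial-isometry recovery maps, and relabelling the slack parameters, gives $\Phi_{RABC}\in\ball{4\eps+65\delta}{\Psi_{RABC}}$, with the reference $R$ acted on at no point.

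I expect the genuinely delicate step to be the \emph{coherent gluing} of the convex-split and the position-based decoding. Both primitives are stated for a fixed index $\ell$ (or for the averaged state $\tau$), whereas the protocol must run the convex-split in superposition over $\ell\in S$ and have Bob's decoding act on that superposition without collapsing the decoupled state --- so one has to dilate the POVM of Theorem \ref{cor:posbased} to an isometry, copy the outcome coherently, and uncompute --- and one must further check that the Uhlmann isometry supplied by the decoupling of Corollary \ref{cor:maxmutnew} can be taken to be implementable by Bob alone once he holds the index and his local half of the shared ancillas. This compatibility between ``decouple $R$'' and ``$B$ still locates the active register'' is exactly the technical heart of \cite{AnshuJW17SR}; the content of the present corollary is that this argument only ever uses the input/output guarantees that Corollary \ref{cor:maxmutnew} and Theorem \ref{cor:posbased} also provide, so it carries over verbatim with the cheaper ingredients and only the resource counts change.
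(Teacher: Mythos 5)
Your proposal matches the paper's own argument: the paper proves this corollary precisely by plugging the smooth flattening-based convex-split (Corollary \ref{cor:maxmutnew}, i.e.\ Theorem \ref{theo:maxmut}) and the position-based decoding of Theorem \ref{cor:posbased} into the quantum state redistribution protocol of \cite{AnshuJW17SR}, citing that reference for the coherent assembly and only re-counting the entanglement and communication, which is exactly what you do (including the parameter relabelling that absorbs the slack terms into the stated $4\eps+65\delta$ and $\log\frac{32}{\eps^2\delta^6}$).
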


By the argument in \cite{AnshuDJ14} that shows how a convex-split for $\Psi_{RC}$ can be used to obtain a protocol for the task of quantum state splitting, we obtain the following corollary using Theorem \ref{theo:maxmut}.

\begin{corollary}
\label{cor:statemerge}
Fix $\eps, \delta\in(0,\frac{1}{15})$ and a pure quantum state $\ket{\Psi}_{RAC}$. There exists an entanglement-assisted one-way protocol in which Alice ($AC$) and Reference ($R$) start with the quantum state $\ket{\Psi}_{RAC}$ and Alice communicates a message to Bob such that the final state $\Phi_{RAC}$ between Alice ($A$), Bob ($C$) and Reference ($R$) satisfies $\Phi_{RAC}\in \ball{2\eps + 8\sqrt{\delta}}{\Psi_{RABC}}$. Reference plays no role in the protocol. The number of qubits communicated is
$$\frac{1}{2}\imaxeps{R}{C}{\eps}_{\Psi}+2 + 2\log\frac{1}{\delta}.$$
The number of qubits of entanglement required is at most $\left(4+\frac{1}{\delta}\right)\log\frac{|C|}{\delta}$.
\end{corollary}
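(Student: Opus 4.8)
The plan is to deduce Corollary~\ref{cor:statemerge} from the convex-split statement Theorem~\ref{theo:maxmut} (in its smooth form, Corollary~\ref{cor:maxmutnew}) through the reduction of \cite{AnshuDJ14}, which converts any convex-split for $\Psi_{RC}$ into a one-way entanglement-assisted protocol for splitting the purification $\ket{\Psi}_{RAC}$. First I would instantiate Corollary~\ref{cor:maxmutnew} on the marginal $\Psi_{RC}$ of $\ket{\Psi}_{RAC}$, with $\omega_C=\Psi_C$ and $\gamma$ a suitable polynomial in $\delta$ (chosen so that $|C|/\gamma$ is an integer, after a negligible perturbation of $\Psi_C$ so that $\sigma_C$ of Fact~\ref{nearbygood} is well-defined). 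Here I would record that the exponent $k=\min_{\Psi'_{RC}\in\ball{\eps}{\Psi_{RC}}}\dmax{\Psi'_{RC}}{\Psi_R\otimes\Psi_C}$ appearing there is exactly the smooth max-information $\imaxeps{R}{C}{\eps}_{\Psi}$, since $\dmaxeps{\cdot}{\cdot}{\eps}$ smooths only its first argument. This produces $N=\frac{3\cdot 2^{k+2}}{\delta^3}$, the classical unitaries $\{U_\ell\}_{\ell\in S}$ of Definition~\ref{def:Uell} built on $\supp(\sigma_{CE})$, the states $\tau_\ell$ of Equation~\ref{eq:tauellprime}, and the guarantee $\Pur\!\left(\frac1N\sum_{\ell\in S}\tau_\ell,\ \Psi_R\otimes\mu_{\brce_1}\otimes\xi^{1:n}_D\otimes\mu_{\brce_2}\right)\le 2\eps+4\sqrt\delta$.

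Next I would write out the protocol, following \cite{AnshuDJ14}. Alice holds $\ket{\Psi}_{RAC}$; she locally prepares $E_0,Q$ in $\ket{0}$ and the embezzling state $\xi^{a:n}_D$, and takes the flattened resource $\sigma_{C_1E_1}$ and $\mu_{\brce_2}$ to be the Alice-halves of maximally entangled states pre-shared with Bob. She applies the flattening unitary $W$ of Definition~\ref{unitaryflat} followed by the controlled classical unitary $\sum_{\ell\in S}U_\ell\otimes\ketbra{\ell}_L$ with $\ell$ drawn uniformly from $S$, and communicates the index register $L$ to Bob, which costs $\frac12\log N$ qubits since $L$ carries $\log N$ bits. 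Because Alice's operations act only on her registers and commute with the reference $R$ and with her retained register $A$ (which purifies $\Psi_{RC}$), the above guarantee lifts to the global pure state: Bob's register $\brce_1$ is approximately maximally mixed and decoupled from $R$ while remaining correlated with $A$, and the catalysts $D,\brce_2$ are returned approximately to their initial forms. By Uhlmann's theorem~\cite{uhlmann76}, Bob can then apply a local isometry on $\brce_1$ (together with his pre-shared halves) that outputs a register $\equiv C$ in the state $\ket{\Psi}_{RAC}$ up to the convex-split error, along with fresh shared entanglement; the reference plays no role.

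Finally I would read off the resources. The communication is $\frac12\log N=\frac12\bigl(\imaxeps{R}{C}{\eps}_{\Psi}+2+\log 3+3\log\tfrac1\delta\bigr)\le \frac12\imaxeps{R}{C}{\eps}_{\Psi}+2+2\log\tfrac1\delta$, using $\tfrac12(\log 3+2)\le 2$ and $\tfrac32\le 2$. The pre-shared entanglement consists of the purifications of $\xi^{a:n}_D$, of $\sigma_{C_1E_1}$ (equivalently, of $\brce_1$ up to relabelling) and of $\mu_{\brce_2}$, together with the $O(\log N)$ ebits that support the communication; by the register count in Corollary~\ref{cor:maxmutnew} and $\log N=O\!\bigl(\log\tfrac{|C|}\delta\bigr)$ this is at most $\bigl(4+\tfrac1\delta\bigr)\log\tfrac{|C|}\delta$ for the above choice of $\gamma$. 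The error is the convex-split error $2\eps+4\sqrt\delta$ together with an $O(\sqrt\delta)$ contribution arising because the embezzling catalyst is returned as $\xi^{1:n}_D$ rather than $\xi^{a:n}_D$ (which differ in purified distance by at most $\sqrt\delta$, since $n\ge a^{1/\delta}$) and from the passage $\Psi_C\to\sigma_C$; collecting these gives the claimed $2\eps+8\sqrt\delta$.

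The work is entirely in the bookkeeping rather than in any new idea: one must be careful about which ancillas are local versus pre-shared, verify that $\brce_1$ is large enough — which is ensured by $|\brce_1|\ge(|C|/\gamma)^2\gg |C|$ — to host both the target $C$ and the residual entanglement, check that the ebits used for the communication do not push the entanglement past $\bigl(4+\tfrac1\delta\bigr)\log\tfrac{|C|}\delta$, and propagate the purified-distance estimates and smoothing parameters so that the final error is exactly $2\eps+8\sqrt\delta$; everything else is an application of Uhlmann's theorem and the triangle inequality, exactly as in the state-splitting argument of \cite{AnshuDJ14}.
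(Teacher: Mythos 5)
Your high-level plan is the intended one (the paper itself only says ``by the argument in \cite{AnshuDJ14} applied to Theorem \ref{theo:maxmut}''), but the way you execute the reduction has the two parties' roles reversed, and as written the protocol cannot work. You have Alice hold $C_0=C$, prepare $E_0,Q,D$ locally, keep the $C_1E_1$ and $\brce_2$ halves of the pre-shared entanglement, and apply $W$ and the controlled $U_\ell$ herself, sending only the index register $L$; Bob is then supposed to recover $C$ by an Uhlmann isometry. But in this configuration $\brce_1=QC_0E_0C_1E_1$ is entirely in Alice's lab (so the phrase ``Bob's register $\brce_1$'' is inconsistent with your own setup), and more importantly all of Alice's operations are local: for every fixed $\ell$ the reduced state on $R$ together with Bob's registers ($\bar{C}_1\bar{E}_1$, $\brce_2'$, and $L$, which is exactly decoupled from $R$) is a product state, unchanged from the start. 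Hence no isometry on Bob's side can output a register $C$ carrying the correlations with $R$ (and with Alice's retained $A$) required for $\Phi_{RAC}\approx\ketbra{\Psi}_{RAC}$, except when $\Psi_{RC}$ is already close to product. This direction of the construction is the one used for decoupling/channel coding, not for state splitting.

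The correct AnshuDJ14-style reduction is the mirror image, and it is what the paper's resource statement presupposes: the fixed states $\mu_{\brce_1}$, $\xi^{1:n}_D$, $\mu_{\brce_2}$ are pre-shared with the $\brce_1$, $D$, $\brce_2$ halves at \emph{Bob} (purified by Alice), so the initial global state is a purification of the right-hand side of Theorem \ref{theo:maxmut}/Corollary \ref{cor:maxmutnew}; Alice applies the \emph{Uhlmann} isometry (guaranteed by the purified-distance bound between the two purifications) on her registers $A,C$ and her purifying halves, which coherently places the $C$-content into Bob's $\brce_1$ at a superposed position $\ell$, producing a purification of $\tau=\frac1N\sum_{\ell\in S}\tau_\ell$ with the index in a register $L$ held by Alice; Alice then transfers $L$, and Bob decodes with $\sum_\ell U_\ell^{\dagger}\otimes\ketbra{\ell}_L$ followed by $W^{\dagger}$ (this is exactly why the paper remarks that efficiency of the flattening gives an efficient \emph{decoding} operation by Bob, and why $D$ must be part of the shared entanglement rather than a local ancilla of Alice). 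Finally, your justification of the factor $\tfrac12$ (``$\tfrac12\log N$ qubits since $L$ carries $\log N$ bits'') is not an argument: if $\ell$ were a classical random index it would cost $\log N$ bits; the $\tfrac12$ comes from transferring the coherent index register, using that $L$ is maximally mixed and exactly decoupled from $R$, via the entanglement-assisted transfer step of \cite{AnshuDJ14} (following \cite{Renner11}). With the roles corrected, your bookkeeping of $k=\imaxeps{R}{C}{\eps}_{\Psi}$, $N=\tfrac{3\cdot2^{k+2}}{\delta^3}$, the error $2\eps+O(\sqrt\delta)$, and the entanglement count $\left(4+\tfrac1\delta\right)\log\tfrac{|C|}{\delta}$ goes through as you describe.
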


Thus, the result improves upon the number of qubits communicated in \cite{Renner11} by an additive factor of $\log\log|C|$ and at the same time achieves the same number of qubits of entanglement required. It achieves the same communication as given in \cite{AnshuDJ14}, but exponentially improves upon the number of qubits of entanglement.

\section{Proofs used in main theorems}

 \subsection{Basic facts used in our proofs}	

We will use the following facts. 
\begin{fact}[Triangle inequality for purified distance~\cite{Tomamichel12}]
\label{fact:trianglepurified}
For states $\rho_A, \sigma_A, \tau_A\in \mathcal{D}(A)$,
$$\Pur(\rho_A,\sigma_A) \leq \Pur(\rho_A,\tau_A)  + \Pur(\tau_A,\sigma_A) . $$ 
\end{fact}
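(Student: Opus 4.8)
The statement in question is Fact~\ref{fact:trianglepurified}, the triangle inequality for purified distance. Let me think about how I would prove this.

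\textbf{Plan of attack.}

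The plan is to reduce the triangle inequality for purified distance to a more familiar triangle inequality that is already known: either the triangle inequality for the angle $\arccos\F(\rho,\sigma)$ (since $\Pur(\rho,\sigma)=\sin(\arccos \F(\rho,\sigma))$ is a monotone function of the angle), or the triangle inequality for the trace distance via the Fuchs--van de Graaf inequalities, or most cleanly, by exploiting Uhlmann's theorem to lift everything to pure states where purified distance coincides with (a function of) the Euclidean distance between unit vectors.

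\textbf{Key steps in order.}

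First I would recall that for pure states $\ket{\psi},\ket{\phi}$, we have $\F(\psi,\phi)=|\braket{\psi}{\phi}|$ and hence $\Pur(\psi,\phi)=\sqrt{1-|\braket{\psi}{\phi}|^2}$; after absorbing a global phase this is exactly the sine of the angle between the two rays. Second, I would use Uhlmann's theorem: given $\rho_A,\sigma_A,\tau_A$, fix a purification $\ket{\rho}_{AB}$; choose a purification $\ket{\tau}_{AB}$ maximizing $|\braket{\rho}{\tau}|=\F(\rho_A,\tau_A)$; then choose a purification $\ket{\sigma}_{AB}$ maximizing $|\braket{\tau}{\sigma}|=\F(\tau_A,\sigma_A)$ (enlarging $B$ if necessary). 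Third, since $\F(\rho_A,\sigma_A)\geq |\braket{\rho}{\sigma}|$ by monotonicity of fidelity under partial trace, we get $\Pur(\rho_A,\sigma_A)\leq \Pur(\rho_{AB},\sigma_{AB})$, so it suffices to prove the inequality for the three pure states $\ket{\rho},\ket{\tau},\ket{\sigma}$ on $AB$. Fourth, for pure states the claim is the statement that the ``sine distance'' $\ket{x},\ket{y}\mapsto \sqrt{1-|\braket{x}{y}|^2}$ satisfies the triangle inequality; this follows from the fact that $\arccos|\braket{x}{y}|$ is a genuine metric on the projective space (the Fubini--Study angle) together with the subadditivity of $\sin$ on $[0,\pi]$ restricted appropriately, or directly: writing $\alpha=\arccos|\braket{\rho}{\tau}|$, $\beta=\arccos|\braket{\tau}{\sigma}|$, one has $\arccos|\braket{\rho}{\sigma}|\le \alpha+\beta$, and $\sin$ is concave and increasing on $[0,\pi/2]$ with $\sin(\alpha+\beta)\le \sin\alpha+\sin\beta$ whenever $\alpha,\beta,\alpha+\beta$ are handled by the standard angle-sum estimate $\sin(\alpha+\beta)=\sin\alpha\cos\beta+\cos\alpha\sin\beta\le\sin\alpha+\sin\beta$ (valid when all angles are in $[0,\pi/2]$, and for larger angles the left side is at most $1$ while one of the right-hand terms already reaches $1$). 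Combining, $\Pur(\rho_A,\sigma_A)\le\Pur(\rho_{AB},\sigma_{AB})=\sin(\arccos|\braket{\rho}{\sigma}|)\le\sin\alpha+\sin\beta=\Pur(\rho_A,\tau_A)+\Pur(\tau_A,\sigma_A)$.

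\textbf{Main obstacle.}

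The one genuinely delicate point is the simultaneous choice of purifications: Uhlmann's theorem lets me optimize one inner product at a time, but I need \emph{one} purification $\ket{\tau}_{AB}$ that is simultaneously Uhlmann-optimal against $\ket{\rho}$ and against $\ket{\sigma}$. This is handled by the standard trick of working on a sufficiently large reference system and using the freedom of an isometry on $B$: fix $\ket{\rho}$, pick $\ket{\tau}$ achieving $\F(\rho_A,\tau_A)$, then---with $\ket{\tau}$ now fixed---pick $\ket{\sigma}$ achieving $\F(\tau_A,\sigma_A)$; no constraint is violated because the second optimization only touches the purification of $\sigma_A$. The other minor nuisance is the case analysis in the pure-state angle-sum step when $\alpha+\beta>\pi/2$, which is dispatched by noting $\Pur\le 1$ always. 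Since this Fact is cited from \cite{Tomamichel12}, in the paper one could alternatively just invoke it, but the above is the self-contained argument I would write.
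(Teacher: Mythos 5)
The paper does not prove this statement at all: it is imported as a Fact with a citation to \cite{Tomamichel12}, so there is no internal argument to compare against. Your self-contained proof is essentially the standard one from that reference (lift $\rho_A,\tau_A,\sigma_A$ to purifications on a common system via Uhlmann's theorem, use monotonicity of fidelity under partial trace to get $\Pur(\rho_A,\sigma_A)\leq\Pur(\rho_{AB},\sigma_{AB})$, then the triangle inequality for the Fubini--Study angle together with subadditivity of sine), and it is sound; the simultaneous-purification worry you flag is indeed a non-issue for exactly the reason you give, since the second Uhlmann optimization only adjusts the purification of $\sigma_A$. One small slip in the endgame: writing $\alpha=\arccos\F(\rho_A,\tau_A)$, $\beta=\arccos\F(\tau_A,\sigma_A)$, the addition formula gives $\sin(\alpha+\beta)\leq\sin\alpha+\sin\beta$ for all $\alpha,\beta\in[0,\pi/2]$, but the step that actually needs the case split is $\sin\theta\leq\sin(\alpha+\beta)$ for $\theta=\arccos|\braket{\rho}{\sigma}|\leq\alpha+\beta$, which uses monotonicity of $\sin$ and hence $\alpha+\beta\leq\pi/2$; in the remaining case $\alpha+\beta>\pi/2$ your claim that ``one of the right-hand terms already reaches $1$'' is not literally true (take $\alpha=\beta=\pi/3$), but the fix is immediate: $\beta>\pi/2-\alpha$ gives $\sin\alpha+\sin\beta>\sin\alpha+\cos\alpha\geq 1\geq\Pur(\rho_A,\sigma_A)$. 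With that one-line patch your argument is complete.
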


\begin{fact}[Monotonicity under quantum operations~\cite{barnum96,lindblad75}]
	\label{fact:dataprocessing}
For quantum states $\rho$, $\sigma \in \mathcal{D}(A)$, and quantum operation $\cE(\cdot):\mathcal{L}(A)\rightarrow \mathcal{L}(B)$, it holds that
\begin{align*}
	\|\cE(\rho) - \cE(\sigma)\|_1 \leq \|\rho - \sigma\|_1 \quad \mbox{and} \quad \F(\cE(\rho),\cE(\sigma)) \geq \F(\rho,\sigma) \quad \mbox{and} \quad \relent{\rho}{\sigma}\geq \relent{\cE(\rho)}{\cE(\sigma)}.
\end{align*}
\end{fact}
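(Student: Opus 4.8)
All three inequalities are standard ``data-processing'' facts, and the plan is to prove them in increasing order of difficulty, in each case using either a variational description of $\cE$ or its Stinespring dilation. In the paper these would ordinarily just be cited, but here is how I would argue them.

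For the trace norm, the cleanest route is the variational identity $\|X\|_1 = \max\{\,\Tr(MX) : -\id \preceq M \preceq \id\,\}$. Given such an $M$ on $\cH_B$, the Heisenberg-picture adjoint $\cE^\dagger$ is positive and, because $\cE$ is trace preserving, unital; hence $-\id \preceq \cE^\dagger(M) \preceq \id$ on $\cH_A$. Then $\Tr\big(M(\cE(\rho)-\cE(\sigma))\big) = \Tr\big(\cE^\dagger(M)(\rho-\sigma)\big) \le \|\rho-\sigma\|_1$, and maximizing over $M$ gives the claim. Equivalently, I could split $\rho-\sigma=P-Q$ into orthogonal positive and negative parts, apply positivity and trace preservation of $\cE$, and invoke the triangle inequality for $\|\cdot\|_1$.

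For fidelity, I would invoke Uhlmann's theorem together with a Stinespring dilation $\cE(X)=\Tr_E(VXV^\dagger)$, where $V:\cH_A\to\cH_B\otimes\cH_E$ is an isometry. Choosing purifications $\ket{\psi}_{AR}$ of $\rho$ and $\ket{\phi}_{AR}$ of $\sigma$ with $|\braket{\psi}{\phi}|=\F(\rho,\sigma)$, the vectors $(V\otimes\id_R)\ket{\psi}$ and $(V\otimes\id_R)\ket{\phi}$ purify $\cE(\rho)$ and $\cE(\sigma)$ respectively, and since $V^\dagger V=\id_A$ their overlap is unchanged; as $\F(\cE(\rho),\cE(\sigma))$ is the supremum of such overlaps over all purifications, the inequality follows.

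For relative entropy, I would reduce a general channel to the three elementary moves of adjoining a fixed ancilla $\ketbra{0}_E$, conjugating by a unitary, and tracing out a subsystem, via $\cE(X)=\Tr_E\big(U(X\otimes\ketbra{0}_E)U^\dagger\big)$. The first two leave $\relent{\cdot}{\cdot}$ invariant (immediate from the definition and from unitary invariance of $\Tr(\rho\log\rho)$), so everything reduces to monotonicity under the partial trace, $\relent{\rho_{AB}}{\sigma_{AB}}\ge\relent{\rho_A}{\sigma_A}$. This is Lindblad's theorem, which I would obtain from joint convexity of the relative entropy; the one genuinely nontrivial ingredient is Lieb's concavity theorem (joint concavity of $(A,B)\mapsto\Tr(K^\dagger A^{t}KB^{1-t})$ for $t\in[0,1]$), from which joint convexity --- and hence the partial-trace bound --- follows by a standard argument. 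This relative-entropy step is the only real obstacle: once the dual and dilation descriptions are in hand the trace-distance and fidelity bounds are elementary, whereas monotonicity of relative entropy rests on operator-convexity input (Lieb/Lindblad) rather than on any soft manipulation.
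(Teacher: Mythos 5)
Your outline is correct: the trace-norm bound via the variational formula $\|X\|_1=\max\{\Tr(MX):-\id\preceq M\preceq\id\}$ and the unital positive adjoint map, the fidelity bound via Uhlmann's theorem applied through a Stinespring isometry, and the relative-entropy bound via the reduction to partial-trace monotonicity (Lindblad, from Lieb's joint convexity) are exactly the standard arguments. The paper itself offers no proof of this Fact — it is stated with citations to the original literature — so there is nothing to compare against beyond noting that your sketch reproduces the proofs in those references and correctly identifies the Lieb/Lindblad operator-convexity input as the only nontrivial step.
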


\begin{fact}[Uhlmann's theorem~\cite{uhlmann76}]
\label{uhlmann}
Let $\rho_A,\sigma_A\in \mathcal{D}(A)$. Let $\rho_{AB}\in \mathcal{D}(AB)$ be a purification of $\rho_A$ and $\sigma_{AC}\in\mathcal{D}(AC)$ be a purification of $\sigma_A$. There exists an isometry $V: C \rightarrow B$ such that,
 $$\F(\ketbra{\theta}_{AB}, \ketbra{\rho}_{AB}) = \F(\rho_A,\sigma_A) ,$$
 where $\ket{\theta}_{AB} = (\id_A \otimes V) \ket{\sigma}_{AC}$.
\end{fact}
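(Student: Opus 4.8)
The statement to prove is the standard Uhlmann theorem: given $\rho_A,\sigma_A\in\mathcal{D}(A)$ with purifications $\rho_{AB}$ and $\sigma_{AC}$, there is an isometry $V:C\to B$ with $\F(\ketbra{\theta}_{AB},\ketbra{\rho}_{AB})=\F(\rho_A,\sigma_A)$ where $\ket{\theta}_{AB}=(\id_A\otimes V)\ket{\sigma}_{AC}$. The plan is to reduce the comparison of arbitrary purifications to a canonical ``reference'' purification on a fixed space, exploit the fact that any two purifications of the same state on the same space are related by an isometry on the purifying system, and then compute the overlap of the two canonical purifications explicitly, recognizing the answer as $\Tr|\sqrt{\rho_A}\sqrt{\sigma_A}|=\F(\rho_A,\sigma_A)$.

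First I would fix an auxiliary space $\cH_{A'}$ with $A'\equiv A$ and form the canonical purifications $\ket{\rho}_{AA'}=(\sqrt{\rho_A}\otimes\id_{A'})\ket{\Omega}$ and $\ket{\sigma}_{AA'}=(\sqrt{\sigma_A}\otimes\id_{A'})\ket{\Omega}$, where $\ket{\Omega}=\sum_i\ket{i}_A\ket{i}_{A'}$ is the unnormalized maximally entangled vector. A direct computation, using the ``transpose trick'' $(X\otimes\id)\ket{\Omega}=(\id\otimes X^{T})\ket{\Omega}$ and $\Tr_{A'}\ketbra{\Omega}=\id_A$, confirms these are purifications of $\rho_A,\sigma_A$ respectively. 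Then I would compute the inner product
$$\braket{\rho}{\sigma}_{AA'}=\bra{\Omega}(\sqrt{\rho_A}\sqrt{\sigma_A}\otimes\id_{A'})\ket{\Omega}=\Tr(\sqrt{\rho_A}\sqrt{\sigma_A}).$$
More generally, for any unitary $U$ on $A'$, the state $(\id_A\otimes U)\ket{\sigma}_{AA'}$ is still a purification of $\sigma_A$, and $|\bra{\rho}_{AA'}(\id_A\otimes U)\ket{\sigma}_{AA'}|=|\Tr(\sqrt{\rho_A}\sqrt{\sigma_A}U^{T})|$, which is maximized over unitaries $U$ at $\Tr|\sqrt{\rho_A}\sqrt{\sigma_A}|=\F(\rho_A,\sigma_A)$ by the polar decomposition $\sqrt{\rho_A}\sqrt{\sigma_A}=|\sqrt{\rho_A}\sqrt{\sigma_A}|\,\cdot W$ (choosing $U^{T}=W^{\dagger}$ on the support and extending to a unitary). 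This handles the case of two canonical purifications on the same purifying space.

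Next I would pass to arbitrary purifications. Since $\rho_{AB}$ is a purification of $\rho_A$, the standard fact that any two purifications of the same state are related by an isometry on the purifying register gives an isometry $V_1:A'\to B$ with $\ket{\rho}_{AB}=(\id_A\otimes V_1)\ket{\rho}_{AA'}$; similarly there is an isometry $V_2:A'\to C$ with $\ket{\sigma}_{AC}=(\id_A\otimes V_2)\ket{\sigma}_{AA'}$ (after first choosing the phase/unitary $U$ on $A'$ as above so that the canonical purification realizes the optimal overlap). Define $V:=V_1 U^{\dagger} V_2^{\dagger}$ restricted appropriately; more cleanly, set $\ket{\theta}_{AB}=(\id_A\otimes V)\ket{\sigma}_{AC}$ with $V=V_1\,(U^{\dagger})\,V_2^{\dagger}$, which is an isometry from (the support of $\sigma$ in) $C$ into $B$, extended arbitrarily to all of $C$. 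Then $\ket{\theta}_{AB}=(\id_A\otimes V_1 U^{\dagger})\ket{\sigma}_{AA'}$, so $\braket{\theta}{\rho}_{AB}=\bra{\sigma}_{AA'}(\id_A\otimes U (V_1^{\dagger}V_1))\ket{\rho}_{AA'}=\bra{\sigma}_{AA'}(\id_A\otimes U)\ket{\rho}_{AA'}$, which by the previous paragraph has absolute value $\F(\rho_A,\sigma_A)$; adjusting a global phase on $V$ makes it exactly equal. Finally, since $\F(\ketbra{\theta}_{AB},\ketbra{\rho}_{AB})=|\braket{\theta}{\rho}_{AB}|$ for pure states, we conclude, and the monotonicity of fidelity under the partial trace $\Tr_B$ shows this is the best possible (giving the ``$\leq$'' that pins down the value, though only equality is asserted in the statement).

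The main obstacle is bookkeeping the isometries and phases correctly: the purifying spaces $B$ and $C$ may have different dimensions than each other and than $A'$, so one must be careful that $V:C\to B$ is genuinely an isometry (which requires $\dim C\le\dim B$, guaranteed here because $\ket{\theta}_{AB}$ lives in $\cH_A\otimes\cH_B$ and is a purification of $\rho_A$ hence has Schmidt rank $\le\dim B$; if $\dim C>\dim B$ one restricts to the support of $\sigma_A$'s purification, which has rank $\le\min(\dim A,\dim C)$ and is also $\le\dim B$ since $\rho_A$ and $\sigma_A$ can be taken comparable on supports — more simply, one only needs $V$ to be a partial isometry on $\supp(\sigma_C)$ where $\sigma_C=\Tr_A\ketbra{\sigma}_{AC}$). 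The other delicate point is the optimization $\max_U|\Tr(\sqrt{\rho_A}\sqrt{\sigma_A}U)|=\Tr|\sqrt{\rho_A}\sqrt{\sigma_A}|$, which is a clean application of the polar decomposition and the variational characterization of the trace norm, but the construction of the maximizing $U$ must be spelled out on the support and extended to a full unitary.
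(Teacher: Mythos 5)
The paper does not prove this statement; it is imported verbatim as a Fact with a citation to Uhlmann's original work, so there is no in-paper proof to compare against. Your argument is the standard textbook proof — reduce to canonical purifications $(\sqrt{\rho_A}\otimes\id)\ket{\Omega}$, compute $\bra{\rho}(\id\otimes U)\ket{\sigma}=\Tr(\sqrt{\rho_A}\sqrt{\sigma_A}U^{T})$, maximize over unitaries via polar decomposition to get $\|\sqrt{\rho_A}\sqrt{\sigma_A}\|_1=\F(\rho_A,\sigma_A)$, and transport the optimum to the given purifications by isometries relating them to the canonical ones — and it is correct, including your correct observation that $V=V_1U^{\dagger}V_2^{\dagger}$ is a priori only a partial isometry (one needs $\dim C\le\dim B$, or a restriction to $\supp(\sigma_C)$, to extend it to a genuine isometry $C\to B$); this caveat is implicit in the paper's statement as well, which tacitly assumes compatible dimensions.
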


\begin{fact}[Gentle measurement lemma~\cite{Winter:1999,Ogawa:2002}]
\label{gentlelemma}
Let $\rho$ be a quantum state and $0<A<I$ be an operator. Then 
$$\F(\rho, \frac{A\rho A}{\Tr(A^2\rho)})\geq \sqrt{\Tr(A^2\rho)}.$$
\end{fact}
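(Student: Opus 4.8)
The plan is to prove the Gentle measurement lemma (Fact~\ref{gentlelemma}) by lifting the two states to purifications on a common system and then invoking monotonicity of fidelity under the partial trace (Fact~\ref{fact:dataprocessing}), which is already available. Set $p\defeq\Tr(A^2\rho)=\Tr(A\rho A)$, which is strictly positive since $A>0$ and $\rho\neq 0$, and write $\sigma\defeq \frac{A\rho A}{p}$. Fix any purification $\ket{\psi}_{AB}$ of $\rho$, i.e.\ $\Tr_B\ketbra{\psi}_{AB}=\rho$. The first step is to write down a convenient purification of $\sigma$ on the \emph{same} system: define $\ket{\phi}_{AB}\defeq \frac{1}{\sqrt p}(A_A\otimes\id_B)\ket{\psi}_{AB}$. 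A one-line computation checks $\langle\phi|\phi\rangle=\frac1p\langle\psi|A^2\otimes\id|\psi\rangle=\frac1p\Tr(A^2\rho)=1$ and $\Tr_B\ketbra{\phi}_{AB}=\frac1p A\rho A=\sigma$, so $\ket{\phi}$ is indeed a normalized purification of $\sigma$.

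Next I would apply Fact~\ref{fact:dataprocessing} to the CPTP map $\Tr_B$, using that $\Tr_B\ketbra{\psi}_{AB}=\rho$ and $\Tr_B\ketbra{\phi}_{AB}=\sigma$, and that for pure states the quantity $\|\sqrt{\cdot}\sqrt{\cdot}\|_1$ reduces to the modulus of the overlap:
$$\F(\rho,\sigma)\ \geq\ \F(\ketbra{\psi}_{AB},\ketbra{\phi}_{AB})\ =\ |\langle\psi|\phi\rangle|\ =\ \tfrac{1}{\sqrt p}\,\langle\psi|A\otimes\id|\psi\rangle\ =\ \tfrac{\Tr(A\rho)}{\sqrt p}.$$
(Equivalently, one may cite Uhlmann's theorem, Fact~\ref{uhlmann}, directly, taking the isometry-image purification to be $\ket{\phi}$.) Finally, since $0<A<\id$ we have $A-A^2=A^{1/2}(\id-A)A^{1/2}\geq 0$, hence $\Tr(A\rho)\geq\Tr(A^2\rho)=p$, and therefore $\F(\rho,\sigma)\geq p/\sqrt p=\sqrt p=\sqrt{\Tr(A^2\rho)}$, which is the claimed bound.

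There is essentially no deep obstacle here; the only points requiring care are (i) verifying that $\ket{\phi}$ is genuinely normalized and traces down to $\sigma$ (handled above), and (ii) respecting the paper's fidelity convention $\F(\rho,\sigma)=\|\sqrt\rho\sqrt\sigma\|_1$, so that the fidelity of pure states is the \emph{modulus} of the inner product rather than its square — this is what makes the final estimate come out with a single square root. An alternative, slightly more computational route avoids purifications: write $A\rho A=(A\sqrt\rho)(A\sqrt\rho)^{\dagger}$, take the polar decomposition $A\sqrt\rho=\sqrt{A\rho A}\,W$ with $W$ a partial isometry, deduce $\|\sqrt\rho\sqrt{A\rho A}\|_1=\|\sqrt\rho\,A\sqrt\rho\,W^{\dagger}\|_1=\|\sqrt\rho\,A\sqrt\rho\|_1=\Tr(A\rho)$ by unitary invariance of the trace norm, divide by $\sqrt p$, and finish with $\Tr(A\rho)\geq\Tr(A^2\rho)$ as before; there the only subtlety is the partial-isometry bookkeeping.
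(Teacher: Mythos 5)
Your proof is correct. Note that the paper does not prove this statement at all: it is listed among the basic facts and simply cited to Winter and to Ogawa--Nagaoka, so there is no in-paper argument to compare against; your purification argument is the standard one from that literature. The key steps all check out under the paper's convention $\F(\rho,\sigma)=\|\sqrt{\rho}\sqrt{\sigma}\|_1$: with $p\defeq\Tr(A^2\rho)>0$ (strict positivity uses $A>0$), the vector $\ket{\phi}\defeq\frac{1}{\sqrt p}(A\otimes\id)\ket{\psi}$ is a normalized purification of $\frac{A\rho A}{p}$, monotonicity of fidelity under the partial trace (the paper's Fact \ref{fact:dataprocessing}) gives $\F\bigl(\rho,\tfrac{A\rho A}{p}\bigr)\geq|\braket{\psi}{\phi}|=\frac{\Tr(A\rho)}{\sqrt p}$, and $0<A<\id$ gives $\Tr(A\rho)\geq\Tr(A^2\rho)=p$, hence the bound $\sqrt p$. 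Your alternative polar-decomposition route is also sound (and in fact yields the exact value $\F=\Tr(A\rho)/\sqrt p$, since $\|\sqrt\rho\,A\sqrt\rho\|_1=\Tr(A\rho)$ by positivity); the only point needing the care you flag is that right-multiplication by the partial isometry $W^{\dagger}$ preserves the trace norm here because $A$ is invertible, so the relevant supports coincide with $\supp(\rho)$. Either version would serve as a complete proof of the cited fact.
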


Following fact implies the Pinsker's inequality.

\begin{fact}[Lemma 5~\cite{Jain:2003a}]
\label{pinsker}
For quantum states $\rho_A,\sigma_A\in\mathcal{D}(A)$, 
$$\F(\rho,\sigma) \geq 2^{-\frac{1}{2}\relent{\rho}{\sigma}}.$$
\end{fact}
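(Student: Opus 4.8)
\textbf{Proof proposal for Fact \ref{pinsker}.}

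The plan is to derive the bound $\F(\rho,\sigma)\geq 2^{-\frac{1}{2}\relent{\rho}{\sigma}}$ from a single-letter inequality relating the fidelity of two density matrices to their relative entropy, essentially the quantum Pinsker-type estimate of~\cite{Jain:2003a}. The cleanest route I would take is through the \emph{quantum Chernoff/Hellinger} comparison: for any $s\in[0,1]$ one has $\F(\rho,\sigma)=\|\sqrt{\rho}\sqrt{\sigma}\|_1\geq \Tr(\rho^{1/2}\sigma^{1/2})$, and more generally the quantity $Q_s(\rho\|\sigma)\defeq \Tr(\rho^s\sigma^{1-s})$ is log-convex in $s$ with $Q_1=1$. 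Differentiating $\log Q_s$ at $s=1$ gives $\relent{\rho}{\sigma}$ (up to sign and the natural-log/base-$2$ conversion), so convexity yields $\log_2 Q_s \geq (s-1)\relent{\rho}{\sigma}$ for $s\in[0,1]$. Setting $s=\tfrac12$ gives $\log_2 \Tr(\rho^{1/2}\sigma^{1/2}) \geq -\tfrac12\relent{\rho}{\sigma}$, and since $\F(\rho,\sigma)\geq \Tr(\rho^{1/2}\sigma^{1/2})$ this is exactly the claim after exponentiating.

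Concretely, the key steps in order would be: (i) reduce to the case $\supp(\rho)\subseteq\supp(\sigma)$, since otherwise $\relent{\rho}{\sigma}=+\infty$ and the inequality is trivial; (ii) define $f(s)=\log \Tr(\rho^{s}\sigma^{1-s})$ on $[0,1]$ (with $\log$ the natural log) and verify $f$ is well defined and finite on this support condition; (iii) prove $f$ is convex on $[0,1]$ — this is the technical heart, and the standard argument is to write $\Tr(\rho^s\sigma^{1-s})$ via the integral representation of operator power functions, or to invoke the Lieb concavity / operator H\"older inequality, which gives $\Tr(\rho^{s}\sigma^{1-s})\le \Tr(\rho^{s_1}\sigma^{1-s_1})^{\theta}\Tr(\rho^{s_2}\sigma^{1-s_2})^{1-\theta}$ for $s=\theta s_1+(1-\theta)s_2$; (iv) compute $f(1)=0$ and $f'(1)=\Tr(\rho\log\rho)-\Tr(\rho\log\sigma)=(\ln 2)\relent{\rho}{\sigma}$; (v) conclude by convexity that $f(s)\ge f(1)+(s-1)f'(1)=(s-1)(\ln 2)\relent{\rho}{\sigma}$; (vi) specialize $s=\tfrac12$, divide by $\ln 2$, exponentiate, and use $\F(\rho,\sigma)=\|\sqrt\rho\sqrt\sigma\|_1\ge |\Tr(\sqrt\rho\sqrt\sigma)|=\Tr(\rho^{1/2}\sigma^{1/2})$ (the last equality holding since $\rho^{1/2}\sigma^{1/2}$ has nonnegative trace as $\Tr(\rho^{1/2}\sigma^{1/2})=\Tr(\sigma^{1/4}\rho^{1/2}\sigma^{1/4})\ge 0$).

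The main obstacle is step (iii), the convexity (equivalently log-convexity of $s\mapsto\Tr(\rho^s\sigma^{1-s})$): proving it from scratch requires either Lieb's concavity theorem or a careful complex-interpolation / three-lines argument on the analytic family $z\mapsto \Tr(\rho^{z}\sigma^{1-z})$. I would sidestep a self-contained proof of Lieb concavity by instead citing the operator H\"older inequality $|\Tr(XY)|\le \|X\|_{p}\|Y\|_{q}$ with $\tfrac1p+\tfrac1q=1$ applied to $X=\rho^s$, $Y=\sigma^{1-s}$ after splitting exponents, or simply by quoting the Audenaert--Nussbaum form of the quantum Chernoff bound machinery; but since this is stated in the paper as ``Lemma 5 of~\cite{Jain:2003a}'' and cited as a black-box fact, the expected route is that the paper does not reprove it at all. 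If a proof is wanted in-line, the three-lines-theorem approach is the shortest: extend $s$ to a strip in $\mathbb{C}$, bound $|\Tr(\rho^{it}\sigma^{1-it})|\le 1$ and $|\Tr(\rho^{1+it}\sigma^{-it})|\le 1$ on the two edges of the strip $0\le\Re z\le1$ (using that $\rho^{it},\sigma^{it}$ are unitary and $\|AB\|_1\le\|A\|_1\|B\|_\infty$), apply Hadamard's three-lines theorem to get $|\Tr(\rho^s\sigma^{1-s})|\le1$ for real $s\in[0,1]$ with the stronger log-convex interpolation bound, and then the derivative-at-the-endpoint computation finishes it.
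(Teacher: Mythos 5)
The paper never proves Fact \ref{pinsker}: it is imported verbatim as Lemma 5 of the cited reference, so there is no internal argument to compare against and the only question is whether your proof stands on its own. It does. The chain is sound: reduce to $\supp(\rho)\subseteq\supp(\sigma)$ (otherwise the right-hand side is $0$), set $Q_s\defeq\Tr(\rho^s\sigma^{1-s})$, use log-convexity of $Q_s$ on $[0,1]$ together with $Q_1=1$ and $\frac{d}{ds}\ln Q_s\big|_{s=1}=(\ln 2)\relent{\rho}{\sigma}$ to get $\Tr(\rho^{1/2}\sigma^{1/2})\ge 2^{-\frac{1}{2}\relent{\rho}{\sigma}}$, and finish with $\F(\rho,\sigma)=\|\sqrt{\rho}\sqrt{\sigma}\|_1\ge\Tr(\rho^{1/2}\sigma^{1/2})\ge 0$. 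Two remarks. First, your step (iii) is much lighter than you make it: expanding $\rho=\sum_i\lambda_i\ketbra{u_i}$ and $\sigma=\sum_j\mu_j\ketbra{v_j}$ gives $Q_s=\sum_{i,j}\lambda_i^s\mu_j^{1-s}|\braket{u_i}{v_j}|^2$, a finite nonnegative combination of exponentials $e^{s(\ln\lambda_i-\ln\mu_j)}$, hence log-convex by H\"older for sums; no Lieb concavity or three-lines theorem is needed. (Also note that the ``operator H\"older with split exponents'' phrasing does not literally apply, since $\rho^{s_1}\sigma^{1-s_1}$ and $\rho^{s_2}\sigma^{1-s_2}$ need not commute; the eigenbasis expansion is the clean fix, and it simultaneously shows $Q_s$ is smooth at the endpoint, so the tangent-line step at $s=1$ is unproblematic.) Second, for context, the derivation usually attached to this lemma in the literature is different but equally elementary: choose a measurement whose outcome distributions $P,Q$ achieve $\F(P,Q)=\F(\rho,\sigma)$ (Fuchs--Caves), use monotonicity of relative entropy under measurement, and apply Jensen's inequality in the classical case via $\log\sum_i p_i\sqrt{q_i/p_i}\ge-\frac{1}{2}\relent{P}{Q}$. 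Your interpolation route trades that measurement-achievability input for the log-convexity lemma; either gives a correct self-contained proof of the Fact.
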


\begin{fact}
\label{nearbygood}
Fix a $\gamma\in (0,1)$ and a quantum state $\omega_C$. It holds that
\begin{itemize}
\item  there exists a quantum state $\sigma_C$ such that $\omega_C\preceq \frac{1}{1-\gamma}\sigma_C$ and the eigenvalues of $\sigma_C$ are integer multiples of $\frac{\gamma}{|C|}$. 
\item there exists a quantum state $\sigma_C$ such that $\sigma_C\preceq \frac{1}{1-\gamma}\omega_C$ and the eigenvalues of $\sigma_C$ are integer multiples of $\frac{\gamma}{|C|}$.  
\end{itemize}  
\end{fact}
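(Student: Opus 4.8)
The plan is to reduce both claims to an elementary rounding problem for the eigenvalues. Diagonalize $\omega_C=\sum_c q(c)\ketbra{c}_C$ in an orthonormal eigenbasis, and write $M\defeq |C|/\gamma$, which I assume (as is the case wherever the fact is invoked, e.g.\ in Theorem~\ref{theo:paulimaxmut}) to be a positive integer; this is in fact necessary, since any state diagonal in some basis with every eigenvalue an integer multiple of $\gamma/|C|$ has the form $\frac{\gamma}{|C|}\sum_c m_c\ketbra{c}$ with nonnegative integers $m_c$, and unit trace then forces $\sum_c m_c=M$. So in both parts it suffices to produce nonnegative integers $\{m_c\}$ with $\sum_c m_c=M$ and the appropriate one-sided comparison with $q(c)$, and set $\sigma_C\defeq\frac{\gamma}{|C|}\sum_c m_c\ketbra{c}_C$. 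Since $\sigma_C$ is diagonal in the eigenbasis of $\omega_C$, every L\"owner comparison between $\sigma_C$ and $\omega_C$ reduces to a coordinatewise inequality between $\frac{\gamma}{|C|}m_c=\frac{m_c}{M}$ and $q(c)$.

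For the first statement I would round \emph{up}. Let $\ell_c\defeq\lceil (1-\gamma)Mq(c)\rceil$, so $(1-\gamma)Mq(c)\le\ell_c<(1-\gamma)Mq(c)+1$. Summing over the $|C|$ coordinates and using $\sum_c q(c)=1$ and $|C|=\gamma M$ gives $\sum_c\ell_c<(1-\gamma)M+|C|=M$, hence $\sum_c\ell_c\le M$. Place the leftover $M-\sum_c\ell_c\ge0$ on any single index to get integers $m_c\ge\ell_c$ with $\sum_c m_c=M$. Then the $c$-th eigenvalue of $\sigma_C$ is $\frac{m_c}{M}\ge(1-\gamma)q(c)$, i.e.\ $\sigma_C\succeq(1-\gamma)\omega_C$, which is exactly $\omega_C\preceq\frac{1}{1-\gamma}\sigma_C$.

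For the second statement I would round \emph{down}. Let $u_c\defeq\big\lfloor \tfrac{Mq(c)}{1-\gamma}\big\rfloor$, so $\tfrac{Mq(c)}{1-\gamma}-1<u_c\le\tfrac{Mq(c)}{1-\gamma}$. Summing gives $\sum_c u_c>\tfrac{M}{1-\gamma}-|C|=M\big(\tfrac{1}{1-\gamma}-\gamma\big)$, and since $\tfrac{1}{1-\gamma}-\gamma=\tfrac{1-\gamma+\gamma^2}{1-\gamma}\ge1$ (equivalently $\gamma^2\ge0$), we obtain $\sum_c u_c\ge M$. Now shave the $u_c$ down greedily—process the coordinates in any fixed order and set $m_c\defeq\min\{u_c,\;M-\sum_{c'<c}m_{c'}\}$—to obtain integers $0\le m_c\le u_c$ with $\sum_c m_c=M$. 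Then the $c$-th eigenvalue of $\sigma_C$ is $\frac{m_c}{M}\le\frac{u_c}{M}\le\frac{q(c)}{1-\gamma}$, i.e.\ $\sigma_C\preceq\frac{1}{1-\gamma}\omega_C$.

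There is no genuine obstacle here: the content is entirely in the two counting estimates $\sum_c\ell_c\le M$ and $\sum_c u_c\ge M$, where the total rounding slack $|C|=\gamma M$ together with the elementary bound $\tfrac{1}{1-\gamma}\ge1+\gamma$ leaves exactly enough room to allocate all $M$ units while respecting the per-coordinate constraint. The only point worth stating explicitly is the (necessary) integrality assumption on $|C|/\gamma$.
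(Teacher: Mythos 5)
Your proof is correct, but it follows a genuinely different route from the paper's. The paper rounds each eigenvalue of $\omega_C$ up (resp.\ down) to the nearest multiple of $\frac{\eta}{|C|}$ for an auxiliary parameter $\eta$, renormalizes the resulting operator $O$ by $\Tr(O)=1\pm\eta'$, and then fixes $\eta$ through the self-consistency condition $\frac{\eta}{1\pm\eta'}=\gamma$ so that the renormalized eigenvalues land on multiples of $\frac{\gamma}{|C|}$, finishing with $\omega_C\preceq(1+\eta)\sigma_C\preceq\frac{1}{1-\gamma}\sigma_C$ (and the mirror-image chain for the second item). You instead work directly with integer allocations: you observe that unit trace forces the multiples to sum to $M=|C|/\gamma$, round the scaled targets $(1-\gamma)Mq(c)$ up (resp.\ $Mq(c)/(1-\gamma)$ down), verify via the counting estimates $\sum_c\ell_c\le M$ and $\sum_c u_c\ge M$ that the slack $|C|=\gamma M$ leaves room, and redistribute the remainder, so the one-sided L\"owner comparisons hold coordinatewise with no renormalization step. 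What your approach buys is rigor on a point the paper glosses over: the paper's choice of $\eta$ is an implicit fixed point of a map involving the discontinuous quantity $\eta'(\eta)$, whose existence is asserted rather than proved, and indeed no such $\sigma_C$ can exist at all unless $|C|/\gamma$ is an integer (your trace observation), so the fact as stated silently needs exactly the integrality hypothesis you make explicit --- a hypothesis that is indeed imposed wherever the fact is invoked (e.g.\ Definition \ref{broextend}, Theorems \ref{theo:paulimaxmut} and \ref{theo:maxmut}). What the paper's formulation buys is a statement and proof that never mention $M$ and treat both items by the same one-line rounding-and-renormalizing recipe, which is shorter when one is willing to leave the choice of $\eta$ and the integrality issue implicit.
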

\begin{proof}
We prove each item as follows. Let $\eta$ be chosen below.
\begin{itemize}
\item Given the quantum state $\omega_C$, we construct an operator $O$ by increasing each eigenvalue of $\omega_C$ to the nearest multiple of $\frac{\eta}{|C|}$, and define $\sigma_C\defeq \frac{O}{\Tr(O)}$. We have $$1=\Tr(\sigma_C)\leq \Tr(O) \leq \Tr(\sigma_C)+|C|\frac{\eta}{|C|} = 1+\eta.$$ Define $\eta'\defeq \Tr(O)-1$ which implies $0\leq \eta' \leq \eta$. The eigenvalues of $\sigma_C$ are integer multiples of $\frac{\eta}{(1+\eta')|C|}$. We choose $\eta$ (which determines $\eta'$ as well) such that $\frac{\eta}{1+\eta'}=\gamma$. This ensures that $\gamma \leq \eta \leq \frac{\gamma}{1-\gamma}$. Furthermore, eigenvalues of $\sigma_C$ are integer multiples of $\frac{\gamma}{|C|}$ and 
$$\omega_C\preceq O = (1+\eta')\sigma_C \preceq (1+\eta)\sigma_C \preceq \frac{1}{1-\gamma}\sigma_C.$$
\item This follows in a similar manner. We construct an operator $O$ by decreasing each eigenvalue of $\omega_C$ to the nearest multiple of $\frac{\eta}{|C|}$, and define $\sigma_C\defeq \frac{O}{\Tr(O)}$. We have $$1= \Tr(\omega_C)\geq \Tr(O) \geq \Tr(\omega_C) - |C|\frac{\eta}{|C|} = 1-\eta.$$ Define $\eta'\defeq 1-\Tr(O)$, which implies $0\leq \eta' \leq \eta$. The eigenvalues of $\sigma_C$ are integer multiples of $\frac{\eta}{(1-\eta')|C|}$. We choose $\eta$ (which determines $\eta'$ as well) such that $\frac{\eta}{1-\eta'}=\gamma$. This ensures that $\frac{\gamma}{1+\gamma} \leq \eta \leq \gamma$. Furthermore, eigenvalues of $\sigma_C$ are integer multiples of $\frac{\gamma}{|C|}$ and 
$$\sigma_C= \frac{1}{1-\eta'}O\preceq \frac{1}{1-\eta'}\sigma_C \preceq \frac{1}{1-\eta}\sigma_C \preceq \frac{1}{1-\gamma}\sigma_C.$$
\end{itemize}
This completes the proof.
\end{proof}

\begin{fact}[\cite{AnshuJW17CC}]
\label{closestatesmeasurement}
Let $\rho_A, \sigma_A \in \cD(\cH_A)$ be quantum states. Let $\Lambda\in \cL(\cH_A)$, $0\preceq \Lambda \preceq \id_A$ be a positive semidefinite operator. Then it holds that
\begin{equation*}
|\sqrt{\tr\left(\Lambda\rho_A\right)}-\sqrt{\tr\left(\Lambda\sigma_A\right)}| \leq \Pur(\rho_A,\sigma_A).
\end{equation*}
\end{fact}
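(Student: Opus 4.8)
The plan is to reduce this operator inequality to an elementary inequality for binary probability distributions, using monotonicity of fidelity under quantum channels (Fact~\ref{fact:dataprocessing}).

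\emph{Step 1 (pass to a two-outcome measurement).} Associated to $\Lambda$, I would consider the quantum-to-classical channel $\cE:\mathcal{L}(A)\to\mathcal{L}(\mathbb{C}^2)$ given by $\cE(X)\defeq\Tr(\Lambda X)\,\ketbra{0}+\Tr((\id_A-\Lambda)X)\,\ketbra{1}$, which is CPTP precisely because $0\preceq\Lambda\preceq\id_A$. Writing $p\defeq\Tr(\Lambda\rho_A)$ and $q\defeq\Tr(\Lambda\sigma_A)$, the images $\cE(\rho_A)$ and $\cE(\sigma_A)$ are the diagonal states corresponding to the binary distributions $P=(p,1-p)$ and $Q=(q,1-q)$.

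\emph{Step 2 (data processing).} By Fact~\ref{fact:dataprocessing}, $\F(\cE(\rho_A),\cE(\sigma_A))\geq\F(\rho_A,\sigma_A)$; since $t\mapsto\sqrt{1-t^2}$ is decreasing on $[0,1]$, this yields $\Pur(\cE(\rho_A),\cE(\sigma_A))\leq\Pur(\rho_A,\sigma_A)$. Hence it suffices to prove $|\sqrt{p}-\sqrt{q}|\leq\Pur(P,Q)$.

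\emph{Step 3 (the binary inequality).} Here I would simply compute. From $\F(P,Q)=\sqrt{pq}+\sqrt{(1-p)(1-q)}$ one gets $\Pur^2(P,Q)=1-\F^2(P,Q)=p+q-2pq-2\sqrt{pq(1-p)(1-q)}$, whereas $(\sqrt p-\sqrt q)^2=p+q-2\sqrt{pq}$. Thus $(\sqrt p-\sqrt q)^2\leq\Pur^2(P,Q)$ is equivalent to $\sqrt{pq}\geq pq+\sqrt{pq(1-p)(1-q)}$, which (dividing by $\sqrt{pq}$ when $pq>0$, and trivially when $pq=0$) reduces to $1\geq\sqrt{pq}+\sqrt{(1-p)(1-q)}=\F(P,Q)$ — always true. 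Chaining the three steps gives $|\sqrt p-\sqrt q|\leq\Pur(P,Q)\leq\Pur(\rho_A,\sigma_A)$, as claimed.

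There is no genuine obstacle; the only points requiring care are the direction of the data-processing inequality (fidelity increases, and therefore purified distance decreases, under a channel) and the degenerate cases $pq=0$ or $(1-p)(1-q)=0$, which are immediate. One could also bypass the explicit computation in Step~3 by writing $P$ and $Q$ as the first-quadrant unit vectors $(\sqrt p,\sqrt{1-p})$ and $(\sqrt q,\sqrt{1-q})$: then $\F(P,Q)=\cos\theta$ and $\Pur(P,Q)=\sin\theta$ for the angle $\theta$ between them, $\sqrt p-\sqrt q$ is a difference of cosines of two angles in $[0,\pi/2]$, and a one-line trigonometric identity bounds it by $\sin\theta$ since both angles lie below $\pi/2$.
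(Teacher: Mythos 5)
Your proof is correct: the measurement channel $\cE(X)=\Tr(\Lambda X)\ketbra{0}+\Tr((\id_A-\Lambda)X)\ketbra{1}$ is CPTP, Fact~\ref{fact:dataprocessing} gives $\F(\cE(\rho_A),\cE(\sigma_A))\geq\F(\rho_A,\sigma_A)$ and hence $\Pur(\cE(\rho_A),\cE(\sigma_A))\leq\Pur(\rho_A,\sigma_A)$, and your binary computation (including the $pq=0$ case, where equality holds) correctly reduces $(\sqrt p-\sqrt q)^2\leq\Pur^2(P,Q)$ to $\F(P,Q)\leq 1$. Note that this paper does not prove the statement at all — it is imported as a fact with a citation to the earlier work — so there is no in-paper argument to compare against; your route of data-processing down to a two-outcome distribution is the standard one used in that reference.
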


\begin{fact}[\cite{AJMSY16}]
\label{canonicalfid}
Given quantum states $\rho_A, \sigma_A \in \cD(\cH_A)$ and their respective canonical purification $\ket{\rho}_{AB}, \ket{\sigma}_{AB}$ (for $B\equiv A$ and some fixed basis over the registers), 
$$\F(\rho_{AB}, \sigma_{AB})=\Tr\left(\sqrt{\rho_A}\sqrt{\sigma_A}\right)\geq 1-\sqrt{1-\F(\rho_A,\sigma_A)^2} = 1- \Pur(\rho_A, \sigma_A) .$$
\end{fact}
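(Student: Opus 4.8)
The plan is to establish the two assertions in turn: the identity $\F(\rho_{AB},\sigma_{AB}) = \Tr(\sqrt{\rho_A}\sqrt{\sigma_A})$, and then the lower bound $\Tr(\sqrt{\rho_A}\sqrt{\sigma_A}) \ge 1 - \Pur(\rho_A,\sigma_A)$. For the identity, I would write the canonical purifications as $\ket{\rho}_{AB} = (\rho_A^{1/2}\otimes\id_B)\ket{\Gamma}_{AB}$ and $\ket{\sigma}_{AB} = (\sigma_A^{1/2}\otimes\id_B)\ket{\Gamma}_{AB}$, where $\ket{\Gamma}_{AB} = \sum_i\ket{i}_A\ket{i}_B$ is the unnormalized maximally entangled vector in the fixed basis. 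Since $\rho_{AB}$ and $\sigma_{AB}$ are pure, $\F(\rho_{AB},\sigma_{AB}) = |\braket{\rho}{\sigma}_{AB}|$, and the elementary identity $\bra{\Gamma}(M\otimes\id)\ket{\Gamma} = \Tr(M)$ gives $\braket{\rho}{\sigma}_{AB} = \Tr(\rho_A^{1/2}\sigma_A^{1/2})$. This number is real and nonnegative, since $\Tr(\rho_A^{1/2}\sigma_A^{1/2}) = \Tr\big(\rho_A^{1/4}\sigma_A^{1/2}\rho_A^{1/4}\big) = \|\sigma_A^{1/4}\rho_A^{1/4}\|_2^2 \ge 0$, so the absolute value is vacuous and the identity follows.

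For the inequality, I would start from the Hilbert--Schmidt expansion $\|\sqrt{\rho_A}-\sqrt{\sigma_A}\|_2^2 = \Tr(\rho_A) + \Tr(\sigma_A) - 2\Tr(\sqrt{\rho_A}\sqrt{\sigma_A}) = 2\big(1 - \Tr(\sqrt{\rho_A}\sqrt{\sigma_A})\big)$, which rearranges to $\Tr(\sqrt{\rho_A}\sqrt{\sigma_A}) = 1 - \tfrac12\|\sqrt{\rho_A}-\sqrt{\sigma_A}\|_2^2$. I would then invoke the Powers--St{\o}rmer inequality $\|\sqrt{\rho_A}-\sqrt{\sigma_A}\|_2^2 \le \|\rho_A-\sigma_A\|_1$, followed by the Fuchs--van de Graaf inequality $\tfrac12\|\rho_A-\sigma_A\|_1 \le \sqrt{1-\F(\rho_A,\sigma_A)^2}$ (valid under the convention $\F = \|\sqrt{\cdot}\sqrt{\cdot}\|_1$ used in this paper). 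Chaining these,
\[
\Tr(\sqrt{\rho_A}\sqrt{\sigma_A}) = 1 - \tfrac12\|\sqrt{\rho_A}-\sqrt{\sigma_A}\|_2^2 \ge 1 - \tfrac12\|\rho_A-\sigma_A\|_1 \ge 1 - \sqrt{1-\F(\rho_A,\sigma_A)^2},
\]
and the right-hand side equals $1 - \Pur(\rho_A,\sigma_A)$ by the definition of purified distance, which is exactly the claimed bound.

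The genuinely routine parts are the pure-state fidelity computation, the positivity of $\Tr(\rho_A^{1/2}\sigma_A^{1/2})$, and the Hilbert--Schmidt expansion; none of these should present any difficulty. The one step that requires recognizing the right tool is the middle link: the Schatten-$2$ distance between the operator square roots, which is what the canonical purification naturally produces, must be controlled by the purified distance of the underlying states, and the clean route is Powers--St{\o}rmer composed with Fuchs--van de Graaf. If a self-contained treatment is preferred, Powers--St{\o}rmer itself can be proved directly by diagonalizing $\rho_A-\sigma_A$ and applying the scalar inequality $(\sqrt{a}-\sqrt{b})^2 \le |a-b|$ for $a,b\ge 0$ on the spectrum, but assuming the two classical inequalities the proof is immediate. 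The only bookkeeping to watch is the normalization $\Tr\rho_A = \Tr\sigma_A = 1$ and keeping the no-square-root convention for $\F$ consistent throughout.
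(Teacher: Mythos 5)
Your proof is correct. Note that the paper itself does not prove this statement --- it imports Fact \ref{canonicalfid} from \cite{AJMSY16} without argument --- so there is no in-paper proof to compare against; judged on its own, your chain is sound: the overlap computation $\braket{\rho}{\sigma}=\Tr(\sqrt{\rho_A}\sqrt{\sigma_A})$ via $\bra{\Gamma}(M\otimes\id)\ket{\Gamma}=\Tr(M)$ matches the paper's definition of canonical purification, the positivity argument makes the absolute value harmless, and the inequality follows cleanly from the Hilbert--Schmidt expansion together with Powers--St{\o}rmer and the upper Fuchs--van de Graaf bound, landing exactly on $1-\Pur(\rho_A,\sigma_A)$ under the paper's conventions. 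The only quibble is your parenthetical about making Powers--St{\o}rmer self-contained: diagonalizing $\rho_A-\sigma_A$ and applying $(\sqrt{a}-\sqrt{b})^2\leq|a-b|$ eigenvalue-by-eigenvalue only works when $\rho_A$ and $\sigma_A$ commute; the standard elementary proof instead diagonalizes $\sqrt{\rho_A}-\sqrt{\sigma_A}$ and uses $\rho_A-\sigma_A=\tfrac12\{\sqrt{\rho_A}-\sqrt{\sigma_A},\,\sqrt{\rho_A}+\sqrt{\sigma_A}\}$. Since your main argument invokes Powers--St{\o}rmer as a known result rather than relying on that sketch, this does not affect the validity of the proof.
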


\begin{fact}[\cite{AnshuDJ14}]
\label{elemeq}
Let $\rho_1, \ldots \rho_n, \theta$ be quantum states and $\{p_i\}_i$  be a probability distribution. Define $\rho\defeq \sum_ip_i\rho_i$. Then it holds that
$$\relent{\sum_ip_i\rho_i}{\theta} = \sum_i p_i\left(\relent{\rho_i}{\theta} - \relent{\rho_i}{\rho}\right).$$
\end{fact}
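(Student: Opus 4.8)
The plan is to establish the identity by unfolding the definition $\relent{\sigma}{\tau} = \Tr(\sigma\log\sigma) - \Tr(\sigma\log\tau)$ of the quantum relative entropy and exploiting linearity of the trace; the computation is short, so the real content is just the bookkeeping that makes every term well defined.

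First I would dispose of the degenerate case. If $\relent{\rho}{\theta} = +\infty$, i.e.\ $\supp(\rho)\not\subseteq\supp(\theta)$, then since $\supp(\rho) = \bigcup_{i:\,p_i>0}\supp(\rho_i)$ there is some $i$ with $p_i>0$ and $\supp(\rho_i)\not\subseteq\supp(\theta)$, so the right-hand side is $+\infty$ too and the identity is trivial. Hence I may assume $\supp(\rho)\subseteq\supp(\theta)$; then for every $i$ with $p_i>0$ we have $\supp(\rho_i)\subseteq\supp(\rho)\subseteq\supp(\theta)$, so all relative entropies on the right-hand side are finite, and each operator logarithm below is understood on the support of $\rho$ (resp.\ $\theta$) and extended by zero.

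Next I would expand the $i$-th summand on the right:
$$\relent{\rho_i}{\theta} - \relent{\rho_i}{\rho} = \Bigl(\Tr(\rho_i\log\rho_i) - \Tr(\rho_i\log\theta)\Bigr) - \Bigl(\Tr(\rho_i\log\rho_i) - \Tr(\rho_i\log\rho)\Bigr) = \Tr(\rho_i\log\rho) - \Tr(\rho_i\log\theta),$$
so the self-entropy terms $\Tr(\rho_i\log\rho_i)$ cancel termwise in $i$. Taking the convex combination over $i$, using linearity of the trace in the fixed operators $\log\rho$ and $\log\theta$ together with $\sum_i p_i\rho_i = \rho$, I get
$$\sum_i p_i\bigl(\relent{\rho_i}{\theta} - \relent{\rho_i}{\rho}\bigr) = \Tr\bigl(\rho\log\rho\bigr) - \Tr\bigl(\rho\log\theta\bigr) = \relent{\rho}{\theta},$$
which is the claimed equality.

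Since the argument is essentially a one-line computation, there is no substantive obstacle; the only point deserving care is the support bookkeeping guaranteeing that every term is finite (or that both sides are simultaneously infinite), which is handled by the observation $\supp(\rho) = \bigcup_{i:\,p_i>0}\supp(\rho_i)$.
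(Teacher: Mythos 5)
Your proof is correct: the paper states this identity as an imported fact from \cite{AnshuDJ14} without reproving it, and your direct expansion of $\relent{\cdot}{\cdot}$ with the cancellation of the $\Tr(\rho_i\log\rho_i)$ terms, plus linearity of the trace, is exactly the standard argument behind that citation. The support bookkeeping you add (both sides are simultaneously $+\infty$ when $\supp(\rho)\not\subseteq\supp(\theta)$, and $\relent{\rho_i}{\rho}$ is always finite since $p_i\rho_i\preceq\rho$) is the right way to make it airtight in the finite-dimensional setting the paper works in.
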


\begin{fact}[Hayashi-Nagaoka inequality~\cite{HayashiN03}]
\label{haynag}
Fix a $c>1$ and an integer $N>0$. Let $\{\Omega_0,\ldots \Omega_{N-1}\}_{i=0}$ be a collection of positive semi-definite operators. Define $$\Lambda_i\defeq \left(\sum_{i'}\Omega_{i'}\right)^{-\frac{1}{2}}\Omega_i\left(\sum_{i'}\Omega_{i'}\right)^{-\frac{1}{2}}$$
and $\Lambda_{-1}$ be the projector orthogonal to the support of $\sum_{i'}\Omega_{i'}$. The operators $\{\Lambda_{-1},\Lambda_0,\ldots\Lambda_{N-1}\}$ form a POVM. Then 
$$\id - \Lambda_{i}\preceq (1+c)(\id-\Omega_i) + (1+c+c^{-1})\left(\sum_{i'\neq i}\Omega_{i'}\right).$$
\end{fact}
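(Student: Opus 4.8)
The statement is the Hayashi--Nagaoka operator inequality, and the plan is to follow the standard route of \cite{HayashiN03}, namely to strip the statement down to a one-variable estimate. First I would abbreviate $S \defeq \sum_{i'}\Omega_{i'}$ and $R \defeq \sum_{i'\neq i}\Omega_{i'}=S-\Omega_i$, let $\Pi$ be the projector onto $\supp(S)$ (so $\Lambda_{-1}=\id-\Pi$ and $S^{-1/2}$ denotes the generalized inverse on $\supp(S)$), and record the bookkeeping identity
$$\id - \Lambda_i \;=\; (\id-\Pi) \;+\; S^{-1/2}R\,S^{-1/2},$$
which follows from $S^{-1/2}SS^{-1/2}=\Pi$ and $S^{-1/2}\Omega_iS^{-1/2}=\Pi-S^{-1/2}RS^{-1/2}$. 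That $\{\Lambda_{-1},\Lambda_0,\dots,\Lambda_{N-1}\}$ is a POVM is then immediate: each term is positive semidefinite, and $\Lambda_{-1}+\sum_i\Lambda_i=(\id-\Pi)+S^{-1/2}SS^{-1/2}=\id$.

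Next I would exploit that both sides of the claimed inequality are block diagonal along $\cH=\supp(S)\oplus\ker(S)$, using $\supp(\Omega_i),\supp(R)\subseteq\supp(S)$; on the $\ker(S)$ block one has $\Omega_i=R=0$, so the left side is $\id$ and the right side is $(1+c)\id$, and that block is done for free. This leaves exactly the core estimate, on $\supp(S)$ where $S$ is invertible:
$$S^{-1/2}R\,S^{-1/2}\;\preceq\;(1+c)(\id-\Omega_i)+(1+c+c^{-1})\,R .$$
As an aside, both this and the overall statement tacitly require $0\preceq\Omega_i\preceq\id$ --- which is genuinely needed, since a scalar example with $\Omega_i$ of norm larger than $1$ breaks the inequality --- and this boundedness does hold in all our applications of the Fact.

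For the core estimate the plan is to conjugate by the invertible operator $S^{1/2}$ (an order-preserving congruence) and substitute $\Omega_i=S-R$, which rewrites it as
$$R\;\preceq\;(1+c)(S-S^2)+(2+2c+c^{-1})\,S^{1/2}R\,S^{1/2};$$
then, testing against an arbitrary vector and using $R\preceq S$ together with the spectral calculus of $S$, I would reduce it to the elementary scalar inequality $\tfrac{b}{a+b}\le(1+c)(1-a)+(1+c+c^{-1})b$ for $0\le a\le 1$, $b\ge 0$, which is checked by a short case analysis on the size of $a+b$. The step I expect to be the real obstacle is precisely this last reduction: $S$ and $R$ do not commute, so one cannot literally diagonalize simultaneously, and the substance is that the genuinely negative term $(1+c)(S-S^2)$, coming from the part of the spectrum of $S$ lying above $1$, must be absorbed by the enhancement of $R$ under conjugation by $S^{1/2}$ there --- this is exactly what the slack factor $(1+c)$, rather than $1$, in front of $\id-\Omega_i$ pays for; on the complementary part of the spectrum, where $S\preceq\id$, the term $(1+c)(S-S^2)$ is already nonnegative and only routine bookkeeping remains.
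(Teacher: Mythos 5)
Your bookkeeping is fine: the identity $\id-\Lambda_i=(\id-\Pi)+S^{-1/2}RS^{-1/2}$, the POVM check, the block decomposition along $\supp(S)\oplus\ker(S)$, and the reduction to a core estimate on $\supp(S)$ are all correct, and your observation that $\Omega_i\preceq\id$ is tacitly required (and holds in every application of the Fact, where the $\Omega$'s are conjugates of a hypothesis-testing operator) is a genuine improvement on the statement as printed, which omits this hypothesis. Note also that the paper itself offers no proof — it imports the inequality from Hayashi--Nagaoka — and that in its own uses (the proofs of Theorems \ref{theo:posbased} and \ref{optchancode}) it invokes the standard constant $2+c+c^{-1}$ rather than the $1+c+c^{-1}$ displayed in the Fact, which strongly suggests the displayed constant is a typo.

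The problem is that everything after your conjugation step — i.e.\ the entire nontrivial content of the lemma — is missing, and you say so yourself. The plan of ``testing against an arbitrary vector and using the spectral calculus of $S$'' to reduce to the scalar inequality $\tfrac{b}{a+b}\le(1+c)(1-a)+(1+c+c^{-1})b$ cannot be made into a proof: that scalar inequality already holds with constants $(1,1)$, whereas the operator inequality with constants $(1,1)$ is false (on a qubit, $\Omega_i=\ketbra{0}$ and $R=\tfrac1{10}\ketbra{+}$ violate $\id-\Lambda_i\preceq(\id-\Omega_i)+R$), so any argument that genuinely factored through the scalar statement would prove too much. The constants must enter through a specifically noncommutative mechanism, and your outline supplies none — the claimed ``enhancement of $R$ under conjugation by $S^{1/2}$'' on the part of the spectrum of $S$ above $1$ is exactly what fails when $R$ and $S$ do not commute. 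The standard argument closes this gap with two tools you never invoke: write $S^{-1/2}RS^{-1/2}=(X+Y)(X+Y)^{\dagger}$ with $X=(S^{-1/2}-\Pi)R^{1/2}$, $Y=R^{1/2}$, apply the completion-of-squares bound $XY^{\dagger}+YX^{\dagger}\preceq cXX^{\dagger}+c^{-1}YY^{\dagger}$, bound $XX^{\dagger}\preceq(S^{-1/2}-\Pi)S(S^{-1/2}-\Pi)=\Pi-2S^{1/2}+S$, and then use operator monotonicity of the square root (since $\Omega_i^2\preceq\Omega_i\preceq S$, one gets $\Omega_i\preceq S^{1/2}$) to obtain $XX^{\dagger}\preceq\id-\Omega_i+R$; assembling with the kernel block gives $\id-\Lambda_i\preceq(1+c)(\id-\Omega_i)+(2+c+c^{-1})R$. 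This is the version the paper actually uses; I know of no proof (and you offer none) of the sharper $1+c+c^{-1}$ form as printed.
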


\begin{fact}[Transpose method]
\label{transposetrick}
Let $C, C'$ be registers such that $C\equiv C'$. Let $\ket{\Phi}_{CC'}$ be the maximally entangled state on $\cH_C\otimes\cH_{C'}$ with $\Phi_C=\mu_C$ and $\Phi_{C'}=\mu_{C'}$. For any unitary $U:\cH_C\rightarrow \cH_C$, there exists a unitary $U^T: \cH_{C'}\rightarrow \cH_{C'}$ such that 
$$(U\otimes \id_{C'})\ket{\Phi}_{CC'} = (\id_C\otimes U^T)\ket{\Phi}_{CC'}.$$
\end{fact}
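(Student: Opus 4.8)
The plan is to compute both sides in coordinates relative to the computational basis and observe that the map $U \mapsto U^T$ (entrywise transpose in that basis) does the job. Write $\ket{\Phi}_{CC'} = \frac{1}{\sqrt{|C|}}\sum_{c} \ket{c}_C \ket{c}_{C'}$, where $\{\ket{c}\}$ is the fixed orthonormal basis of $\cH_C$ and $\{\ket{c}\}$ the corresponding basis of $\cH_{C'}$ under the identification $C \equiv C'$. Let $U$ have matrix entries $U_{c'c} = \bra{c'} U \ket{c}$, and define $U^T : \cH_{C'} \to \cH_{C'}$ by $\bra{c'} U^T \ket{c} = U_{cc'}$, i.e. $U^T = \sum_{c,c'} U_{cc'} \ket{c'}\bra{c}$. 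The key step is then the chain of equalities
\begin{eqnarray*}
(U \otimes \id_{C'}) \ket{\Phi}_{CC'} &=& \frac{1}{\sqrt{|C|}} \sum_{c} (U\ket{c})_C \ket{c}_{C'} = \frac{1}{\sqrt{|C|}} \sum_{c,c'} U_{c'c} \ket{c'}_C \ket{c}_{C'} \\
&=& \frac{1}{\sqrt{|C|}} \sum_{c,c'} U_{c'c} \ket{c'}_C \ket{c}_{C'} = (\id_C \otimes U^T) \ket{\Phi}_{CC'},
\end{eqnarray*}
where the last equality follows by relabeling and the definition of $U^T$: applying $\id_C \otimes U^T$ to $\ket{\Phi}$ gives $\frac{1}{\sqrt{|C|}} \sum_c \ket{c}_C (U^T \ket{c})_{C'} = \frac{1}{\sqrt{|C|}} \sum_{c,c'} U_{cc'} \ket{c}_C \ket{c'}_{C'}$, which after swapping the dummy indices $c \leftrightarrow c'$ matches the left-hand side.

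It remains to check that $U^T$ is a unitary, which is immediate: the transpose of a unitary matrix is unitary, since $(U^T)^\dagger U^T = \overline{U} U^T = \overline{U U^\dagger} = \overline{\id} = \id$ (here $\overline{\cdot}$ denotes entrywise complex conjugation), and similarly $U^T (U^T)^\dagger = \id$. Hence $U^T \in \cU(C')$, completing the argument.

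I do not expect any genuine obstacle here; this is a standard "ricochet"/transpose-trick identity. The only mild subtlety is bookkeeping: one must fix the \emph{same} basis to define both the maximally entangled state $\ket{\Phi}_{CC'}$ and the transpose operation $U \mapsto U^T$, since the statement is basis-dependent. As long as the basis used in Fact \ref{transposetrick} is the one under which $\ket{\Phi}$ has the coefficient form above, the identity holds verbatim.
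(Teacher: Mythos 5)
Your proof is correct. The paper states Fact \ref{transposetrick} without proof (it is the standard ``ricochet'' identity), and your coordinate computation --- expanding $\ket{\Phi}_{CC'}$ in the fixed basis, taking $U^T$ to be the entrywise transpose in that basis, and checking unitarity via $(U^T)^\dagger U^T=\overline{UU^\dagger}=\id$ --- is exactly the argument the authors implicitly rely on. Your closing caveat is also handled automatically: since the fact only asserts \emph{existence} of $U^T$, for a general state with both marginals maximally mixed you may write $\ket{\Phi}_{CC'}=\frac{1}{\sqrt{|C|}}\sum_c\ket{e_c}_C\ket{f_c}_{C'}$ in its Schmidt bases and define the transpose with respect to the pairing $e_c\leftrightarrow f_c$, so no particular relation between $\ket{\Phi}$ and the computational basis is needed.
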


The following fact was stated in \cite[Claim 4]{AnshuJW17CC}, with proof adapted from \cite{CiganovicBR14}.

\begin{fact}
\label{nearbygooddmax}
Let $\delta\in (0,1)$. For quantum states $\sigma_A, \sigma_B, \rho_{AB}$, there exists a quantum state $\rho'_{AB}\in\ball{\delta}{\rho_{AB}}$ such that
$$\dmax{\rho'_{AB}}{\rho'_A\otimes \sigma_B} \leq \dmax{\rho_{AB}}{\sigma_A\otimes \sigma_B}+ \log\frac{3}{\delta^2}.$$
\end{fact}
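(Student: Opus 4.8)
The plan is to take $\rho'_{AB}$ to be a convex combination of $\rho_{AB}$ with the \emph{product} state $\sigma_A\otimes\sigma_B$. The point is that a small admixture of $\sigma_A\otimes\sigma_B$ barely affects the operator bound $\rho_{AB}\preceq 2^{\lambda}\sigma_A\otimes\sigma_B$, yet it forces the new $A$-marginal to dominate $\sigma_A$, which is exactly what lets one replace $\sigma_A$ by $\rho'_A$ in the second argument of the max-relative entropy. Concretely: if $\dmax{\rho_{AB}}{\sigma_A\otimes\sigma_B}=\infty$ the statement is vacuous, so set $\lambda := \dmax{\rho_{AB}}{\sigma_A\otimes\sigma_B} < \infty$, so that $\rho_{AB}\preceq 2^{\lambda}\,\sigma_A\otimes\sigma_B$ and (taking traces) $\lambda\ge 0$. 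I would then fix $\mu := \delta^2/2 \in (0,1)$ and define the normalized state
$$\rho'_{AB} := (1-\mu)\,\rho_{AB} + \mu\,\sigma_A\otimes\sigma_B .$$

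Next I would verify the two required properties. For closeness, joint concavity of the fidelity (which can be obtained from Fact~\ref{fact:dataprocessing} by tracing out an orthogonal flag register) gives, writing $\rho_{AB} = (1-\mu)\rho_{AB} + \mu\rho_{AB}$,
$$\F(\rho_{AB},\rho'_{AB}) \ge (1-\mu)\,\F(\rho_{AB},\rho_{AB}) + \mu\,\F(\rho_{AB},\sigma_A\otimes\sigma_B) \ge 1-\mu ,$$
hence $\Pur(\rho_{AB},\rho'_{AB}) = \sqrt{1-\F(\rho_{AB},\rho'_{AB})^2} \le \sqrt{2\mu-\mu^2} \le \sqrt{2\mu} = \delta$, i.e.\ $\rho'_{AB}\in\ball{\delta}{\rho_{AB}}$. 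For the max-relative entropy bound I would use two elementary operator inequalities: first, $\rho'_{AB} \preceq (1-\mu)2^{\lambda}\sigma_A\otimes\sigma_B + \mu 2^{\lambda}\sigma_A\otimes\sigma_B = 2^{\lambda}\,\sigma_A\otimes\sigma_B$, using $\rho_{AB}\preceq 2^{\lambda}\sigma_A\otimes\sigma_B$ and $2^{\lambda}\ge 1$; second, the marginal $\rho'_A = (1-\mu)\rho_A + \mu\sigma_A \succeq \mu\,\sigma_A$, i.e.\ $\sigma_A \preceq \mu^{-1}\rho'_A$. Chaining these,
$$\rho'_{AB} \preceq 2^{\lambda}\,\sigma_A\otimes\sigma_B \preceq \mu^{-1}2^{\lambda}\,\rho'_A\otimes\sigma_B ,$$
so $\dmax{\rho'_{AB}}{\rho'_A\otimes\sigma_B} \le \lambda + \log\frac{1}{\mu} = \lambda + \log\frac{2}{\delta^2} \le \dmax{\rho_{AB}}{\sigma_A\otimes\sigma_B} + \log\frac{3}{\delta^2}$, which is the claimed bound (with a slightly better constant to spare).

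I do not expect a genuine obstacle here; the only thing to get right is the choice of the state one moves to. The more ``obvious'' route --- compressing $\rho_{AB}$ by $\Pi_A\otimes\id$ for a spectral projector $\Pi_A$ of $\rho_A$ onto directions where $\rho_A$ is not much smaller than $\sigma_A$ --- does control the discarded $\rho_A$-mass, but it stumbles when $\rho_A$ and $\sigma_A$ do not commute: the compressed $\sigma_A$, namely $\Pi_A\sigma_A\Pi_A$, need not be dominated by $\Pi_A\rho_A\Pi_A$, since off-diagonal blocks would force a dimension-dependent factor. Mixing in the product state $\sigma_A\otimes\sigma_B$ sidesteps non-commutativity entirely and delivers properties (i) $\Pur(\rho_{AB},\rho'_{AB})\le\delta$, (ii) $\rho'_{AB}\preceq 2^{\lambda}\sigma_A\otimes\sigma_B$, and (iii) $\rho'_A\succeq\mu\,\sigma_A$ all at once.
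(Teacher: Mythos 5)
Your proof is correct, and it is essentially the standard argument: the paper itself does not prove Fact~\ref{nearbygooddmax} but cites \cite{AnshuJW17CC} (adapted from \cite{CiganovicBR14}), where the same mixing construction $\rho'_{AB}=(1-\mu)\rho_{AB}+\mu\,\sigma_A\otimes\sigma_B$ is used to simultaneously preserve the operator upper bound and force $\rho'_A\succeq\mu\,\sigma_A$. Your constants check out (you even obtain $\log\frac{2}{\delta^2}$, slightly better than the stated $\log\frac{3}{\delta^2}$), and the closeness step via joint concavity of fidelity (derived from Fact~\ref{fact:dataprocessing} with a flag register) is sound.
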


\begin{fact}
\label{dmaxhmin}
Let $\eps\in(0,1)$ and $\rho_{AB}$ be a quantum state. It holds that
$$\min_{\rho'\in\ball{2\eps}{\rho}}\dmax{\rho'_{AB}}{\rho'_A\otimes \frac{\id_B}{|B|}} \leq \log|B| - \hmineps{B}{A}{\eps}_{\rho} + 3\log\frac{2}{\eps}.$$
\end{fact}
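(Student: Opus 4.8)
The plan is to establish the claimed inequality by moving between the max-relative-entropy picture and the min-entropy picture through a careful choice of smoothing state. Recall that by Definition $8$, $\hmineps{B}{A}{\eps}_{\rho} = \max_{\rho'\in\ball{\eps}{\rho}} \hmin{B}{A}_{\rho'}$ and $\hmin{B}{A}_{\rho'} = -\min_{\sigma_A\in\cD(A)}\dmax{\rho'_{AB}}{\id_B\otimes\sigma_A}$. So there is a state $\rho^{*}_{AB}\in\ball{\eps}{\rho_{AB}}$ and a state $\sigma^{*}_A\in\cD(A)$ with
$$\dmax{\rho^{*}_{AB}}{\id_B\otimes\sigma^{*}_A} = -\hmineps{B}{A}{\eps}_{\rho} =: -h.$$
That is, $\rho^{*}_{AB} \preceq 2^{-h}\,\id_B\otimes\sigma^{*}_A$. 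The first step is to convert this into a statement against $\rho^{*}_A\otimes\mu_B$ rather than against $\sigma^{*}_A\otimes\id_B$. Here I would invoke Fact \ref{nearbygooddmax} with the roles $A\leftarrow B$, $B\leftarrow A$, $\sigma_A\leftarrow \mu_B$, $\sigma_B\leftarrow\sigma^{*}_A$ applied to $\rho^{*}_{AB}$: this produces a state $\rho'_{AB}\in\ball{\eps}{\rho^{*}_{AB}}$ with
$$\dmax{\rho'_{AB}}{\mu_B\otimes\rho'_A} \leq \dmax{\rho^{*}_{AB}}{\mu_B\otimes\sigma^{*}_A} + \log\frac{3}{\eps^2}.$$

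The second step is to bound the first term on the right by $\log|B| - h$. Since $\rho^{*}_{AB}\preceq 2^{-h}\,\id_B\otimes\sigma^{*}_A = 2^{-h}|B|\,\mu_B\otimes\sigma^{*}_A$, we immediately get $\dmax{\rho^{*}_{AB}}{\mu_B\otimes\sigma^{*}_A}\leq \log|B| - h = \log|B| - \hmineps{B}{A}{\eps}_\rho$. Combining with the previous display,
$$\dmax{\rho'_{AB}}{\mu_B\otimes\rho'_A} \leq \log|B| - \hmineps{B}{A}{\eps}_{\rho} + \log\frac{3}{\eps^2}.$$
Finally, by the triangle inequality for purified distance (Fact \ref{fact:trianglepurified}), $\rho'_{AB}\in\ball{2\eps}{\rho_{AB}}$ since $\Pur(\rho',\rho)\leq\Pur(\rho',\rho^{*})+\Pur(\rho^{*},\rho)\leq\eps+\eps$. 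Hence $\rho'_{AB}$ is a valid competitor in the minimization on the left side of the Fact, giving
$$\min_{\rho'\in\ball{2\eps}{\rho}}\dmax{\rho'_{AB}}{\rho'_A\otimes\mu_B} \leq \log|B| - \hmineps{B}{A}{\eps}_{\rho} + \log\frac{3}{\eps^2},$$
and $\log\frac{3}{\eps^2} = \log 3 + 2\log\frac1\eps \leq 2\log\frac2\eps + \log 3 \le 3\log\frac{2}{\eps}$ for $\eps\in(0,1)$ (checking: $3\log(2/\eps) - 2\log(1/\eps) = \log 8 + \log(1/\eps) \geq \log 8 \geq \log 3$), which yields the stated bound.

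The main obstacle I anticipate is purely bookkeeping: getting the constant $3\log\frac2\eps$ to come out exactly, and making sure the application of Fact \ref{nearbygooddmax} is set up with the registers in the correct order (the fact is stated asymmetrically in $A$ and $B$, and here we want to tensor the \emph{fixed} state on the $B$-side while keeping the reduced state $\rho'_A$ on the $A$-side). One should double-check that $\mu_B$ plays the role of "$\sigma_A$" and $\sigma^{*}_A$ the role of "$\sigma_B$" in that fact, so that the conclusion is $\dmax{\rho'_{AB}}{\rho'_A\otimes\mu_B}$ with $\mu_B$ in the second slot — which is exactly what the left-hand side of the Fact requires. Everything else is a direct chain of operator inequalities and the triangle inequality, with no real analytic difficulty.
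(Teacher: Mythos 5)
Your proof is correct, and it takes a mildly different (more self-contained) route than the paper. The paper dispenses with the statement in two lines by quoting an already-smoothed external result (Fact 12 of the reference \cite{AnshuJW17MC}), namely that for every $\sigma_A$ one has $\min_{\rho'\in\ball{2\eps}{\rho}}\dmax{\rho'_{AB}}{\rho'_A\otimes \mu_B} \leq \dmaxeps{\rho_{AB}}{\sigma_A\otimes \mu_B}{\eps} + 3\log\frac{2}{\eps}$, and then minimizes over $\sigma_A$ and invokes the definition of $\hmineps{B}{A}{\eps}_{\rho}$. You instead re-derive that content inside the paper's own toolkit: pick the optimizers $\rho^*_{AB}\in\ball{\eps}{\rho_{AB}}$ and $\sigma^*_A$ for the smooth min-entropy, apply the unsmoothed Fact \ref{nearbygooddmax} at $\rho^*_{AB}$ with $\delta=\eps$, rescale $\id_B$ to $|B|\mu_B$, and finish with the triangle inequality for purified distance to land in $\ball{2\eps}{\rho}$; your constant $\log\frac{3}{\eps^2}\leq 3\log\frac{2}{\eps}$ is even slightly tighter than required. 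What your version buys is that the proof depends only on facts stated in the paper (plus the definitions), at the cost of explicitly managing the smoothing state; what the paper's version buys is brevity, at the cost of an external citation whose proof is essentially your argument.

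One small slip worth fixing: your description of how Fact \ref{nearbygooddmax} is instantiated is inverted. To obtain the conclusion $\dmax{\rho'_{AB}}{\rho'_A\otimes \mu_B}$ (marginal kept on $A$, fixed state on $B$), you must take the fact's registers as they stand, with $\sigma_A\leftarrow\sigma^*_A$ and $\sigma_B\leftarrow\mu_B$; the swap you describe ($A\leftarrow B$, $\sigma_A\leftarrow\mu_B$, $\sigma_B\leftarrow\sigma^*_A$) would instead yield $\dmax{\rho'_{AB}}{\rho'_B\otimes\sigma^*_A}$, which is not what you need. Since the displayed inequality you actually write down and use is precisely the correct instantiation, this is a labeling error rather than a gap, and the rest of the chain goes through unchanged.
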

\begin{proof}
From \cite[Fact 12]{AnshuJW17MC} (a corollary of an argument in \cite{CiganovicBR14}), for every quantum state $\sigma_A$, it holds that
$$\min_{\rho'\in\ball{2\eps}{\rho}}\dmax{\rho'_{AB}}{\rho'_A\otimes \frac{\id_B}{|B|}} \leq \dmaxeps{\rho_{AB}}{\sigma_A\otimes \frac{\id_B}{|B|}}{\eps} + 3\log\frac{2}{\eps}.$$ Minimizing over all $\sigma_A$, we have
$$\min_{\rho'\in\ball{2\eps}{\rho}}\dmax{\rho'_{AB}}{\rho'_A\otimes \frac{\id_B}{|B|}} \leq \log|B|+  \min_{\sigma_A}\dmaxeps{\rho_{AB}}{\sigma_A\otimes \id_B}{\eps} + 3\log\frac{2}{\eps}.$$
The proof now concludes by the definition of $ \hmineps{B}{A}{\eps}_{\rho}$.
\end{proof}

\begin{fact}
\label{harmonicseries}
It holds that for $0<a< C$, $$\sum_{b=0}^{C-1}e^{\frac{2\pi iab}{C}}=0.$$
\end{fact}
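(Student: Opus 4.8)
The plan is to recognize the sum as a finite geometric series in the root of unity $\omega \defeq e^{2\pi i a/C}$ and evaluate it in closed form. First I would observe that since $a$ is an integer with $0 < a < C$, the exponent $a/C$ is not an integer, so $\omega \neq 1$; this is the one hypothesis that actually gets used. Then I would write
$$\sum_{b=0}^{C-1} e^{\frac{2\pi i a b}{C}} = \sum_{b=0}^{C-1} \omega^b = \frac{\omega^C - 1}{\omega - 1},$$
where the denominator is nonzero precisely because $\omega \neq 1$.

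The second step is to note that $\omega^C = e^{2\pi i a} = 1$ since $a \in \mathbb{Z}$, so the numerator vanishes and the whole expression equals $0$, as claimed.

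There is essentially no obstacle here: the only thing one must be slightly careful about is justifying $\omega \neq 1$ (hence that the geometric-series formula is applicable rather than giving the degenerate value $C$), and this is exactly what the constraint $0 < a < C$ — together with the implicit integrality of $a$, which is clear from the context where this fact is invoked (Lemma \ref{totmix}, with $a = c - c'$ a difference of basis indices) — provides. If one prefers to avoid the closed-form formula, an alternative is multiplication-invariance: let $T$ be the sum; then $\omega T = T$ because $b \mapsto b+1$ permutes $\{0,1,\ldots,C-1\}$ modulo $C$ and $\omega^C = 1$, so $(\omega - 1)T = 0$ and hence $T = 0$. Either route is a two-line computation.
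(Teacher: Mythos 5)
Your proposal is correct, and your "alternative" multiplication-invariance argument is exactly the proof the paper gives (shift the index by one, deduce $(1-e^{2\pi i a/C})S=0$, and use $0<a<C$ to cancel the nonzero factor). Your primary route via the closed-form geometric series $\frac{\omega^C-1}{\omega-1}$ is an equivalent two-line computation resting on the same two observations ($\omega\neq 1$ and $\omega^C=1$), so there is no substantive difference.
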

\begin{proof}
Let $S\defeq \sum_{b=0}^{C-1}e^{\frac{2\pi iab}{C}}$. We have $$e^{\frac{2\pi ia}{C}}S= \sum_{b=0}^{C-1}e^{\frac{2\pi ia(b+1)}{C}}= \sum_{b=1}^{C-1}e^{\frac{2\pi iab}{C}} + e^{\frac{2\pi iaC}{C}}=1+\sum_{b=1}^{C-1}e^{\frac{2\pi iab}{C}}=S.$$
Thus, $(1-e^{\frac{2\pi ia}{C}})S=0$. Since $e^{\frac{2\pi ia}{C}}\neq 1$ for $0<a<C$, the proof concludes.
\end{proof}

\subsection{Proofs in Section \ref{sec:smallrand}}
\label{proofs:smallrand}

\begin{proof}[\bf Proof of Theorem \ref{paulisplit}]
Observe that $V^{(j)}$ acts controlled on registers $X_1,X_2$. Thus, 
\begin{eqnarray*}
\relent{\tau}{\Psi_R\otimes\mu_C\otimes \mu_{X_1X_2}} = \frac{1}{|\cX|^2}\sum_{x_1,x_2}\relent{\frac{1}{N}\sum_jV_{f_j(x_1,x_2)}\Psi_{RC}V^{\dagger}_{f_j(x_1,x_2)}}{\Psi_{R}\otimes\mu_C}.
\end{eqnarray*}
Using Fact \ref{elemeq}, we have
\begin{eqnarray*}
&&\relent{\frac{1}{N}\sum_jV_{f_j(x_1,x_2)}\Psi_{RC}V^{\dagger}_{f_j(x_1,x_2)}}{\Psi_{R}\otimes\mu_C}\\
&&= \frac{1}{N}\sum_j\bigg(\relent{V_{f_j(x_1,x_2)}\Psi_{RC}V^{\dagger}_{f_j(x_1,x_2)}}{\Psi_{R}\otimes\mu_C} \\
&& - \relent{V_{f_j(x_1,x_2)}\Psi_{RC}V^{\dagger}_{f_j(x_1,x_2)}}{\frac{1}{N}\sum_kV_{f_k(x_1,x_2)}\Psi_{RC}V^{\dagger}_{f_k(x_1,x_2)}}\bigg)\\
&&= \frac{1}{N}\sum_j\bigg(\relent{\Psi_{RC}}{\Psi_{R}\otimes V^{\dagger}_{f_j(x_1,x_2)}\mu_CV_{f_j(x_1,x_2)}} 
\\ &&- \relent{\Psi_{RC}}{\frac{1}{N}\Psi_{RC}+\sum_{k\neq j}\frac{1}{N}V^{\dagger}_{f_j(x_1,x_2)}V_{f_k(x_1,x_2)}\Psi_{RC}V^{\dagger}_{f_k(x_1,x_2)}V_{f_j(x_1,x_2)}}\bigg)\\
&&=\frac{1}{N}\sum_j\left(\relent{\Psi_{RC}}{\Psi_{R}\otimes\mu_C} - \relent{\Psi_{RC}}{\frac{1}{N}\Psi_{RC}+\sum_{k\neq j}\frac{1}{N}V^{\dagger}_{f_j(x_1,x_2)}V_{f_k(x_1,x_2)}\Psi_{RC}V^{\dagger}_{f_k(x_1,x_2)}V_{f_j(x_1,x_2)}}\right)\\
&&=\relent{\Psi_{RC}}{\Psi_{R}\otimes\mu_C} - \frac{1}{N}\sum_j\left(\relent{\Psi_{RC}}{\frac{1}{N}\Psi_{RC}+\frac{1}{N}\sum_{k\neq j}V^{\dagger}_{f_j(x_1,x_2)}V_{f_k(x_1,x_2)}\Psi_{RC}V^{\dagger}_{f_k(x_1,x_2)}V_{f_j(x_1,x_2)}}\right).\\
\end{eqnarray*}
Thus,
\begin{eqnarray*}
&&\relent{\tau}{\Psi_R\otimes\mu_C\otimes \mu_{X_1X_2}}=\relent{\Psi_{RC}}{\Psi_{R}\otimes\mu_C}\\
&& - \frac{1}{N|\cX|^2}\sum_j\sum_{x_1,x_2}\left(\relent{\Psi_{RC}}{\frac{1}{N}\Psi_{RC}+\frac{1}{N}\sum_{k\neq j}V^{\dagger}_{f_j(x_1,x_2)}V_{f_k(x_1,x_2)}\Psi_{RC}V^{\dagger}_{f_k(x_1,x_2)}V_{f_j(x_1,x_2)}}\right)\\
&&\leq \relent{\Psi_{RC}}{\Psi_{R}\otimes\mu_C}\\
&& - \frac{1}{N}\sum_j\left(\relent{\Psi_{RC}}{\frac{1}{N}\Psi_{RC}+\frac{1}{N|\cX|^2}\sum_{k\neq j}\sum_{x_1, x_2}V^{\dagger}_{f_j(x_1,x_2)}V_{f_k(x_1,x_2)}\Psi_{RC}V^{\dagger}_{f_k(x_1,x_2)}V_{f_j(x_1,x_2)}}\right),
\end{eqnarray*}
where we have used the convexity of relative entropy. From the pairwise independent property of the family of functions, this simplifies to
\begin{eqnarray}
\label{eq:changexxprime}
&&\relent{\tau}{\Psi_R\otimes\mu_C\otimes \mu_{X_1X_2}}\leq \relent{\Psi_{RC}}{\Psi_{R}\otimes\mu_C}\nonumber\\
&& - \frac{1}{N}\sum_j\left(\relent{\Psi_{RC}}{\frac{1}{N}\Psi_{RC}+\frac{1}{N}\sum_{k\neq j}\sum_{x,x'}\frac{|\{(x_1,x_2):f_j(x_1,x_2) =x , f_k(x_1,x_2)=x'\}|}{|\cX|^2}V^{\dagger}_{x}V_{x'}\Psi_{RC}V^{\dagger}_{x'}V_{x}}\right)\nonumber\\
&& = \relent{\Psi_{RC}}{\Psi_{R}\otimes\mu_C}- \frac{1}{N}\sum_j\left(\relent{\Psi_{RC}}{\frac{1}{N}\Psi_{RC}+\frac{1}{N|\cX|^2}\sum_{k\neq j}\sum_{x,x'}V^{\dagger}_{x}V_{x'}\Psi_{RC}V^{\dagger}_{x'}V_{x}}\right).
\end{eqnarray}
 Equation \ref{eq:1design} ensures that
\begin{eqnarray*}
&&\relent{\tau}{\Psi_R\otimes\mu_C\otimes \mu_{X_1X_2}}\leq \relent{\Psi_{RC}}{\Psi_{R}\otimes\mu_C}\\
&&-\frac{1}{N}\sum_j\left(\relent{\Psi_{RC}}{\frac{1}{N}\Psi_{RC}+\frac{1}{N|\cX|^2}\sum_{k\neq j}\sum_{x,x'}V^{\dagger}_{x}V_{x'}\Psi_{RC}V^{\dagger}_{x'}V_{x}}\right)\\
&&= \relent{\Psi_{RC}}{\Psi_{R}\otimes\mu_C} - \frac{1}{N}\sum_j\left(\relent{\Psi_{RC}}{\frac{1}{N}\Psi_{RC}+\frac{1}{N}\sum_{k\neq j}\Psi_R\otimes\mu_C}\right)\\
&&= \relent{\Psi_{RC}}{\Psi_{R}\otimes\mu_C} - \relent{\Psi_{RC}}{\frac{1}{N}\Psi_{RC}+\frac{N-1}{N}\Psi_R\otimes\mu_C}.
\end{eqnarray*}
Using the inequality $\Psi_{RC}\preceq 2^k\Psi_R\otimes\mu_C$ and the operator monotonicity of logarithm \cite{carlen}, we conclude that
\begin{eqnarray*}
&&\relent{\tau}{\Psi_R\otimes\mu_C\otimes \mu_{X_1X_2}}\\
&&\leq\relent{\Psi_{RC}}{\Psi_{R}\otimes\mu_C} - \relent{\Psi_{RC}}{\Psi_R\otimes\mu_C} + \log\left(1+\frac{2^k-1}{N}\right)\\
&& = \log\left(1+\frac{2^k-1}{N}\right).
\end{eqnarray*}
This completes the proof.
\end{proof}

\begin{proof}[\bf Proof of Theorem \ref{main:theo}]
By definition of $k$, we have $\Psi_{RC_0}\preceq 2^k\Psi_R\otimes \mu_{C_0}$. This implies 
\begin{equation}
\label{biggerdmax}
\Psi_{RC_0}\otimes \ketbra{0}_{\qbit}\otimes\mu_{C_1}\preceq 2^k\Psi_R\otimes\mu_{C_0}\otimes\ketbra{0}_{\qbit}\otimes\mu_{C_1} \preceq \frac{|\brc_1|}{|C_0||C_1|}2^{k}\Psi_R\otimes \mu_{\brc_1}\preceq 2^{k+1}\Psi_R\otimes \mu_{\brc_1}.
\end{equation}
Using Fact \ref{elemeq}, we have 
\begin{eqnarray}
\label{convupbound}
&&\relent{\tau_{R\brc_1\brc_2}}{\Psi_R\otimes \mu_{\brc_1}\otimes\mu_{\brc_2}}\nonumber\\
 &&= \frac{1}{N}\sum_{\ell\in S}\bigg(\relent{U_{\ell}\left(\Psi_{RC_0}\otimes \ketbra{0}_{\qbit}\otimes\mu_{C_1}\otimes \mu_{\brc_2}\right) U_{\ell}^{\dagger}}{\Psi_R\otimes \mu_{\brc_1}\otimes\mu_{\brc_2}}\nonumber\\ 
&& \hspace{3cm}-\relent{U_{\ell}\left(\Psi_{RC_0}\otimes\ketbra{0}_{\qbit}\otimes \mu_{C_1}\otimes\mu_{\brc_2}\right) U_{\ell}^{\dagger}}{\tau_{R\brc_1\brc_2}}\bigg)\nonumber\\
&&= \frac{1}{N}\sum_{\ell\in S}\bigg(\relent{\Psi_{RC_0}\otimes\ketbra{0}_{\qbit}\otimes\mu_{C_1}\otimes \mu_{\brc_2} }{\Psi_R\otimes U^{\dagger}_{\ell}\left(\mu_{\brc_1}\otimes\mu_{\brc_2}\right)U_{\ell}}\nonumber\\
&&\hspace{3cm}-\relent{\Psi_{RC_0}\otimes\ketbra{0}_{\qbit}\otimes\mu_{C_1}\otimes \mu_{\brc_2}}{U^{\dagger}_{\ell}\tau_{R\brc_1\brc_2}U_{\ell}}\bigg)\nonumber\\
&&\leq \frac{1}{N}\sum_{\ell\in S}\bigg(\relent{\Psi_{RC_0}\otimes\ketbra{0}_{\qbit}\otimes\mu_{C_1}\otimes\mu_{\brc_2} }{\Psi_R\otimes U^{\dagger}_{\ell}\left(\mu_{\brc_1}\otimes\mu_{\brc_2}\right)U_{\ell}}\nonumber\\
&&\hspace{3cm}-\relent{\Psi_{RC_0}\otimes\ketbra{0}_{\qbit}\otimes\mu_{C_1}}{\Tr_{\brc_2}\left(U^{\dagger}_{\ell}\tau_{R\brc_1\brc_2}U_{\ell}\right)}\bigg).\nonumber\\
\end{eqnarray}
Since $\mu_{\brc_1}\otimes\mu_{\brc_2}$ is maximally mixed in the support of $U_\ell$, 
\begin{equation}
\label{maxmixinv}
 U^{\dagger}_{\ell}\left(\mu_{\brc_1}\otimes\mu_{\brc_2}\right)U_{\ell} = \mu_{\brc_1}\otimes \mu_{\brc_2}.
\end{equation}
Moreover, from Lemma \ref{Uprops} we have
\begin{eqnarray*}
&&\Tr_{\brc_2}\left(U^{\dagger}_{\ell}\tau_{R\brc_1\brc_2}U_{\ell}\right) \\
&&= \frac{1}{N}\sum_{m\in S}\Tr_{\brc_2}\left(U^{\dagger}_{\ell}U_m\left(\Psi_{RC_0}\otimes\ketbra{0}_{\qbit}\otimes\mu_{C_1}\otimes \mu_{\brc_2}\right)U^{\dagger}_mU_{\ell}\right)\\
&&= \frac{1}{N}\Psi_{RC_0}\otimes\ketbra{0}_{\qbit}\otimes\mu_{C_1} + \frac{1}{N}\sum_{m\in S, m\neq\ell}\Tr_{\brc_2}\left(U_{m-\ell}\left(\Psi_{RC_0}\otimes\ketbra{0}_{\qbit}\otimes\mu_{C_1}\otimes \mu_{\brc_2}\right)U^{\dagger}_{m-\ell}\right)\\
&&\preceq \frac{2^{k+1}}{N}\Psi_R\otimes\mu_{\brc_1}+\frac{1}{N}\sum_{m\in S, m\neq \ell}\Tr_{\brc_2}\left(U_{m-\ell}\left(\Psi_{RC_0}\otimes\ketbra{0}_{\qbit}\otimes\mu_{C_1}\otimes \mu_{\brc_2}\right)U^{\dagger}_{m-\ell}\right),
\end{eqnarray*}
where in last operator inequality, we have used Equation \ref{biggerdmax}. Using Lemma \ref{lem:Uprop2}, we conclude that
$$\Tr_{\brc_2}\left(U^{\dagger}_{\ell}\tau_{R\brc_1\brc_2}U_{\ell}\right) \preceq \frac{2^{k+1}}{N}\Psi_R\otimes\mu_{\brc_1} + \frac{N-1}{N}\Psi_{R}\otimes \mu_{\brc_1} = \left(1+\frac{2^{k+1}-1}{N}\right)\Psi_{R}\otimes \mu_{\brc_1}.$$
Since logarithm is operator monotone \cite{carlen},
\begin{eqnarray*}
&&\relent{\Psi_{RC_0}\otimes\ketbra{0}_{\qbit}\otimes\mu_{C_1}}{\Tr_{\brc_2}\left(U^{\dagger}_{\ell}\tau_{R\brc_1\brc_2}U_{\ell}\right)} \\
&&\geq \log\left(1+\frac{2^k-1}{N}\right) + \relent{\Psi_{RC_0}\otimes\ketbra{0}_{\qbit}\otimes\mu_{C_1}}{\Psi_R\otimes \mu_{\brc_1}}.
\end{eqnarray*}
 From Equations \ref{convupbound} and \ref{maxmixinv},
\begin{eqnarray*}
&&\relent{\tau_{R\brc_1\brc_2}}{\Psi_R\otimes \mu_{\brc_1}\otimes\mu_{\brc_2}}\\ 
&& \leq \frac{1}{N}\sum_{\ell\in S}\bigg(\relent{\Psi_{RC_0}\otimes \ketbra{0}_{\qbit}\otimes\mu_{C_1} \otimes\mu_{\brc_2}}{\Psi_R\otimes \mu_{\brc_1}\otimes\mu_{\brc_2}}- \relent{\Psi_{RC_0}\otimes\ketbra{0}_{\qbit}\otimes\mu_{C_1}}{\Psi_R\otimes \mu_{\brc_1}}\bigg)\\
&& \hspace{4cm}+ \log\left(1+\frac{2^{k+1}-1}{N}\right)\\
&& = \frac{1}{N}\sum_{\ell\in S}\bigg(\relent{\Psi_{RC_0}\otimes\ketbra{0}_{\qbit}\otimes\mu_{C_1}}{\Psi_R\otimes \mu_{\brc_1}}- \relent{\Psi_{RC_0}\otimes\ketbra{0}_{\qbit}\otimes\mu_{C_1}}{\Psi_R\otimes \mu_{\brc_1}}\bigg) \\
&& \hspace{4cm}+ \log\left(1+\frac{2^{k+1}-1}{N}\right)\\
&&=\log\left(1+\frac{2^{k+1}-1}{N}\right).
\end{eqnarray*}
This completes the proof.
\end{proof}

\begin{proof}[\bf Proof of Corollary \ref{cor:convexcomb}]
From Fact \ref{dmaxhmin}, we conclude that
\begin{eqnarray*}
&&\min_{\Psi'_{RC}\in\ball{\eps}{\Psi_{RC}}}\dmax{\Psi'_{RC}}{\Psi'_R\otimes \mu_{C}}\leq \log|C| - \hmineps{C}{R}{\frac{\eps}{2}}_{\Psi} + \log\frac{8}{\eps^3}=k'.
\end{eqnarray*} 
Let $\Psi'_{RC}$ be the quantum state achieving the infimum above.  Define
$$\tau'_{R\brc_1\brc_2}\defeq \frac{1}{N}\sum_{\ell\in S} U_{\ell}\left(\Psi'_{RC_0}\otimes\ketbra{0}_{\qbit}\otimes\mu_{C_1}\otimes \mu_{\brc_2}\right) U_{\ell}^{\dagger}.$$
We use Theorem \ref{main:theo} to conclude that
$$\Pur(\tau'_{R\brc_1\brc_2}, \Psi'_R\otimes \mu_{\brc_1}\otimes \mu_{\brc_2})\leq \delta.$$
By triangle inequality for purified distance, this implies that 
$$\Pur(\tau_{R\brc_1\brc_2}, \Psi_R\otimes \mu_{\brc_1}\otimes \mu_{\brc_2})\leq 2\eps+\delta.$$
This concludes the proof.
\end{proof}

\subsection{Proofs in Section \ref{sec:maxmutdec}}
\label{proofs:maxmutdec}

\begin{proof}[\bf Proof of Theorem \ref{theo:paulimaxmut}]
From Fact \ref{nearbygood}, we have that the eigenvalues of $\sigma_C$ are integer multiples of $\frac{\gamma}{|C|}$ and $$\omega_C\preceq \frac{1}{1-\gamma}\sigma_C\implies \Psi_{RC}\preceq \frac{1}{1-\gamma}2^k\Psi_R\otimes \sigma_C\preceq 2^{k+1}\Psi_R\otimes \sigma_C.$$ Consider, 
\begin{eqnarray*}
W\left(\Psi_{RC}\otimes \ketbra{0}_{E}\otimes \xi^{a:n}_D\right)W^{\dagger}&\preceq& 2^{k+1}\Psi_R\otimes W\left(\sigma_C\otimes \ketbra{0}_{E}\otimes \xi^{a:n}_D\right)W^{\dagger}\\
&\overset{(a)}\preceq& 2^{k+1}(1+15\delta)\Psi_R\otimes \sigma_{CE}\otimes \xi^{1:n}_D\\
&\preceq& 2^{k+2}\Psi_R\otimes \sigma_{CE}\otimes \xi^{1:n}_D,
\end{eqnarray*}
where $(a)$ uses Equation \ref{unitarybros}. Expand $\Psi_{RC}= \sum_{c,c'}\Psi^{(c,c')}_R\otimes \ket{c}\bra{c'}_C$. For convenience, set $b(c)\defeq q(c)|C|/\gamma$. Consider
\begin{eqnarray}
\label{embezcrossvanish}
&&\frac{1}{|X|}\sum_x V_xW\left(\Psi_{RC}\otimes \ketbra{0}_{E}\otimes \xi^{a:n}_D\right)W^{\dagger}V^{\dagger}_x\nonumber\\
&&=\frac{1}{S(a,n)}\sum_{j=a}^n\frac{1}{j}\sum_{c,c'}\Psi^{(c,c')}_R\otimes \frac{1}{|X|}\sum_xV_x\bigg(\ket{c}\bra{c'}_C\otimes \ket{j \mmod{b(c)}}\bra{j\mmod{b(c')}}_D\nonumber\\ 
&& \hspace{7cm}\otimes\ket{\lfloor j/b(c)\rfloor}\bra{\lfloor j/b(c')\rfloor}_E\bigg)V^{\dagger}_x\nonumber\\
&&\overset{(a)}=\sum_{c,c'}\Psi^{(c,c)}_R\otimes \delta_{c,c'}\sigma_{CE}\otimes\frac{1}{S(a,n)}\sum_{j=a}^n\frac{1}{j}\ket{\lfloor j/b(c)\rfloor}\bra{\lfloor j/b(c)\rfloor}_D\nonumber\\
&&= \sum_{c}\Psi^{(c,c)}_{R}\otimes \sigma_{CE}\otimes\frac{1}{S(a,n)}\sum_{j=a}^n\frac{1}{j}\ket{\lfloor j/b(c)\rfloor}\bra{\lfloor j/b(c)\rfloor}_D\nonumber\\
&&\overset{(b)}\preceq (1+15\delta)\sum_{c}\Psi^{(c,c)}_{R}\otimes \sigma_{CE}\otimes\xi^{1:n}_D\nonumber\\
&&= (1+15\delta)\Psi_{R}\otimes \sigma_{CE}\otimes \xi^{1:n}_D.
\end{eqnarray}
The equality $(a)$ uses Lemma \ref{totmix}. The operator inequality $(b)$ uses the fact that $\frac{S(1,n)}{S(a,n)}\leq (1+15\delta)$, as given in Claim \ref{embezclose}.  The rest of the argument is identical to Theorem \ref{paulisplit}, up to the factor of $(1+15\delta)$ induced by above operator inequality.   This completes the proof.
\end{proof}

\begin{proof}[\bf Proof of Theorem \ref{theo:maxmut}]
Fact \ref{nearbygood} ensures that  $$\omega_C\preceq  \frac{1}{1-\gamma}\sigma_C\implies \Psi_{RC}\preceq \frac{1}{1-\gamma}2^k\Psi_R\otimes \sigma_C\preceq 2^{k+1}\Psi_R\otimes \sigma_C.$$
Using Claim \ref{embezprop2} and Fact \ref{elemeq}, we proceed similar to Theorem \ref{main:theo}.
\begin{eqnarray}
\label{elemeqagain}
&&\relent{\tau}{\Psi_R\otimes \mu_{\brce_1}\otimes \xi^{1:n}_D\otimes \mu_{\brce_2}}\nonumber\\
&&=\frac{1}{N}\sum_{\ell\in S}\left(\relent{\tau_\ell}{\Psi_R\otimes \mu_{\brce_1}\otimes \xi^{1:n}_D\otimes \mu_{\brce_2}} - \relent{\tau_\ell}{\tau}\right)\nonumber\\
&&=\frac{1}{N}\sum_{\ell\in S}\bigg(\relent{W\left(\Psi_{RC_0}\otimes \ketbra{0}_{E_0}\otimes \xi^{a:n}_D\right)W^{\dagger}\otimes\ketbra{0}_{\qbit}\otimes \sigma_{C_1E_1}\otimes \mu_{\brce_2}}{\Psi_R\otimes \mu_{\brce_1}\otimes \xi^{1:n}_D\otimes \mu_{\brce_2}}\nonumber\\
&& \hspace{2cm} - \relent{W\left(\Psi_{RC_0}\otimes \ketbra{0}_{E_0}\otimes \xi^{a:n}_D\right)W^{\dagger}\otimes \ketbra{0}_{\qbit}\otimes\sigma_{C_1E_1}\otimes \mu_{\brce_2}}{U^{\dagger}_\ell\tau U_\ell}\bigg)\nonumber\\
&&\leq \frac{1}{N}\sum_{\ell\in S}\bigg(\relent{W\left(\Psi_{RC_0}\otimes \ketbra{0}_{E_0}\otimes \xi^{a:n}_D\right)W^{\dagger}\otimes\ketbra{0}_{\qbit}\otimes \sigma_{C_1E_1}}{\Psi_R\otimes \mu_{\brce_1}\otimes \xi^{1:n}_D}\nonumber\\
&& \hspace{2cm} - \relent{W\left(\Psi_{RC_0}\otimes \ketbra{0}_{E_0}\otimes \xi^{a:n}_D\right)W^{\dagger}\otimes\ketbra{0}_{\qbit}\otimes \sigma_{C_1E_1}}{\Tr_{\brce_2}\left(U^{\dagger}_\ell\tau U_\ell\right)}\bigg).
\end{eqnarray}
Now, we have
\begin{eqnarray}
\label{crossterms}
&&\Tr_{\brce_2}\left(U^{\dagger}_\ell\tau U_\ell\right)\nonumber\\
&&= \frac{1}{N}W\left(\Psi_{RC_0}\otimes \ketbra{0}_{E_0}\otimes \xi^{a:n}_D\right)W^{\dagger}\otimes \ketbra{0}_{\qbit}\otimes\sigma_{C_1E_1}\nonumber\\
&&+ \frac{1}{N}\sum_{m\in S, m\neq \ell}\Tr_{\brce_2}\left(U_{m-\ell}\left(W\left(\Psi_{RC_0}\otimes \ketbra{0}_{E_0}\otimes \xi^{a:n}_D\right)W^{\dagger}\otimes\ketbra{0}_{\qbit}\otimes \sigma_{C_1E_1}\otimes \mu_{\brce_2}\right)U_{m-\ell}^{\dagger}\right)\nonumber\\
&&\preceq \frac{1}{N}W\left(\Psi_{RC_0}\otimes \ketbra{0}_{E_0}\otimes \xi^{a:n}_D\right)W^{\dagger}\otimes \ketbra{0}_{\qbit}\otimes\sigma_{C_1E_1} + \frac{(1+15\delta)(N-1)}{N}\Psi_R\otimes \mu_{\brce_1}\otimes \xi^{1:n}_D.\nonumber\\
\end{eqnarray}
Moreover, using the relation $\Psi_{RC_0}\preceq 2^{k+1}\Psi_R\otimes \sigma_{C_0}$, Equation \ref{unitarybros} and Claim \ref{embezclose}, we conclude
\begin{eqnarray*}
&&W\left(\Psi_{RC_0}\otimes \ketbra{0}_{E_0}\otimes \xi^{a:n}_D\right)W^{\dagger}\otimes \ketbra{0}_{\qbit}\otimes\sigma_{C_1E_1}\\
&&\preceq 2^{k+1} \Psi_R\otimes W\left(\sigma_{C_0}\otimes \ketbra{0}_{E_0}\otimes \xi^{a:n}_D\right)W^{\dagger}\otimes \ketbra{0}_{\qbit}\otimes\sigma_{C_1E_1}\\
&&\preceq 2^{k+1}(1+15\delta) \Psi_R\otimes\sigma_{C_0E_0}\otimes\ketbra{0}_{\qbit}\otimes \sigma_{C_1E_1}\otimes \xi^{1:n}_D\\
&&\preceq 2^{k+2}(1+15\delta)\cdot \frac{|\brce|}{|\supp(\sigma_{CE})|^2} \Psi_R\otimes\mu_{\brce_1}\otimes \xi^{1:n}_D\\
&&\preceq 2^{k+2}(1+15\delta) \Psi_R\otimes\mu_{\brce_1}\otimes \xi^{1:n}_D.
\end{eqnarray*} 
Using this in Equation \ref{crossterms}, we conclude that
$$\Tr_{\brce_2}\left(U^{\dagger}_\ell\tau U_\ell\right) \preceq (1+15\delta)\cdot\left(1+\frac{2^{k+2}-1}{N}\right)\Psi_R\otimes\mu_{\brce_1}\otimes \xi^{1:n}_D.$$
Along with Equation \ref{elemeqagain}, this leads to 
$$\relent{\tau}{\Psi_R\otimes \mu_{\brce_1}\otimes \xi^{1:n}_D\otimes \mu_{\brce_2}}\leq 15\delta+\log\left(1+\frac{2^{k+2}-1}{N}\right).$$
This completes the proof.
\end{proof}

\subsection{Proofs in Section \ref{sec:posbased}}
\label{proofs:posbased}

\begin{proof}[\bf Proof of Theorem \ref{theo:posbased}]
Let $\Omega_{BC}$ be the operator such that $$\Tr\left(\Omega_{BC}\Psi_{BC}\right)\geq 1-\eps, \quad\Tr\left(\Omega_{BC}\Psi_B\otimes \mu_C\right) = 2^{-\dheps{\Psi_{BC}}{\Psi_B\otimes \mu_C}{\eps}}.$$ We have
\begin{eqnarray}
\label{bighypo}
\Tr(\Omega_{BC_0}\Psi_B\otimes \mu_{\brc_1}) &\leq& \frac{|\qbit||C|^2}{|\brc|} \Tr(\Omega_{BC_0}\Psi_B\otimes \mu_{C_0}\otimes\mu_{\qbit}\otimes \mu_{C_1}) \nonumber\\
&\leq& 2\cdot 2^{-\dheps{\Psi_{BC}}{\Psi_B\otimes \mu_C}{\eps}} = 2^{1-\dheps{\Psi_{BC}}{\Psi_B\otimes \mu_C}{\eps}}.
\end{eqnarray}
Let $\{\Lambda_{-1}, \Lambda_{\ell}\}_{\ell\in \cS}$ be the POVM as constructed in Fact \ref{haynag} using the operators $U_{\ell}\Omega_{BC_0}U^{\dagger}_{\ell}$. We have
\begin{eqnarray*}
\Tr\left((\id-\Lambda_\ell)\tau_\ell\right)&=& \Tr\left((\id-\Lambda_{\ell}) U_\ell\Psi_{BC_0}\otimes\ketbra{0}_{\qbit}\otimes\mu_{C_1}\otimes\mu_{\brc_2}U^{\dagger}_\ell\right)\\
&\overset{(a)}\leq& (1+c)\Tr\left((\id-U_{\ell}\Omega_{BC_0}U^{\dagger}_{\ell}) U_\ell\Psi_{BC_0}\otimes\ketbra{0}_{\qbit}\otimes\mu_{C_1}\otimes\mu_{\brc_2}U^{\dagger}_\ell\right)\\
&&+ (2+c+ c^{-1})\sum_{m\in \cS, m\neq \ell}\Tr\left((U_{m}\Omega_{BC_0}U^{\dagger}_{m}) U_\ell\Psi_{BC_0}\otimes\ketbra{0}_{\qbit}\otimes\mu_{C_1}\otimes\mu_{\brc_2}U^{\dagger}_\ell\right)\\
&=& (1+c)\Tr\left((\id-\Omega_{BC_0})\Psi_{BC_0}\otimes\ketbra{0}_{\qbit}\otimes\mu_{C_1}\otimes\mu_{\brc_2}\right)\\
&&+ (2+c+ c^{-1})\sum_{m\in \cS, m\neq \ell}\Tr\left(\Omega_{BC_0} U_{\ell-m}\Psi_{BC_0}\otimes\ketbra{0}_{\qbit}\otimes\mu_{C_1}\otimes\mu_{\brc_2}U^{\dagger}_{\ell-m}\right)\\
&\leq& (1+c)\eps + (2+c+ c^{-1})\sum_{m\in \cS, m\neq \ell}\Tr\left(\Omega_{BC_0} U_{\ell-m}\Psi_{BC_0}\otimes\ketbra{0}_{\qbit}\otimes\mu_{C_1}\otimes\mu_{\brc_2}U^{\dagger}_{\ell-m}\right).
\end{eqnarray*}
Above, $(a)$ uses Fact \ref{haynag}. From Lemma \ref{lem:Uprop2}, 
$$\Tr_{\brc_2}\left(U_{\ell-m}\Psi_{BC_0}\otimes\ketbra{0}_{\qbit}\otimes\mu_{C_1}\otimes\mu_{\brc_2}U^{\dagger}_{\ell-m}\right) = \Psi_B\otimes \mu_{\brc_1}.$$
Thus choosing $c= \frac{\delta}{\eps}$ and using Equation \ref{bighypo}, 
$$\Tr\left((\id-\Lambda_\ell)\tau_\ell\right) \leq \eps+\delta + \frac{4\eps}{\delta}|\cS|\Tr\left(\Omega_{BC_0} \Psi_{B}\otimes\mu_{\brc_1}\right) \leq \eps+\delta + \frac{4\eps}{\delta}|\cS|2^{1-\dheps{\Psi_{BC}}{\Psi_B\otimes \mu_C}{\eps}}\leq \eps+4\delta,$$
from the choice of $|\cS|$. This completes the proof.
\end{proof}

\begin{proof}[\bf Proof of Theorem \ref{cor:posbased}]
We will outline the main steps of the proof, which closely follow those of Theorem \ref{theo:posbased}. Let $\Omega_{BC}$ be the operator that satisfies
$$\Tr\left(\Omega_{BC}\Psi_{BC}\right)\geq 1-\eps, \quad\Tr\left(\Omega_{BC}\Psi_B\otimes \omega_C\right) = 2^{-\dheps{\Psi_{BC}}{\Psi_B\otimes \omega_C}{\eps}}.$$
From Fact \ref{nearbygood}, we have $$\sigma_C \preceq \frac{1}{1-\gamma}\omega_C \implies \Tr(\Omega_{BC}\Psi_B\otimes \sigma_C)\leq 2\cdot\Tr(\Omega_{BC}\Psi_B\otimes \omega_C)= 2^{1-\dheps{\Psi_{BC}}{\Psi_B\otimes \omega_C}{\eps}}.$$ 
Let $\{\Lambda_{-1},\Lambda_\ell\}_{\ell\in \cS}$ be the POVM constructed in Fact \ref{haynag} using the operators $\{U_{\ell}W\Omega_{BC_0}W^{\dagger}U^{\dagger}_{\ell}\}_{\ell\in \cS}$.
Rest of the calculation follows using Fact \ref{haynag}. The following claim is similar to Lemma \ref{lem:Uprop2}.
\begin{claim}
\label{embezprop2}
For any $m\in\left\{0,1,\ldots |\brce|-1\right\}$, it holds that $\Tr_{\brce_2}(\tau_m) \preceq (1+15\delta)\Psi_R\otimes \mu_{\brce_1}\otimes \xi^{1:n}_D$.
\end{claim}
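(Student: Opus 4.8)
The plan is to imitate the proof of Lemma~\ref{lem:Uprop2}, with the embezzlement estimate of Claim~\ref{embezclose} (equivalently, Equation~\ref{unitarybros}) supplying the multiplicative slack $(1+15\delta)$. The two structural facts I will lean on are that $U_m$ acts only on the pair $\brce_1\brce_2$ (hence commutes past $R$ and $D$), and that in Equation~\ref{eq:tauellprime} the $\brce_2$-component of the state feeding into $U_m$ is the maximally mixed state $\mu_{\brce_2}$.

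First I would upgrade Equation~\ref{crossvanish} to an operator identity: since $\Tr_{\brce_2}\bigl(U_m(\ket{i}\bra{i'}_{\brce_1}\otimes\mu_{\brce_2})U_m^{\dagger}\bigr)=\delta_{i,i'}\,\mu_{\brce_1}$ for $0<m<|\brce|$ and $|\brce|$ prime, linearity gives, for any operator $X_{RD\brce_1}$,
$$\Tr_{\brce_2}\bigl(U_m(X_{RD\brce_1}\otimes\mu_{\brce_2})U_m^{\dagger}\bigr)=(\Tr_{\brce_1}X_{RD\brce_1})\otimes\mu_{\brce_1}.$$
I would apply this with $X=W(\Psi_{RC_0}\otimes\ketbra{0}_{E_0}\otimes\xi^{a:n}_D)W^{\dagger}\otimes\ketbra{0}_{\qbit}\otimes\sigma_{C_1E_1}$, i.e.\ the pre-$U_m$ state of Equation~\ref{eq:tauellprime} with the $\mu_{\brce_2}$ factor removed, viewed through the embedding $\cH'_{C_0E_0}\otimes\cH'_{C_1E_1}\subseteq\cH_{\brce_1}$ as an operator on $RD\brce_1$. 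This reduces the claim to the bound $\Tr_{\brce_1}X\preceq(1+15\delta)\,\Psi_R\otimes\xi^{1:n}_D$.

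Next I would evaluate $\Tr_{\brce_1}X$: tracing out $\brce_1$ amounts to tracing out $\qbit,C_1,E_1$ (each contributing a factor $1$) together with $C_0,E_0$. Writing $W=\sum_c\ketbra{c}_{C_0}\otimes W_{b(c)}$ with $b(c):=q(c)|C|/\gamma$, and expanding $\Psi_{RC_0}=\sum_{c,c'}\Psi^{(c,c')}_R\otimes\ket{c}\bra{c'}_{C_0}$ in the eigenbasis of $\sigma_C$, the $C_0$-trace annihilates the off-diagonal ($c\neq c'$) terms, so
$$\Tr_{C_0E_0}\!\bigl(W(\Psi_{RC_0}\otimes\ketbra{0}_{E_0}\otimes\xi^{a:n}_D)W^{\dagger}\bigr)=\sum_c\Psi^{(c,c)}_R\otimes\Tr_{E_0}\!\bigl(W_{b(c)}(\ketbra{0}_{E_0}\otimes\xi^{a:n}_D)W_{b(c)}^{\dagger}\bigr).$$
Claim~\ref{embezclose} applies with $b=b(c)\le a$ and $n=a^{1/\delta}$, giving $W_{b(c)}(\xi^{a:n}_D\otimes\ketbra{0}_{E_0})W_{b(c)}^{\dagger}\preceq(1+15\delta)\,\xi^{1:n}_D\otimes\frac{1}{b(c)}\sum_e\ketbra{e}_{E_0}$; taking the partial trace over $E_0$ bounds each summand above by $(1+15\delta)\,\xi^{1:n}_D$. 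Since each $\Psi^{(c,c)}_R\succeq 0$ and $\sum_c\Psi^{(c,c)}_R=\Psi_R$, this gives $\Tr_{\brce_1}X\preceq(1+15\delta)\,\Psi_R\otimes\xi^{1:n}_D$, which together with the identity of the previous paragraph proves the claim.

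The only step needing genuine care — elementary but easy to botch — is the legitimacy of viewing $X$ as an operator on $RD\brce_1$, since Equation~\ref{crossvanish} is phrased in terms of the basis vectors $\ket{i}$ of $\brce_1$. This is fine because $\qbit$ sits in $\ket{0}$, which lies in the relabeled block defining $\brce_1$; because $\supp(\Psi_{C_0})\subseteq\supp(\sigma_{C_0})$ (Fact~\ref{nearbygood}) forces the $C_0E_0$-component of $W(\Psi_{RC_0}\otimes\ketbra{0}_{E_0}\otimes\xi^{a:n}_D)W^{\dagger}$ into $\cH'_{C_0E_0}$ via Equation~\ref{unitarybros}; and because $\sigma_{C_1E_1}$ is supported on $\cH'_{C_1E_1}$. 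The operator identity of the second paragraph also requires $m\neq 0$, exactly as in Lemma~\ref{lem:Uprop2}, and this is the only regime in which the claim is invoked.
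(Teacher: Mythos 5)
Your proof is correct and follows essentially the same route as the paper's: the cross-term--vanishing property of the $U_m$ (Lemma \ref{lem:Uprop2} / Equation \ref{crossvanish}) combined with the embezzling estimate of Claim \ref{embezclose} supplying the factor $(1+15\delta)$, with your repackaging (a general operator identity for $\Tr_{\brce_2}(U_m(\cdot\otimes\mu_{\brce_2})U_m^{\dagger})$ plus a partial-traced form of Claim \ref{embezclose}) being merely a cleaner arrangement of the paper's termwise expansion of $\xi^{a:n}_D$ in the basis $\{\ket{k}_D\}$. Your observation that $m\neq 0$ is required is consistent with the paper, whose own proof also invokes Lemma \ref{lem:Uprop2} and hence carries the same implicit restriction.
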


\begin{proof}
We expand $\Psi_{RC_0}= \sum_{c,c'}\Psi^{(c,c')}_R\otimes \ket{c}\bra{c'}_{C_0}$. For convenience, set $b(c)= q(c)|C|/\gamma$. Recall that $$W\ket{c}_{C_0}\ket{0}_{E_0}\ket{k}_D = \ket{c}_{C_0}\ket{k\mmod{b(c)}}_{E_0}\ket{\lfloor k/b(c)\rfloor}_{D}.$$ Thus, 
\begin{eqnarray*}
&&\Tr_{\brce_2}\left(U_m\left(W\left(\ket{c}\bra{c'}_{C_0}\otimes \ketbra{0}_{E_0}\otimes \xi^{a:n}_D\right)W^{\dagger}\otimes \ketbra{0}_{\qbit}\otimes\sigma_{C_1E_1}\otimes \mu_{\brce_2}\right)U_m^{\dagger}\right)\\
&&= \sum_{k=a}^n \frac{1}{k}\Tr_{\brce_2}\left(U_m\left(W\left(\ket{c}\bra{c'}_{C_0}\otimes \ketbra{0}_{E_0}\otimes \ketbra{k}_D\right)W^{\dagger}\otimes\ketbra{0}_{\qbit}\otimes \sigma_{C_1E_1}\otimes \mu_{\brce_2}\right)U_m^{\dagger}\right)\\
&&=\sum_{k=a}^n \frac{1}{k}\Tr_{\brce_2}\bigg(U_m\left(\ket{c}\bra{c'}_{C_0}\otimes \ket{k\mmod{b(c)}}\bra{k\mmod{b(c')}}_{E_0}\otimes \ketbra{0}_{\qbit}\otimes\sigma_{C_1E_1}\otimes \mu_{\brce_2}\right)U_m^{\dagger}\\
&&\hspace{4cm}\otimes \ket{\lfloor k/b(c)\rfloor}\bra{\lfloor k/b(c')\rfloor}_D\bigg)
\end{eqnarray*}
As shown in Lemma \ref{lem:Uprop2}, 
\begin{eqnarray*}
\Tr_{\brce_2}\left(U_m\left(\ket{c}\bra{c'}_{C_0}\otimes \ket{k\mmod{b(c)}}\bra{k\mmod{b(c')}}_{E_0}\otimes\ketbra{0}_{\qbit}\otimes \sigma_{C_1E_1}\otimes\mu_{\brce_2}\right)U_m^{\dagger}\right)=\mu_{\brce_1}\cdot \delta_{c,c'}.
\end{eqnarray*}
 Hence, we conclude that
\begin{eqnarray*}
&&\Tr_{\brce_2}\left(U_m\left(W\left(\ket{c}\bra{c'}_{C_0}\otimes \ketbra{0}_{E_0}\otimes \xi^{a:n}_D\right)W^{\dagger}\otimes \ketbra{0}_{\qbit}\otimes\sigma_{C_1E_1}\otimes \mu_{\brce_2}\right)U_m^{\dagger}\right)\\
&&=\mu_{\brce_1}\otimes \sum_{k=a}^n \frac{1}{k}\ket{\lfloor k/b(c)\rfloor}\bra{\lfloor k/b(c)\rfloor}_D\cdot\delta_{c,c'}\\
&& \preceq (1+15\delta)\mu_{\brce_1}\otimes \xi^{1:n}_D\cdot \delta_{c,c'}, 
\end{eqnarray*}
where in the last operator inequality, we have used an argument similar to that used in Claim \ref{embezclose}. Thus,
\begin{eqnarray*}
\Tr_{\brce_2}\tau_m &=& \sum_{c,c'}\Psi_R^{(c,c')}\otimes \Tr_{\brce_2}\left(U_m\left(W\left(\ket{c}\bra{c'}_{C_0}\otimes \ketbra{0}_{E_0}\otimes \xi^{a:n}_D\right)W^{\dagger}\otimes\ketbra{0}_{\qbit}\otimes \sigma_{C_1E_1}\otimes \mu_{\brce_2}\right)U_m^{\dagger}\right)\\
&\preceq & (1+15\delta)\sum_{c}\Psi_R^{(c,c)} \otimes \mu_{\brce_1}\otimes \xi^{1:n}_D.  
\end{eqnarray*}
This completes the proof.
\end{proof}

 We require the following inequality for $m, \ell\in S$ with $m\neq \ell$.
\begin{eqnarray*}
&&\Tr\left(U_mW\Omega_{BC_0}W^{\dagger}U^{\dagger}_m\tau_\ell\right)\\
&&=\Tr\left(U_mW\Omega_{BC_0}W^{\dagger}U^{\dagger}_mU_\ell\left(W\left(\Psi_{BC_0}\otimes \ketbra{0}_{E_0}\otimes \xi^{a:n}_D\right)W^{\dagger}\otimes \ketbra{0}_{\qbit}\otimes\sigma_{C_1E_1}\otimes\mu_{\brce_2}\right)U^{\dagger}_\ell\right)\\
&&=\Tr\left(W\Omega_{BC_0}W^{\dagger}U_{\ell-m}\left(W\left(\Psi_{BC_0}\otimes \ketbra{0}_{E_0}\otimes \xi^{a:n}_D\right)W^{\dagger}\otimes \ketbra{0}_{\qbit}\otimes\sigma_{C_1E_1}\otimes\mu_{\brce_2}\right)U^{\dagger}_{\ell-m}\right)\\
&& \overset{(a)}\leq (1+15\delta)\Tr\left(\Omega_{BC_0}W^{\dagger}\left(\Psi_R\otimes \mu_{\brce_1}\otimes\xi^{1:n}_D\right)W\right)\\
&&\leq 2\cdot\Tr\left(\Omega_{BC_0}W^{\dagger}\left(\Psi_R\otimes \mu_{\brce_1}\otimes\xi^{1:n}_D\right)W\right).
\end{eqnarray*}
Above, $(a)$ follows from Claim \ref{embezprop2}. Now, we use the fact that $$\mu_{F_1} \preceq \frac{|\qbit||\supp(\sigma_{CE})|^2}{|F|}\mu_{\qbit}\otimes\sigma_{C_0E_0}\otimes \sigma_{C_1E_1} \preceq 2\cdot\mu_{\qbit}\otimes\sigma_{C_0E_0}\otimes \sigma_{C_1E_1}.$$
Thus,
\begin{eqnarray*}
&&\Tr\left(U_mW\Omega_{BC_0}W^{\dagger}U^{\dagger}_m\tau_\ell\right)\\
&&\leq 2\cdot\Tr\left(\Omega_{BC_0}W^{\dagger}\left(\Psi_R\otimes \mu_{\brce_1}\otimes\xi^{1:n}_D\right)W\right)\\
&&\leq 4\cdot\Tr\left(\Omega_{BC_0}W^{\dagger}\left(\Psi_R\otimes\mu_{\qbit}\otimes \mu_{C_0E_0}\otimes\mu_{C_1E_1}\otimes\xi^{1:n}_D\right)W\right)\\
&&= 4\cdot\Tr\left(\Omega_{BC_0}W^{\dagger}\left(\Psi_R\otimes \mu_{C_0E_0}\otimes\xi^{1:n}_D\right)W\right).
\end{eqnarray*}
Finally, we use Claim \ref{unembezclose} to conclude that
\begin{eqnarray*}
&&\Tr\left(U_mW\Omega_{BC_0}W^{\dagger}U^{\dagger}_m\tau_\ell\right)\\
&&\leq 4\cdot\Tr\left(\Omega_{BC_0}W^{\dagger}\left(\Psi_R\otimes \mu_{C_0E_0}\otimes\xi^{1:n}_D\right)W\right)\\
&&\leq 16\cdot \Tr\left(\Omega_{BC_0}\Psi_R\otimes \sigma_{C_0}\otimes \ketbra{0}_{E_0}\otimes\xi^{1:|D|}_D\right)\\
&&= 16\cdot\Tr\left(\Omega_{BC_0}\Psi_R\otimes \sigma_{C_0}\right) \leq 2^{5-\dheps{\Psi_{BC}}{\Psi_B\otimes \omega_C}{\eps}},
\end{eqnarray*}
where we use the fact that $\Omega_{BC}$ only acts in the support of $\Psi_R\otimes\sigma_{C_0}$. This completes the proof.
\end{proof}

\subsection{Proofs in Section \ref{sec:apps}}
\label{proofs:apps}

  For the ease of presentation, we will represent the relation $\Pur(\ketbra{\psi}, \ket{\phi})\leq \eps$ between two pure states $\ket{\psi}, \ket{\phi}$ as $\ket{\psi} \overset{\eps}\approx \ket{\phi}$.

\begin{proof}[\bf Proof of Theorem \ref{optchancode}]
Without loss of generality, we can assume that $\ket{\Psi}_{AC}$ is the canonical purification of $\Psi_A$, by applying a local unitary on register $C$ which does not change the hypothesis testing relative entropy. Let $\Psi_{BC}\defeq \chnl{\Psi_{AC}}$. From Fact \ref{nearbygood}, there exists a quantum state $\sigma_C$ such that the eigenvalues of $\sigma_C$ are integer multiples of $\frac{\gamma}{|C|}$ and $$ \sigma_C\preceq \frac{1}{1-\gamma}\Psi_C\implies \Pur(\Psi_C,\sigma_C) \leq \sqrt{\gamma}.$$
Let $\ket{\sigma}_{AC}$ be the canonical purification of $\sigma_C$ and $\sigma_{BC}\defeq \chnl{\sigma_{AC}}$. Using Fact \ref{canonicalfid}, $$\sigma_A\preceq \frac{1}{1-\gamma}\Psi_A, \quad\Pur(\sigma_{AC},\Psi_{AC}) \leq  2\sqrt{\Pur(\sigma_C,\Psi_C)}\leq 2\gamma^{1/4}$$ and using Fact \ref{fact:dataprocessing}, $$\sigma_B=\chnl{\sigma_A}\preceq \frac{1}{1-\gamma}\chnl{\Psi_A}=\frac{1}{1-\gamma}\Psi_B,\quad \Pur(\sigma_{BC},\Psi_{BC})\leq \Pur(\sigma_{AC},\Psi_{AC})\leq 2\gamma^{1/4}.$$  Let $\Omega_{BC}$ be the optimum operator in the definition of $\dheps{\Psi_{BC}}{\Psi_B\otimes\Psi_C}{\eps}$. From Fact \ref{closestatesmeasurement}
\begin{equation}
\label{eq:sigmadh}
\Tr\left(\Omega_{BC}\sigma_{BC}\right)\geq 1-\eps-4\gamma^{1/4}, \quad \Tr\left(\Omega_{BC}\sigma_B\otimes\sigma_C\right)\leq \frac{1}{(1-\gamma)^2}\Tr\left(\Omega_{BC}\Psi_B\otimes\Psi_C\right)\leq 2^{2-\dheps{\Psi_{BC}}{\Psi_B\otimes\Psi_C}{\eps}}.
\end{equation}
We expand $\ket{\sigma}_{AC}= \sum_c\sqrt{q(c)}\ket{c}_A\ket{c}_C$. Let $E$ be the register and $\sigma_{CE}$ be the quantum state as obtained in Definition \ref{broextend}. It holds that $|E|\leq \frac{|A|}{\gamma}$. Consider the following purification of $\sigma_{CE}$, which is maximally entangled.
$$\ket{\sigma'}_{ACE'E}\defeq \sqrt{\frac{\gamma}{|C|}}\sum_{c,e: e\leq \frac{\gamma q(c)}{|C|}}\ket{c,e}_{AE'}\ket{c,e}_{CE}.$$
Let $a\defeq \frac{|C|}{\gamma\cdot\delta}, n= a^{\frac{1}{\delta}}$ and register $D$ satisfy $|D|= n\left(\frac{|C|}{\gamma}+1\right)$. This ensures that Claims \ref{embezclose}, \ref{unembezclose} and \ref{purembezzle} apply to register $E$. From Definition \ref{def:embez}, $\ket{\xi^{a:n}}_{D'D}$ is the canonical purification of $\xi^{a:n}_D$ with $D'\equiv D$. Given the unitary $W \defeq \sum_c \ketbra{c}\otimes W_{\frac{q(c)|C|}{\gamma}}$ from Definition \ref{unitaryflat}, let $\wal\defeq W_{AE'D'}$ and $\wbob\defeq W_{CED}$. Using Definition \ref{paulis}, we obtain $|\supp(\sigma_{CE})|^2=\left(\frac{|C|}{\gamma}\right)^2$ unitaries $V_x:\supp(\sigma_{CE})\rightarrow \supp(\sigma_{CE})$. From Claim \ref{purembezzle}, we have 
$$\left(\wal\otimes \wbob\right)\ket{\sigma}_{AC}\otimes\ket{\xi^{a:n}}_{D'D}\otimes \ket{0,0}_{E'E} \overset{5\sqrt{\delta}}\approx  \ket{\sigma'}_{ACE'E}\otimes \ket{\xi^{1:n}}_{D'D}.$$
Since $V_x$ acts in $\supp(\sigma_{CE})$, Fact \ref{transposetrick} ensures that there exists a unitary $V^T_x:\supp(\sigma'_{AE'})\rightarrow\supp(\sigma'_{AE'})$ such that $(V^T_x\otimes\id)\ket{\sigma'}_{ACE'E}= (\id\otimes V_x)\ket{\sigma'}_{ACE'E}$. Thus, we obtain 
$$\left(V^T_x\wal\otimes \wbob\right)\ket{\sigma}_{AC}\otimes\ket{\xi^{a:n}}_{D'D}\otimes \ket{0,0}_{E'E} \overset{5\sqrt{\delta}}\approx (\id\otimes V_x) \ket{\sigma'}_{ACE'E}\otimes \ket{\xi^{1:n}}_{D'D}.$$
By triangle inequality for purified distance (Fact \ref{fact:trianglepurified}), these equations lead to 
\begin{equation}
\label{eq:transposeclose}
\left((\wal)^{\dagger}V^T_x\wal\otimes \wbob\right)\ket{\sigma}_{AC}\otimes\ket{\xi^{a:n}}_{D'D}\otimes \ket{0,0}_{E'E} \overset{10\sqrt{\delta}}\approx (\id\otimes V_x\wbob)\ket{\sigma}_{AC}\otimes\ket{\xi^{a:n}}_{D'D}\otimes \ket{0,0}_{E'E}.
\end{equation}
Introduce registers $X_1, X_2$ where  $|X_1|=|X_2|=\left(\frac{|C|}{\gamma}\right)^2$. Since $\left(\frac{|C|}{\gamma}\right)^2$ is a prime power, Definition \ref{def:pairwiseunit} gives a family of functions $\{f_m: \cX\times \cX\rightarrow \cX\}$ and a collection of unitaries $$V^{(m)}= \sum_{x_1,x_2}V_{f_m(x_1,x_2)}\otimes\ketbra{x_1, x_2}_{X_1X_2}.$$
 Let $\{\Lambda_{-1},\Lambda_1,\ldots \Lambda_{2^R}\}$ be POVM as defined in Fact \ref{haynag} using the operators $\{(V^{(m)}\wbob)\Omega_{BC}(V^{(m)}\wbob)^{\dagger}\}_{m=1}^{2^R}$.

\vspace{0.1in}

\noindent{\bf Shared resources:}  Alice and Bob share the state $\ket{\sigma}_{AC}\ket{\xi^{a:n}}_{D'D}\ket{0,0}_{E'E}$. They also possess $\mu_{X_1 X_{2}}$ in shared registers $X_1X_2$. Thus, the number of qubits of shared entanglement is $$\log|C| + \log|D| \leq \log n + 2\log\frac{|C|}{\gamma} = \frac{1}{\delta}\log\frac{|C|}{\gamma\cdot\delta}+2\log\frac{|C|}{\gamma}=\frac{1}{\delta}\log\frac{|A|}{\gamma\cdot\delta}+2\log\frac{|A|}{\gamma}.$$ 
\vspace{0.1in}

\noindent{\bf Encoding:} To send the message $m\in \{1,2,\ldots 2^R\}$, Alice applies the unitary $$\sum_{x_1,x_2}(\wal)^{\dagger}V^T_{f_m(x_1,x_2)}\wal\otimes \ketbra{x_1,x_2}_{X_1X_2}$$ on her registers. She then sends the register $A$ through the channel.

\vspace{0.1in}

\noindent{\bf Decoding:} Bob applies the unitary $\wbob$ on his registers. He applies the POVM $\{\Lambda_{-1},\Lambda_1,\ldots \Lambda_{2^R}\}$ and outputs $m'$ upon obtaining the outcome $\Lambda_{m'}$.

\vspace{0.1in}

\noindent{\bf Error analysis:}  Let $\theta'_m$ be the quantum state on Bob's registers just after Alice's transmission through the channel. Define the following quantum state: 
\begin{eqnarray*}
\theta_m&\defeq& \frac{1}{|X_1|^2}\sum_{x_1,x_{2}}\ketbra{x_1,x_2}_{X_1X_2}\otimes(V_{f_m(x_1,x_2)}\wbob)\left(\sigma_{BC}\otimes\xi^{a:n}_{D}\otimes \ketbra{0}_E\right)(V_{f_m(x_1,x_2)}\wbob)^{\dagger}.
\end{eqnarray*}
From Equation \ref{eq:transposeclose}, we have $\Pur(\theta_m, \theta'_m)\leq 10\sqrt{\delta}$. Thus, from Fact \ref{closestatesmeasurement}, 
$$\Pr[M'\neq m] = \Tr\left((1-\Lambda_m)\theta'_m\right) \leq \Tr\left((1-\Lambda_m)\theta_m\right) + 2\Pur(\theta_m, \theta'_m) \leq 20\sqrt{\delta} +  \Tr\left((1-\Lambda_m)\theta_m\right).$$
Applying Fact \ref{haynag}, we conclude 
\begin{eqnarray}
\label{errboundopt}
\Pr[M'\neq m] &=& 20\sqrt{\delta} +\Tr\left((1-\Lambda_m)\theta_m\right)\nonumber\\
&\leq& 20\sqrt{\delta} +(1+c)\left(1-\Tr\left((V^{(m)}\wbob)\Omega_{BC}(V^{(m)}\wbob)^{\dagger}\theta_m\right)\right)\nonumber\\
&&+ (2+c+c^{-1})\sum_{m'\neq m}\Tr\left((V^{(m')}\wbob)\Omega_{BC}(V^{(m')}\wbob)^{\dagger}\theta_m\right)\nonumber\\
&=& 20\sqrt{\delta} +(1+c)\left(1-\Tr\left(\Omega_{BC}(V^{(m)}\wbob)^{\dagger}\theta_m(V^{(m)}\wbob)\right)\right)\nonumber\\
&&+ (2+c+c^{-1})\sum_{m'\neq m}\Tr\left((V^{(m')}\wbob)\Omega_{BC}(V^{(m')}\wbob)^{\dagger}\theta_m\right).
\end{eqnarray}
Since
\begin{eqnarray*}
&&(V^{(m)}\wbob)^{\dagger}\theta_m(V^{(m)}\wbob)\\
&&= \frac{1}{|X_1|^2}\sum_{x_1,x_2}\ketbra{x_1,x_{2}}_{X_1X_2}\otimes\left(\sigma_{BC}\otimes\xi^{a:n}_{D}\otimes \ketbra{0}_E\right),
\end{eqnarray*}
from Equation \ref{eq:sigmadh}, we have 
\begin{equation}
\label{eq:rotsigfirstterm}
\Tr\left(\Omega_{BC}(V^{(m)}\wbob)^{\dagger}\theta_m(V^{(m)}\wbob)\right) = \Tr(\Omega_{BC}\sigma_{BC})\geq 1-\eps-4\gamma^{1/4}.
\end{equation}
For $m'\neq m$, consider
\begin{eqnarray*}
&&\Tr\left((V^{(m')}\wbob)\Omega_{BC}(V^{(m')}\wbob)^{\dagger}\theta_m\right) =\frac{1}{|X_1|^2}\sum_{x_1, x_{2}}\\
&&\Tr\left(\Omega_{BC}(V_{f_{m'}(x_1,x_2)}\wbob)^{\dagger}(V_{f_{m}(x_1,x_2)}\wbob)\left(\sigma_{BC}\otimes\xi^{a:n}_{D}\otimes \ketbra{0}_E\right)(V_{f_{m}(x_1,x_2)}\wbob)^{\dagger}(V_{f_{m'}(x_1,x_2)}\wbob)\right)\\
&&= \frac{1}{|\cX|}\sum_{x}\Tr\bigg(\Omega_{BC}(V_{x}\wbob)^{\dagger}\bigg(\frac{1}{|\cX|}\sum_{x'}V_{x'}\wbob\left(\sigma_{BC}\otimes\xi^{a:n}_{D}\otimes  \ketbra{0}_E\right)(V_{x'}\wbob)^{\dagger}\bigg)(V_{x}\wbob)\bigg),\\
\end{eqnarray*}
where we have used Definition \ref{def:pairwiseunit} to introduce variables $x,x'$ in a manner similar to Equation \ref{eq:changexxprime}. From Equation \ref{embezcrossvanish}, we have 
$$\frac{1}{|\cX|}\sum_{x'}V_{x'}\wbob\left(\sigma_{BC}\otimes\xi^{a:n}_{D}\otimes  \ketbra{0}_E\right)(V_{x'}\wbob)^{\dagger}\preceq (1+15\delta)\sigma_B\otimes \sigma_{CE}\otimes \xi^{1:n}.$$
Thus, 
\begin{eqnarray*}
&&\Tr\left((V^{(m')}\wbob)\Omega_{BC}(V^{(m')}\wbob)^{\dagger}\theta_m\right)\\
&&\leq \frac{(1+15\delta)}{|\cX|}\sum_{x}\Tr\bigg(\Omega_{BC}(V_{x}\wbob)^{\dagger}\bigg(\sigma_B\otimes \sigma_{CE}\otimes \xi^{1:n}\bigg)(V_{x}\wbob)\bigg)\\
&&\overset{(a)}=(1+15\delta)\Tr\bigg(\Omega_{BC}(\wbob)^{\dagger}\bigg(\sigma_B\otimes \sigma_{CE}\otimes \xi^{1:n}\bigg)(\wbob)\bigg)\\
&& \overset{(b)}\leq 4(1+15\delta)\Tr\bigg(\Omega_{BC}\bigg(\sigma_B\otimes \sigma_{C}\otimes \xi^{1:|D|}\otimes \ketbra{0}_E\bigg)\bigg)\\
&& \leq 8\cdot  \Tr\bigg(\Omega_{BC}\sigma_B\otimes \sigma_{C}\bigg)\overset{(c)}\leq 2^{5- \dheps{\Psi_{BC}}{\Psi_B\otimes\Psi_C}{\eps}}.
\end{eqnarray*}
where $(a)$ uses the fact that $V_x^{\dagger}\sigma_{CE}V_x=\sigma_{CE}$, $(b)$ uses Claim \ref{unembezclose} and $(c)$ uses Equation \ref{eq:sigmadh}. Using it with Equation \ref{eq:rotsigfirstterm} and Equation \ref{errboundopt}, we conclude
$$\Pr[M'\neq m] \leq 20\sqrt{\delta} + (1+c)(\eps+4\gamma^{1/4})+ \frac{4}{c}2^{R + 5- \dheps{\Psi_{BC}}{\Psi_B\otimes\Psi_C}{\eps}}.$$
Setting $c=\frac{\delta'}{\eps+4\gamma^{1/4}}$ and from the choice of $R$, the proof concludes. 
\end{proof}

\end{document}